\newtheorem{thm}{Theorem}[section]
\newtheorem{lemma}[thm]{Lemma}
\newtheorem{prop}[thm]{Proposition}
\newtheorem{cor}[thm]{Corollary}
\newtheorem{rem}[thm]{Remark}
\numberwithin{equation}{section}
\def\bn{\textbf{n}}
\def\rist{\upharpoonright}
\def\Ga{\Gamma}
\def\Dom{\mbox{Dom}}
\def\Gc{\breve{G}}
\def\Gt{M}
\def\fr{\varnothing}
\def\tv{|\!|\!|}
\def\lap{\Delta}
\def\k{\textbf{k}}
\def\x{\textbf{x}}
\def\y{\textbf{y}}
\def\C{\mathbb{C}}
\def\R{\mathbb{R}}
\def\N{\mathbb{N}}
\def\Z{\mathbb{Z}}
\def\al{\alpha}
\def\ga{\gamma}
\def\eps{\varepsilon}
\def\Fou{\mathfrak{F}}
\def\lam{\lambda}
\def\si{\sigma}
\def\te{\theta}
\def\Om{\Omega}
\def\de{\partial}
\def\kp{\textbf{k}_\parallel}
\def\xp{\textbf{x}_\parallel}
\def\yp{\textbf{y}_\parallel}
\def\kp{\textbf{k}_\parallel}
\def\geqs{\geqslant}
\def\leqs{\leqslant}
\def\lec{\leqs_c}
\def\fp{f_{\parallel}}
\def\l{\left}
\def\r{\right}
\def\la{\langle}
\def\ra{\rangle}
\def\ov{\overline}
\def\DD{\mathcal{D}}
\def\Ma{N}
\def\dd{\displaystyle}
\def\cemb{\hookrightarrow}
\def\q{\textbf{q}}
\def\Bou{\mathfrak{B}}
\def\SV{\mathfrak{S}}
\def\Com{\SV_{\infty}}
\def\sF{\supp F}
\def\supp{\mbox{supp}}
\def\ran{\mbox{ran}}
\def\AA{A}
\def\Rf{R_{\varnothing}}
\DeclareMathOperator*{\slim}{s-lim}
\def\Afree{A_{\fr}}
\def\Hu{\AA^{(1)}}
\def\Hp{\AA^{(\parallel)}}
\def\Ru{R^{(1)}}
\def\Rp{R^{(\parallel)}}
\def\siu{{\si_{1\!}}}
\def\sip{{\si_{\parallel\!}}}
\def\su{{s_{1\!}}}
\def\sp{{s_{\parallel\!}}}
\def\tu{{t_{1\!}}}
\def\tp{{t_{\parallel\!}}}
\def\ru{{r_{1\!}}}
\def\rp{{r_{\parallel\!}}}
\def\lapp{\lap_{\parallel}}
\def\sgn{\mbox{sgn}}
\def\Te{\Theta}
\def\loc{loc}
\def\uu{u_1}
\def\up{u_{\parallel}}
\def\Foup{\mathfrak{F}_{\parallel}}
\def\HH{\mathcal{H}}
\def\Iu{I_{\su}}
\def\Ip{I_{\sp}}
\def\Iup{I_{\su,\sp}}
\def\SI{\mathcal{I}_{\te,\su}}
\def\SJ{\mathcal{J}_{\te,\su}}
\def\SIm{\SI^{(<)}}
\def\SIp{\SI^{(>)}}
\def\fu{\varphi^{(1)}}
\def\fp{\varphi^{(\parallel)}}
\def\ff{\varphi}
\def\KK{K}
\def\h{\textbf{h}}
\def\hp{\textbf{h}_\parallel}
\def\etu{\eta_1}
\def\etp{\eta_{\parallel}}
\def\uno{\mathsf 1}
\def\zero{\mathsf 0}
\def\dag{{*}}
\renewcommand{\epsilon}{\varepsilon}
\renewcommand{\leq}{\leqslant}
\renewcommand{\geq}{\geqslant}
\renewcommand{\Im}{\operatorname{Im}\,}
\renewcommand{\Re}{\operatorname{Re}\,}
\title[]{Scattering from local deformations of a semitransparent plane}
\author{Claudio Cacciapuoti}
\address{DiSAT, Sezione di Matematica, Universit\`a dell'Insubria, via Valleggio 11, I-22100
Como, Italy}
\email{claudio.cacciapuoti@unisubria.it}
\author{Davide Fermi}
\address{Dipartimento di Matematica, Universit\`a di Milano, Via Cesare Saldini 50, I-20133 Milano, Italy }
\email{davide.fermi@unimi.it}
\author{Andrea Posilicano}
\address{DiSAT, Sezione di Matematica, Universit\`a dell'Insubria, via Valleggio 11, I-22100
Como, Italy}
\email{andrea.posilicano@unisubria.it}
\date{}
\begin{document}

\begin{abstract}
We study scattering for the couple $(A_{F},A_{0})$ of Schr\"odinger operators in $L^2(\R^3)$ formally defined as 
$A_0 = -\Delta + \alpha\, \delta_{\pi_0}$ and $A_F = -\Delta + \alpha\, \delta_{\pi_F}$, $\alpha >0$, where $\delta_{\pi_F}$ is the Dirac $\delta$-distribution supported on the deformed plane given by the graph of the compactly supported, Lipschitz continuous function $F:\R^{2}\to\R$ and $\pi_{0}$ is the undeformed plane corresponding to the choice $F\equiv 0$. We provide a Limiting Absorption Principle, show asymptotic completeness of the wave operators and give a representation formula for the corresponding Scattering Matrix $S_{F}(\lambda)$. Moreover we show that, as $F\to 0$, 
$\|S_{F}(\lambda)-\uno\|^{2}_{\Bou(L^{2}({\mathbb S}^{2}))}={\mathcal O}\!\left(\int_{\R^{2}}d\x|F(\x)|^{\gamma}\right)$, $0<\gamma<1$. 

We correct a minor mistake in the computation of the scattering matrix, occurring in the published version of this paper (see J. Math. Anal. Appl. {\bf 473}(1) (2019), pp. 215--257).  The mistake was  in Section \ref{s:sm}, and affected the statement of Corollary \ref{coroll}, specifically, Eq. \eqref{SM_A}. Regrettably the formula for $S_F$ in the Corrigendum J. Math. Anal. Appl. {\bf 482}(1) (2020), 123554, still contains a misprint, the correct expression is the one given here. 
\end{abstract}

\maketitle

\begin{footnotesize}
 \emph{Keywords:}  Scattering theory, point interactions supported by unbounded hypersurfaces, Kre\v{\i}n's resolvent  formulae. 
 
 \emph{MSC 2010:}  
 	35P25;	
 	47A40;	
	35J25.	  
 \end{footnotesize}

\vspace{1cm}

\section{Introduction\label{s:intro}}

In this paper we are concerned with self-adjoint Schr\"odinger operators in $L^2(\R^3)$ formally defined as 
\begin{equation}\label{AF}
A_F = -\Delta + \alpha\, \delta_{\pi_F}\,, \qquad \alpha >0\,,
\end{equation}
where $\delta_{\pi_F}$ is the Dirac $\delta$-distribution supported on the surface $\pi_F := \{\x \equiv(x^{1},\xp) \in \R^3 \,|\, x^1 = F(\xp)\}$, where  $F:\R^2\to \R$ is such that:\\ 

$F\in C^{0,1}_{0}(\R^2)$, i.e., $F$ has compact support and is Lipschitz continuous. \\

For the  rigorous definitions of these self-adjoint operators we refer to Section \ref{s:res} below; here we only notice that the functions in their self-adjointness domain have to satisfy semitransparent boundary conditions  of the kind $[\de_\bn f]_{\pi_F} = \al\,f\!\rist_{\pi_F}$, where $[\de_\bn f]_{\pi_F}$  denotes the jump of the normal derivative across the surface $\pi_{F}$.
\par The spectral properties of these operators, mainly presence and estimates on the number of their bound states below the essential spectrum (whenever $\alpha<0$), have been studied in many papers (see, e.g., \cite{CK11, EK03} and references therein for the 2D case, where the support of the perturbation is a deformed line;  \cite{EKL18} for the 3D case; and  the very recent survey \cite{EXNER}). Much less is known regarding their scattering theory: the 2D case  has been studied in the paper \cite{EK05}, where existence and completeness of the wave operators have been provided (see also \cite{EK2018} for similar results in 3D in a setting in which the singular potential is supported on a curve) and the scattering matrix was studied (whenever $\alpha<0$) for the negative part $(-\alpha^{2}/4,0)$ of the spectrum;  in this case the scattering problem is essentially one-dimensional in the sense that it is described by a $2\times 2$ matrix of reflection and transmission amplitudes. Here we are instead concerned with the more involved 3D case and we provide a comprehensive scattering analysis for the case of repulsive interaction $\alpha>0$, from a Limiting Absorption Principle, through existence and asymptotic completeness of the wave operators, to a representation formula for the scattering matrix on the half line $(0,+\infty)$ minus the (possibly empty) discrete subset of embedded eigenvalues. Since the singular potential $\alpha\,\delta_{\pi_{F}}$ does not vanish at infinity, one cannot expect existence of the wave operators for the scattering couple $( A_{F}, \Afree)$ (here and below $\Afree$ denotes the free Schr\"odinger operator) and so, as for the 2D case in \cite{EK05}, we consider here the scattering couple $(A_{F}, A_{0})$, where $A_{0}$ formally corresponds to $A_0 = -\Delta + \alpha\, \delta_{\pi_0}$ and $\delta_{\pi_0}$ is the Dirac $\delta$-distribution supported on the plane $\pi_0 := \{\x \equiv(x^{1},\xp)\in \R^3 \,|\, x^1 = 0\}$, i.e. $A_{0}$ corresponds to $A_{F}$ with $F\equiv 0$. Whereas we follow the same strategy as in \cite{MPS2} and \cite{MP}, due to $A_{0}\not=A_{\fr}$ and to the unboundedness of the obstacles, here, with respect to the results provided in \cite[Subsection 6.4]{MPS2} and \cite[Subsection 5.4]{MP}, more work is needed and the proofs are, for the most, different and more elaborate.\par  
After supplying some preliminary material in Section \ref{s:prelim}, we introduce in Section \ref{s:res} the rigorous definition of the self-adjoint operators $A_{F}$ by providing their resolvent through a Kre\u{\i}n's type formula expressed in terms of the free resolvent (i.e. the resolvent of $\Afree$) (see Theorem \ref{thm:RF}). Let us remark here that the operator $A_{F}$ could be equivalently defined by quadratic form methods (see, e.g., \cite{BEL14, BEKS,ER16} and references therein); however, in order to study the Limiting Absorption Principle (LAP for short) and the Scattering Matrix, one needs a convenient resolvent formula. One more important remark about our use of resolvent formulae is the following: proceeding as in \cite{EK05}, one could try to provide a resolvent formula of $A_{F}$ expressed directly in terms of $A_{0}$; however this would imply the use of trace (evaluation) operators in the operator domain of $A_{0}$, and these are less well-behaved than in the Sobolev space $H^{2}(\R^{3})$, the self-adjointness domain of $A_{\fr}$ (in particular is not clear what should be the correct trace space). Moreover, such an approach can lead, even in the case of a smooth deformation $F$, to singular perturbations of $A_{0}$ supported on not Lipschitz sets. Therefore we prefer to work with the difference of the two Kre\u{\i}n's formulae, one for $A_{F}$ and the other one for $A_{0}$, both expressed in terms of the free resolvent. This suffices, in Section \ref{s:LAP}, for the proof of LAP for $A_{F}$ (see Theorem \ref{ThmLAPF}), obtained  after providing LAP for the operator $A_{0}$ (see Proposition \ref{PropLAP0}) and by building on some abstract results by Renger (see \cite{Ren} and \cite{Ren2}).   \par 
Then, in Section \ref{s:altres}, by a careful analysis of the difference of the two resolvent for $A_{F}$ and $A_{0}$, we obtain a Kre\u{\i}n's type formula for the resolvent of $A_{F}$ which contains only the resolvent of $A_{0}$ and trace maps on the (compact) support of the deformation $F$ (see Theorem \ref{TS}); 
such  resolvent formula resembles the one used in \cite{EK05}.\par
LAP proved in Section \ref{s:LAP} and the formula for the resolvents difference from Section \ref{s:altres}  are the starting point of our analysis of the scattering theory. In Section \ref{s:wo}, using the same approach as in  Section \ref{s:altres},  we prove existence and asymptotic completeness of the wave operators associated to the scattering couple $(A_{F},A_{0})$ (see Theorem \ref{CorWd22}).  In Section \ref{s:sm}, following the same kind of reasoning as in \cite[Sec. 4]{MP} (see also \cite[Rem. 5.7]{MPS}), using this latter resolvent formula, Birman-Yafaev stationary scattering theory and the Kato-Birman invariance principle, we obtain a representation formula for the Scattering Matrix $S_{F}(\lambda)$ of the scattering couple $(A_{F}, A_{0})$ for any energy $\lambda\in(0,+\infty)\backslash\sigma_p^+(A_F)$, where $\sigma_p^+(A_F)$ is the (possibly empty) discrete set of embedded eigenvalues (see Theorem \ref{thmR} and Corollary \ref{coroll}). Finally, see Theorem \ref{teo7}, using such a representation formula, we provide an estimate, in operator norm, on the difference $S_{F}(\lambda)-\uno$; in particular, as the deformation $F\to 0$, one gets 
$$
\|S_{F}(\lambda)-\uno\|^{2}_{\Bou(L^{2}({\mathbb S}^{2}))}={\mathcal O}\!\left(\int_{\R^{2}}d\xp\;|F(\xp)|^{\gamma}\right)\,,\qquad 0<\gamma<1\,.
$$  

The paper is concluded with a technical appendix  containing the proof of the LAP for the self-adjoint operator corresponding to the Laplacian with a $\delta$-interaction in one dimension.

\section{Notation and preliminaries \label{s:prelim}}
\noindent
$\bullet$ We denote by $\lap$ the distributional Laplacian on $\DD'(\R^3)$. \\
$\bullet$ We denote by $ \Afree  $ the \emph{free (positive) Laplacian} on $\R^3$; this is the self-adjoint operator
\begin{equation}\label{Lapfree}
 \Afree  : H^2(\R^3) \subset L^2(\R^3) \to L^2(\R^3)\,,\qquad \Afree f:=-\Delta f\;,
\end{equation}
with purely absolutely continuous spectrum $\si( \Afree ) =\si_{ac}( \Afree ) = [0,+\infty)$. The corresponding \emph{free resolvent} operator is 
\begin{equation}\label{Rfree}
\Rf(z) := ( \Afree  - z)^{-1} : L^2(\R^3) \to H^2(\R^3) \qquad z \in \C \backslash [0,+\infty)\,.
\end{equation}
\vskip5pt\noindent
$\bullet$
For any pair $X,Y$ of Banach spaces, we denote the Banach space of bounded linear operators from $X$ to $Y$ with $\Bou(X,Y)$ and set $\Bou(X) = \Bou(X,X)$; the two-sided ideal of compact operators is $\Com(X,Y)$ ($\Com(X) = \Com(X,X)$).
\vskip5pt\noindent
$\bullet$
For any complex  number $z\in\C\backslash[0,+\infty) $ we define its  square root with the branch cut such that $\Im\! \sqrt z >0$. 
\vskip5pt\noindent
$\bullet$
In order to avoid the appearance of cumbersome expressions we shall use the following short-hand notation, for any pair of real numbers $B_1,B_2 \in \R$:
\[
B_1 \lec B_2 \quad \Leftrightarrow \quad B_1 \leqs \mbox{const.}\;B_2\,, \quad\! \mbox{for some finite constant $\mbox{const.} > 0$}\;. 
\]

\subsection{Test functions and distributions\label{ss:testdist}}
For $k \in \{1,2,3,...\}$, we consider the vector space $\DD(\R^k)$ of smooth, compactly supported functions $f:\R^k\to\C$ and equip this space with the well known inductive limit topology. The topological dual space $\DD'(\R^k)$ is the space of Schwartz distributions on $\R^k$. Unless otherwise stated, all derivatives considered in the sequel are to be understood in the sense of distributions.

\subsection{Sobolev spaces\label{subSob}}
Let us now introduce Sobolev spaces on $\R^k$  of $L^2$ type, of both integer and fractional order.	

The Sobolev space of positive integer order $n \in \N$ is the Hilbert space
\begin{equation*}
H^n(\R^k) := \{ f \in \DD'(\R^k)~|~\de^{\al}\! f \in L^2(\R^k) ~~ \mbox{for all $\al \in \N^k$, $|\al| \leqs n$}\}~,
\end{equation*}
endowed with the standard inner product
\begin{equation*}
\la f | g \ra_{H^n(\R^k)} := \sum_{\al \in \N^k,\,|\al| \leqs n}\! \la \de^{\al}\! f| \de^{\al}\! g \ra_{L^2(\R^k)} 
\end{equation*}
which induces the norm $\|f\|_{H^n(\R^k)} :=\sqrt{\la f|f\ra_{H^n(\R^k)}\!}$\,. Of course, $H^0(\R^k) = L^2(\R^k)$. 

For any $r \in [0,+\infty)\backslash \N$, let us denote with $[r]$ its integer part and put $\rho := r - [r]$. The Sobolev space of fractional order $r$ is
\begin{equation*}\begin{aligned}
H^r(\R^k) := &\l\{\,f\in\mathcal{D}'(\R^k) ~\Big|~ \de^{\al}\! f \in L^2(\R^k) ~~
\mbox{for $\al\in\N^k$, $|\al|\leqs [r]$}\,,~~\mbox{and}\phantom{\int_{\R^k}} \r. \\
& \l.\ \int_{\R^k\times\R^k}\hspace{-0.2cm}d\x\,d\y\;
{|\de^{\al}\!f(\x) - \de^{\al}\!f(\y)|^2 \over |\x-\y|^{k + 2\rho}}\;<+\infty
~~ \mbox{for $\al \in \N^k$, $|\al| = [r]$} \,\r\}\,;
\end{aligned}\end{equation*}
this is also a complex Hilbert space with the inner product
\begin{equation}\begin{aligned}
&\la f|g \ra_{H^{r}(\R^k)} :=  \la f|g \ra_{H^{[r]}(\R^k)} \\
+&
\sum_{\al\in\N^k, |\al| = [r]} \int_{\R^k\times\R^k}\hspace{-0.2cm}
d\x\,d\y\;{\ov{(\de^{\al}\!f(\x) - \de^{\al}\!f(\y))}\,(\de^{\al}\!g(\x)-\de^{\al}\!g(\y))
\over |\x - \y|^{k + 2 \rho}} ~, \label{innerfra}  
\end{aligned}
\end{equation}
inducing the norm $\|f\|_{H^r(\R^k)} := \sqrt{\la f | f \ra_{H^r(\R^k)}\!}$\;.

As well known, for any $r \in [0,+\infty)$, the space $H^r(\R^k)$ coincides with the Banach space obtained by taking the closure of $\DD(\R^k)$ with respect to the norm $\|\cdot\|_{H^r(\R^k)}$. We denote with $H^{-r}(\R^k)$ its topological dual; so, each $f \in H^{-r}(\R^k)$ is a continuous linear form on $H^r(\R^k)$ and, by restriction to $\DD(\R^k)$, it can be identified with a distribution on $\R^k$. 

In some of the forthcoming proofs, we will also use the well-known equivalent norm on the Sobolev spaces $H^r(\R^k)$ ($r \in \R$) defined by standard functional calculus in terms of real powers of the operator $(\mathbf{1}-\lap_k)$ ($\lap_k$ is the free Laplacian on $\R^k$) as follows:
\begin{equation}
\tv f\tv_{H^r(\R^k)} := \|(\mathbf{1}-\lap_k)^{r/2}f\|_{L^2(\R^k)}\;. \label{Equinorm}
\end{equation}
Of course, using the distributional Fourier transform $\Fou : \DD'(\R^k) \to \DD'(\R^k)$, normalized so as to be unitary with respect to the inner product of $L^2(\R^k)$, the above norm can be re-expressed as
\begin{equation}
\tv f\tv_{H^r(\R^k)} = \|(1+ |\k|^2)^{r/2}\,\Fou f\|_{L^2(\R^k)}\;. \label{EquinormFou}
\end{equation}

It is a well-known fact that the norm \eqref{Equinorm} is equivalent to the norm introduced previously on $H^r(\R^k)$ for any $r \in \R$. More explicitly, this means that for any $f \in H^r(\R^k)$ there holds true the following chain of inequalities
\begin{equation}
\tv f\tv_{H^{r}(\R^k)}\; \lec \; \|f\|_{H^{r}(\R^k)}\; \lec\; \tv f\tv_{H^{r}(\R^k)}\;. \label{EquinormIne}
\end{equation}

\subsection{Sobolev spaces on domains}
Whenever $\Omega\subseteq \R^{k}$ is an open set with a Lipschitz boundary, we define $$H^{s}(\Omega):=\{f\in \mathcal{D}'(\Omega) ~|~ \exists \tilde f\in H^{s}(\R^{k})\ \text{such that}\ \tilde f\!\rist\!\Omega=f\}
$$ 
and 
$$
H^{s}_{\overline\Omega}(\R^k):=\{f\in H^{s}(\R^{k}) ~|~ \text{supp}f\subseteq\overline\Omega\}\,.
$$
\subsection{Basic results on Sobolev spaces}

\begin{rem}\label{r:embedding}
In what follows we shall often use without further notice the continuous Sobolev embedding $H^{s}(\R^k) \cemb  H^{r}(\R^k)$ for $s\geq r$; in particular to infer that given an operator  $O : H^{r}(\R^{k}) \to H^{r'}(\R^{k'})$ then:  
\begin{itemize}
\item[$i)$] If $O\in \Bou(H^{r}(\R^{k}),H^{r'}(\R^{k'}))$ then $O \in \Bou(H^{s}(\R^{k}),H^{s'}(\R^{k'}))$ for all $r\leq s$, and $s' \leq r'$; 
\item[$ii)$] If $O\in \SV_p(H^{r}(\R^{k}),H^{r'}(\R^{k'}))$ then $O \in \SV_p(H^{s}(\R^{k}),H^{s'}(\R^{k'}))$ for all $r\leq s$,  $s' \leq r'$, and $1\leq p \leq +\infty$.
\end{itemize}
A similar remark holds true when dealing with the embedding between weighted Sobolev spaces, see Section \ref{ss:wss}.  
\end{rem}

\begin{lemma}\label{LemTenPro} Let $h,k \in \{1,2,3,...\}$ and consider the vector spaces $\DD(\R^h) \otimes \DD(\R^k)$ $($where $\otimes$ indicates the algebraic tensor product$)$, $\DD(\R^h \!\times \R^k)$ and $H^r(\R^h \!\times \R^k)$. Then, the following inclusions hold true for all $r \in \R$:
\begin{equation*}
\DD(\R^h) \otimes \DD(\R^k) \subset \DD(\R^h \!\times \R^k) \subset H^r(\R^h \!\times \R^k)\;.
\end{equation*}
Moreover, the sets on the l.h.s. of the above inclusions are dense subspaces of the topological vector spaces on the r.h.s., assuming the latter are endowed with their natural topologies.
\end{lemma}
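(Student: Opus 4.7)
The plan is to handle the two inclusions first and then address the two density claims in sequence. The inclusion $\DD(\R^h)\otimes\DD(\R^k)\subset\DD(\R^h\times\R^k)$ is immediate, since a finite linear combination $\sum_{i=1}^N f_i(\x)\,g_i(\y)$ with $f_i\in\DD(\R^h)$ and $g_i\in\DD(\R^k)$ is smooth on $\R^{h+k}$ with compact support contained in $\bigcup_i \supp f_i\times\supp g_i$. For the inclusion $\DD(\R^h\times\R^k)\subset H^r(\R^h\times\R^k)$, I would invoke the Fourier characterization \eqref{EquinormFou}: for $\varphi\in\DD(\R^{h+k})$, the Fourier transform $\Fou\varphi$ is Schwartz, hence $(1+|\k|^2)^{r/2}\,\Fou\varphi\in L^2(\R^{h+k})$ for every $r\in\R$, giving $\tv\varphi\tv_{H^r}<+\infty$ and, via \eqref{EquinormIne}, also $\|\varphi\|_{H^r}<+\infty$.

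The density of $\DD(\R^{h+k})$ in $H^r(\R^{h+k})$ for $r\geq 0$ is built into the definition quoted in the paper, namely that $H^r$ is the closure of $\DD$ in its norm. For $r<0$ I would argue on the Fourier side: given $u\in H^r(\R^{h+k})$, the distribution $\Fou u$ lies in the weighted space $L^2((1+|\k|^2)^{r}d\k)$, and truncating first in frequency (to $|\k|\leq 1/\eps$) and then in physical space (by a smooth compactly supported cutoff) produces a sequence in $\DD(\R^{h+k})$ converging to $u$ in the $H^r$-norm by dominated convergence in the weighted $L^2$ space.

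The crux is the density of $\DD(\R^h)\otimes\DD(\R^k)$ in $\DD(\R^h\times\R^k)$ in its inductive-limit topology, which I would establish by a Fourier-series argument on a box. Fix $\varphi\in\DD(\R^h\times\R^k)$ with $\supp\varphi\subset[-A,A]^{h+k}$ and choose $R>A$. Extending $\varphi$ by zero outside $[-R,R]^{h+k}$ and then $2R$-periodically produces a smooth function on the torus $\To_R^{h+k}:=(\R/2R\Z)^{h+k}$, whose Fourier series
\[
\varphi(\x,\y)=\sum_{(m,n)\in\Z^h\times\Z^k}c_{m,n}\,e^{i\pi m\cdot\x/R}\,e^{i\pi n\cdot\y/R}
\]
converges in $C^\infty(\To_R^{h+k})$, that is, uniformly together with all derivatives on $[-R,R]^{h+k}$. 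Picking $\eta_1\in\DD(\R^h)$ and $\eta_2\in\DD(\R^k)$ with $\eta_1(\x)\eta_2(\y)\equiv 1$ on $[-A,A]^{h+k}$ and $\supp\eta_1\times\supp\eta_2\subset(-R,R)^{h+k}$, the identity $\varphi=(\eta_1\otimes\eta_2)\varphi$ yields the truncated approximants
\[
\varphi_N(\x,\y):=\sum_{|m|,|n|\leq N}c_{m,n}\,\bigl(\eta_1(\x)e^{i\pi m\cdot\x/R}\bigr)\bigl(\eta_2(\y)e^{i\pi n\cdot\y/R}\bigr),
\]
which lie in $\DD(\R^h)\otimes\DD(\R^k)$, all share the same compact support $\supp\eta_1\times\supp\eta_2$, and, by Leibniz's rule together with the $C^\infty$ convergence of the Fourier series, converge to $\varphi$ uniformly with all derivatives. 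This is exactly convergence in the inductive-limit topology of $\DD(\R^h\times\R^k)$, and it trivially implies convergence in every $H^r(\R^h\times\R^k)$; composed with the previous step it gives density of $\DD(\R^h)\otimes\DD(\R^k)$ in each $H^r$ as well. The principal obstacle is precisely the simultaneous bookkeeping of fixed support and uniform derivative convergence required by the inductive-limit topology of $\DD$, which the Fourier-series approach with the fixed outer cutoff $\eta_1\otimes\eta_2$ handles uniformly in $N$.
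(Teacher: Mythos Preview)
Your argument is correct. The paper's own proof is simply a pair of citations (to Treves and to Blanchard--Br\"uning) and does not spell out any of the steps; you instead supply a self-contained argument. For the density of $\DD(\R^h)\otimes\DD(\R^k)$ in $\DD(\R^{h}\times\R^{k})$, your Fourier-series-on-a-box construction with the fixed outer cutoff $\eta_1\otimes\eta_2$ is precisely the classical proof that the cited references record, so in substance the two approaches coincide; what your write-up buys is that the reader does not have to chase the references, and the uniform-support bookkeeping required by the inductive-limit topology is made explicit.

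One small point worth tightening: in the $r<0$ density step, the phrase ``by dominated convergence in the weighted $L^2$ space'' covers the frequency truncation $\Fou u\mapsto\chi_{\{|\k|\le 1/\eps\}}\Fou u$ cleanly, but the subsequent physical-space cutoff is a convolution on the Fourier side and is not literally a dominated-convergence argument there. The fix is immediate, though: once the frequency is truncated, the resulting function lies in every $H^s$, so you may invoke the $r\ge 0$ density (which you already have from the paper's definition of $H^r$ as the closure of $\DD$) and conclude by a diagonal argument, or simply note that $\SS$ is dense in every $H^r$ and $\DD$ is dense in $\SS$.
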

\proof The thesis is a just restatement of known results. For the proof of the first inclusion, see, e.g., \cite[p. 74, Prop. 6.1]{Blan} or \cite[p. 409, Th. 39.2]{Tre}; for the proof of the second inclusion, see \cite[p. 182, Th. 13.2]{Blan} or \cite[p. 107, Prop. 13.1]{Tre2}.
\endproof

\begin{lemma}\label{LemFact} Let $h,k \in \{1,2,3,...\}$ and $r \geqs 0$. Then, there hold the following continuous embeddings:
\begin{equation}\label{EstFT1}
H^r(\R^{h+k}) \cemb H^{r_h}(\R^h) \otimes H^{r_k}(\R^k) \qquad \mbox{for all\, $r_k + r_h \leqs r$}\,;
\end{equation}
\begin{equation}\label{EstFT2}
H^{r_h}(\R^h) \otimes H^{r_k}(\R^k) \cemb H^{\min(r_h,r_k)}(\R^{h+k}) \qquad \mbox{for all\, $r_k,r_h \geqs 0$}\,.
\end{equation}
\end{lemma}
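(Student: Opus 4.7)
The plan is to prove both embeddings via the Fourier characterization of Sobolev spaces given in \eqref{EquinormFou}, reducing each embedding to an elementary pointwise inequality between weight functions in momentum space.

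\textbf{Step 1 (identification of spaces).} Using the unitarity of $\Fou$ on $L^2$ and the factorization $\Fou_{h+k} = \Fou_h \otimes \Fou_k$ (where $\Fou_j$ denotes the Fourier transform on $\R^j$), identify, up to the norm equivalence \eqref{EquinormIne}, the space $H^r(\R^{h+k})$ with the weighted $L^2$ space with weight $w_r(\k_h,\k_k) := (1+|\k_h|^2+|\k_k|^2)^r$, and the Hilbert tensor product $H^{r_h}(\R^h)\otimes H^{r_k}(\R^k)$ with the weighted $L^2$ space with weight $w_{r_h,r_k}(\k_h,\k_k) := (1+|\k_h|^2)^{r_h}(1+|\k_k|^2)^{r_k}$. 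Lemma \ref{LemTenPro} ensures that $\DD(\R^h) \otimes \DD(\R^k)$ is a common dense subspace in all spaces involved, so the inequalities need only be checked on this subspace and then extended by density.

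\textbf{Step 2 (proof of \eqref{EstFT1}).} Suppose $r_h+r_k \leq r$ with $r_h,r_k \geq 0$ (the case where one of them is negative is recovered a posteriori from the case with both set to zero via the embedding in Remark \ref{r:embedding}). Since $1+|\k_h|^2 \leq 1+|\k_h|^2+|\k_k|^2$ and $1+|\k_k|^2 \leq 1+|\k_h|^2+|\k_k|^2$, and since $1+|\k_h|^2+|\k_k|^2 \geq 1$, one has the pointwise bound
\[
w_{r_h,r_k}(\k_h,\k_k) \leq (1+|\k_h|^2+|\k_k|^2)^{r_h+r_k} \leq w_r(\k_h,\k_k)\,,
\]
which yields $\|f\|_{H^{r_h}(\R^h)\otimes H^{r_k}(\R^k)} \lec \|f\|_{H^r(\R^{h+k})}$ after passing through \eqref{EquinormIne}, proving \eqref{EstFT1}.

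\textbf{Step 3 (proof of \eqref{EstFT2}).} Let $m := \min(r_h,r_k) \geq 0$. The key elementary inequality is
\[
1+|\k_h|^2+|\k_k|^2 \;\leq\; (1+|\k_h|^2)(1+|\k_k|^2)\,,
\]
which follows from expanding the right-hand side. Raising to the power $m \geq 0$, and using $m \leq r_h$, $m \leq r_k$ together with the fact that $(1+|\k_h|^2),(1+|\k_k|^2) \geq 1$, one obtains
\[
w_m(\k_h,\k_k) \leq (1+|\k_h|^2)^m (1+|\k_k|^2)^m \leq w_{r_h,r_k}(\k_h,\k_k)\,,
\]
which, combined with \eqref{EquinormIne}, gives the continuous embedding \eqref{EstFT2}.

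\textbf{Main obstacle.} The only non-routine point is the correct identification, in Step 1, of the Hilbert tensor product $H^{r_h}(\R^h) \otimes H^{r_k}(\R^k)$ with the weighted $L^2$ space on $\R^{h+k}$ through the Fourier transform: once the natural isometric isomorphisms $H^{r_j}(\R^j) \cong L^2(\R^j,(1+|\k_j|^2)^{r_j}d\k_j)$ are fixed and their tensor product is recognized as the weighted $L^2$ space with the product weight, everything else reduces to the two pointwise inequalities above. No genuine analytic difficulty arises beyond this bookkeeping.
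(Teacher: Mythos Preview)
Your proof is correct and follows essentially the same route as the paper: both arguments reduce the two embeddings to the same pointwise weight inequalities in Fourier variables (namely $(1+|\k_h|^2)^{r_h}(1+|\k_k|^2)^{r_k}\le (1+|\k_h|^2+|\k_k|^2)^{r}$ for \eqref{EstFT1} and $(1+|\k_h|^2+|\k_k|^2)\le (1+|\k_h|^2)(1+|\k_k|^2)$ for \eqref{EstFT2}), together with the density statement of Lemma~\ref{LemTenPro}. Your Step~2 is in fact slightly more careful than the paper's version, since you make explicit the assumption $r_h,r_k\ge 0$ needed for the intermediate inequality and indicate how to recover the general case.
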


\proof Consider the norms introduced in Eq. \eqref{Equinorm}. To prove both statements \eqref{EstFT1} and \eqref{EstFT2} it suffices to recall that these norms are equivalent to the standard norms on $H^{r}(\R^k)$ and to notice that they fulfil the elementary inequalities reported hereafter for any factorized function of the form $f = u_h \otimes u_k \in \DD(\R^h) \otimes \DD(\R^k)$.

On the one hand, noting that $(1 + |\q_h|^2 + |\q_k|^2)^r \geqs (1 + |\q_h|^2 + |\q_k|^2)^{r_h} (1 + |\q_h|^2 + |\q_k|^2)^{r_k} \geqs (1 + |\q_h|^2)^{r_h} (1 + |\q_k|^2)^{r_k}$ for $r \geqs 0$ and $r_h + r_k \leqs r$, we have
\[\begin{aligned}
&\tv f\tv_{H^r(\R^{h+k})}^2 
=  \int_{\R^h \times \R^k}d\q_h d\q_k\;(1+|\q_h|^2+|\q_k|^2)^r\, |(\Fou u_h)(\q_h)|^2\, |(\Fou u_k)(\q_k)|^2 \\
\geqs & \int_{\R^h \times \R^k} d\q_h\,d\q_k\;(1+|\q_h|^2)^{r_h} (1+|\q_k|^2)^{r_k}\, |(\Fou u_h)(\q_h)|^2\, |(\Fou u_k)(\q_k)|^2 \\
= & \;\tv u_h \tv_{H^{r_h}(\R^h)}^2\,\tv u_k\tv_{H^{r_k}(\R^k)}^2 = \tv u_h \!\otimes\! u_k\tv_{H^{r_h}(\R^h) \otimes H^{r_k}(\R^k)}^2\,.
\end{aligned}
\]

On the other hand, for any $r,r_h,r_k \geqs 0$ one can use the trivial estimates $(1 + |\q_h|^2 + |\q_k|^2) \leqs (1 + |\q_h|^2) (1 + |\q_k|^2)$ and $\min(r_h,r_k) \leqs r_h$, $\min(r_h,r_k) \leqs r_k$ to obtain the following chain of inequalities:
\[
\begin{aligned}
&\tv f\tv_{H^{\min(r_h,r_k)}(\R^{h+k})}^2 \\ 
=& \int_{\R^h \times \R^k}d\q_h\,d\q_k\;(1+|\q_h|^2+|\q_k|^2)^{\min(r_h,r_k)}\, |(\Fou u_h)(\q_h)|^2\, |(\Fou u_k)(\q_k)|^2  \\
\leqs  & \int_{\R^h \times \R^k} d\q_h\,d\q_k\;(1+|\q_h|^2)^{r_h} (1+|\q_k|^2)^{r_k}\, |(\Fou u_h)(\q_h)|^2\, |(\Fou u_k)(\q_k)|^2  \\ 
= & \;\tv u_h\tv_{H^{r_h}(\R^h)}^2\,\tv u_k\tv_{H^{r_k}(\R^k)}^2 = \tv u_h \!\otimes\! u_k\tv_{H^{r_h}(\R^h) \otimes H^{r_k}(\R^k)}^2\,.
\end{aligned}
\]

The above bounds suffice to infer the claims stated in Eq.s \eqref{EstFT1} and \eqref{EstFT2}, by standard density arguments (see also the previous Lemma \ref{LemTenPro}).
\endproof

\subsection{Sobolev spaces on the boundary and trace operators\label{ss:trace}}
Let us consider the space $\R^3$ and indicate with boldface letters $\x = (x^1,x^2,x^3)$ a set of Cartesian coordinates on it. We consider the plane
\begin{equation*}
\pi_0 := \{\x \in \R^3 \,|\, x^1 = 0\}\;,
\end{equation*}
and refer to it as the ``\emph{flat}'' or the ``\emph{non-deformed}'' plane. We write $\xp = (x^2,x^3)$ for the coordinates induced by the set of coordinates $\x$ on $\pi_0$, which allow to naturally identify the latter with $\R^{2}$.

For  $F\in C_0^{0,1}(\R^2)$, as in our assumptions,  we write $\sF$ for the support of $F$ and consider the surface
\begin{equation*}
\pi_F := \{\x \in \R^3 \,|\, x^1 = F(\xp)\}\;,
\end{equation*}
which is referred to as the ``deformed'' plane. Needless to say, the two planes $\pi_0$ and $\pi_F$ coincide when $F = 0$. Besides, let us remark that, similarly to $\pi_0$, the plane $\pi_F$ can also be identified with $\R^{2}$ considering the change of coordinates 
\begin{equation*}
(x^1,\xp) \mapsto (y^1,\yp) := (x^1-F(\xp),\xp)\;. \label{change}
\end{equation*}
Correspondingly, one has an isomorphism $I_F$ of $H^r(\R^3)$ into itself for any order $r \in \R$ such that $|r| \leqs 1$ (see \cite[Sec. 1.3.3]{Gris} or \cite[Ch. 3]{McL}), defined by $I_F f(x^1,\xp) = f(x^1+F(\xp),\xp)$. 

To proceed, let us consider the map $\tau_0: \DD(\R^3) \to \DD(\R^{2})$ defined by
\begin{equation*}
(\tau_0 f)(\xp) := f(0,\xp)\;,
\end{equation*}
i.e. the evaluation of smooth functions on the plane $\pi_0$. As well known \cite{LioMag}, this map can be uniquely extended to a surjective, continuous operator $\tau_0 : H^{r+1/2}(\R^3) \to H^{r}(\R^{2})$ for all $r > 0$. 

We shall also use the map $\tau_F : \DD(\R^3) \to \DD(\R^{2})$ defined by
\begin{equation*}
(\tau_F f)(\xp) := f(F(\xp),\xp)\;,
\end{equation*}
i.e. the evaluation of smooth functions on the surface $\pi_F$.
\begin{rem}\label{r:taubonded}
Noticing that $(\tau_F f)(\xp) $ $= (\tau_0 I_Ff )(\xp) $, we infer that this map  can be uniquely extended to a surjective, continuous operator 
\begin{equation*}\label{taubounded}
\tau_F \in \Bou(H^{r+1/2}(\R^3),H^{r}(\R^{2})) \qquad \mbox{for all $r \in (0,1/2]$}\;;
\end{equation*}
see also, e.g., \cite[Th. 3.37]{McL}. 
\end{rem}
The operators $\tau_0,\tau_F$ are commonly referred to as \emph{traces} on the planes $\pi_0,\pi_F$.
Considering then the $C^{0,1}$ open domains 
$$
\Omega_{F}^{\pm}:=\{\x \in \R^3 \,|\, x^1> \pm F(\xp)\}\,,
$$
the trace operator $\tau_{F}$ can be extended to the larger spaces 
$$
H^{r}(\R^3 \backslash \pi_{F}):=H^{r}(\Omega_{F}^{-})\oplus H^{r}(\Omega_{F}^{+}) \qquad \mbox{for $r \in (0,1)$}
$$
by setting 
\begin{equation*}\label{lat-trace}
\tau_{F}:H^{r+1/2}(\Omega^{-}_{F})\oplus H^{r+1/2}(\Omega^{+}_{F}) \to H^{r}(\R^{2})\,,
\quad
\tau_{F}(f_{-}\oplus f_{+}):=\frac12\,(\tau_{F}^{-}f_{-}+\tau_{F}^{+}f_{+})\,,\end{equation*}\noindent
where 
$$
\tau^{\pm}_{F} \in \Bou(H^{r+1/2}(\Omega^{\pm}_{F}), H^{r}(\R^{2})) \qquad \mbox{for $r \in (0,1)$}
$$
are the lateral traces defined as the unique bounded extensions of the evaluation maps 
$$
\tau^{\pm}_{F}f(\xp):=\lim _{\epsilon\downarrow 0}f(F(\xp)\pm\epsilon,\xp)\,,\qquad  f\in H^{r+1/2}(\Omega^{\pm}_{F})\cap C(\overline{\Omega^{\pm}_{F}})\,.
$$
In terms of the above lateral traces, for later convenience we introduce the notations
\begin{equation}
f\!\rist_{\pi_F^\pm} \;:= \tau^{\pm}_F f\,, \qquad f\!\rist_{\pi_F} \;:= \tau_F f\,, \qquad [f\,]_{\pi_F} := \tau^{+}_F f - \,\tau^{-}_F f\,, \label{jump}
\end{equation}

\subsection{Weighted Sobolev spaces\label{ss:wss}}
For any $k \in \N$ and $w \in L^1_{\loc}(\R^k,[0,+\infty))$, we consider the weighted Sobolev-type spaces of integer order $n \in \{0,1,2,...\}$ defined as
\begin{equation} \begin{aligned}
& H^n_{w}(\R^k) := \big\{f \in  \DD'(\R^k)\;\big|\; \|f\|_{H^n_{w}(\R^k)}^2 < +\infty \big\}\;, \\
& \|f\|_{H^n_{w}(\R^k)}^2 := \sum_{|\al| \leqs n}\int_{\R^k}\!d\x\;w(\x)\,\big|(\de^\al f)(\x)\big|^2\;; \label{normHkw}
\end{aligned}
\end{equation}
in particular, for $n = 0$ we set
\begin{equation*}\begin{aligned}
& H^0_{w}(\R^k) \equiv L^2_{w}(\R^k) := \big\{f \in \DD'(\R^k)\;\big|\; \|f\|_{L^2_{w}(\R^k)}^2 < +\infty \big\}\;,  \\ 
&\|f\|_{L^2_{w}(\R^k)}^2 := \int_{\R^k}\!d\x\;w(\x)\,|f(\x)|^2\,. \label{normL2w}
\end{aligned} \end{equation*}
For any $\te \in (0,1)$, we define the analogous fractional order spaces by complex interpolation putting
\begin{equation}
H^{n+\te}_{w}(\R^k) := [H^n_{w}(\R^k),H^{n+1}_{w}(\R^k)]_{\te}\;. \label{HkwInt}
\end{equation}

In passing, let us remark that the non-weighted fractional Sobolev spaces introduced in subsection \ref{subSob} could be equivalently characterized by complex interpolation, via a relation analogous to \eqref{HkwInt}. To be more precise, there holds $H^{n+\te}(\R^k) = [H^n(\R^k),H^{n+1}(\R^k)]_{\te}$; of course, this identity must be understood in the sense that the usual topology on $H^{n+\te}(\R^k)$ descending from the inner product \eqref{innerfra} is equivalent to (though, different from) the natural interpolation topology on $[H^n(\R^k),H^{n+1}(\R^k)]_{\te}$.

Now, let us assume that $1/w \in L^1_{\loc}(\R^k,[0,+\infty))$ is an admissible weight as well; then, for any $r \geqs 0$ we introduce the weighted space of negative order $-r$ in terms of the standard $L^2$-duality setting
\begin{equation}
H^{-r}_{w}(\R^k) := \big(H^r_{1/w}(\R^k)\big)'\;. \label{HrwDual}
\end{equation}
Taking into account the latter position and \cite[p. 98, Cor. 4.5.2]{BeLo}, it can be readily inferred that a relation analogous to Eq. \eqref{HkwInt} does indeed hold true for any $n \in \Z$ and all $\te \in (0,1)$. 
\vspace{0.15cm}

Our analysis involves, in particular, the weights $w_{\su}(x^1) := (1+|x^1|^2)^{\su}$ on $\R$, $w_{\sp}(\xp) := (1+|\xp|^2)^{\sp}$ on $\R^{2}$ and their tensor product $w_{\su,\sp}(x^1\!,\xp) := w_{\su}(x^1)\,w_{\sp}(\xp)$ on $\R^3$, for suitable $\su,\sp \in \R$. For any $r \in \R$, we indicate the corresponding weighted spaces with the short-hand notations
\begin{equation*}
H^r_{\su}(\R) \equiv H^r_{w_{\su}\!}(\R)\,, \qquad H^r_{\sp}(\R^{2}) \equiv H^r_{w_{\sp}}(\R^{2})\,, \qquad H^r_{\su,\sp}(\R^3) \equiv H^r_{w_{\su,\sp}}(\R^3) \;; \label{HrwTen}
\end{equation*}
moreover, noting the elementary identity $1/w_{\su} = w_{-\su}$ and the related analogues for $w_{\sp}$ and $w_{\su,\sp}$, from Eq. \eqref{HrwDual} we infer the duality relations
\begin{equation}\begin{aligned}
H^{-r}_{-\su}(\R) = &\, \big(H^{r}_{\su}(\R)\big)'\;, \qquad
H^{-r}_{-\sp}(\R^2) = \big(H^{r}_{\sp}(\R^2)\big)'\;, \\ 
& H^{-r}_{-\su,-\sp}(\R^3) = \big(H^{r}_{\su,\sp}(\R^3)\big)'\,.  \label{HrsDual}
\end{aligned}
\end{equation}

Before proceeding, let us mention that all the weighted spaces considered above are indeed Hilbert spaces, endowed with the corresponding natural inner products. For any given pair $\HH_1,\HH_2$ of these spaces, we regard the tensor product $\HH_1 \otimes \HH_2$ as a Hilbert space itself, equipped with the usual inner product defined on factorized elements and extended to the whole space by linearity.
\vspace{0.1cm}

In the forthcoming Lemma \ref{lemHW} we collect a number of results providing a more explicit characterization of the weighted spaces described previously; these results will be employed systematically in the derivation of the subsequent developments, in particular in Section \ref{s:LAP} for the proof of the LAP.

\begin{lemma}\label{lemHW}
The following statements $i)-iv)$ hold true. \\
$i)$ Assume that $\su,\sp \geqs 0$; then, for all $r \in \R$ the following embeddings define continuous maps:
\begin{equation*}
H^r_{\su}(\R) \cemb H^r(\R)\;, \qquad H^r_{\sp}(\R^2) \cemb H^r(\R^2)\;, \qquad H^r_{\su,\sp}(\R^r) \cemb H^r(\R^3)\;.
\end{equation*}
$ii)$ Let $\su,\sp \in \R$ and consider the multiplication operators 
\begin{equation}\begin{aligned}
\Iu : \DD'(\R) \to \DD'(\R)\,, \;&\; \uu \mapsto w_{\su}^{1/2} \uu\;, \qquad
\Ip : \DD'(\R^2) \to \DD'(\R^2)\,, \;\; \up \mapsto w_{\sp}^{1/2} \up\;, \\
&\Iup : \DD'(\R^3) \to \DD'(\R^3)\,, \;\; f \mapsto w_{\su,\sp}^{1/2}\,f\;. \label{IsDist}
\end{aligned}\end{equation}\noindent
For any $r \in \R$, these operators define by restriction the isomorphism of Banach spaces
\begin{equation*}
\Iu : H^r_{\su}(\R) \to H^r(\R),\;\;\;
\Ip : H^r_{\sp}(\R^2) \to H^r(\R^2),\;\;\;
\Iup : H^r_{\su,\sp}(\R^3) \to H^r(\R^3). \label{IsIso}
\end{equation*}\noindent
$iii)$ Let $\su,\sp \in \R$; then, for all $r,r' \in \R$ with $r \geqs r'$ the following embeddings define continuous maps:
\begin{equation*}
H^r_{\su}(\R) \cemb H^{r'}_{\su}(\R)\;, \qquad H^r_{\sp}(\R^2) \cemb H^{r'}_{\sp}(\R^2)\;, \qquad H^r_{\su,\sp}(\R^3) \cemb H^{r'}_{\su,\sp}(\R^3)\;.
\end{equation*}
$iv)$ Let $\su,\sp \in \R$; then, for all $\ru,\rp \geqs 0$ the following embedding defines a continuous map:
\begin{equation*}
H^{\ru}_{\su}(\R) \otimes H^{\rp}_{\sp}(\R^2) \cemb H^{\min(\ru,\rp)}_{\su,\sp}\!(\R^3)\;.
\end{equation*}
\end{lemma}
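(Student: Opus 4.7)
The plan is to treat part $ii)$ first and derive parts $i)$, $iii)$, $iv)$ from it, since the multiplication isomorphism is the core tool for transferring results between weighted and unweighted spaces.

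For part $ii)$, I would first handle the integer nonnegative case $r=n\in\N_0$ directly. By the Leibniz rule, $\de^\al(w_{\su}^{1/2}\uu)=\sum_{\be\leq\al}\binom{\al}{\be}(\de^{\al-\be}w_{\su}^{1/2})\,\de^\be\uu$. The elementary pointwise bound $|\de^\ga w_{\su}^{1/2}(x^1)|\lec (1+|x^1|^2)^{\su/2-|\ga|/2}\lec w_{\su}^{1/2}(x^1)$, valid for every multi-index $\ga$, yields
\[
\|\Iu \uu\|_{H^n(\R)}^2\;\lec\;\sum_{|\al|\leq n}\int_\R dx^1\,w_{\su}(x^1)|\de^\al \uu(x^1)|^2=\|\uu\|_{H^n_{\su}(\R)}^2\,.
\]
The inverse $I_{-\su}$ (multiplication by $w_{\su}^{-1/2}=w_{-\su}^{1/2}$) is treated identically, so $\Iu$ is an isomorphism $H^n_{\su}(\R)\to H^n(\R)$. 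For fractional order $r=n+\te$ with $\te\in(0,1)$, since $\Iu$ defines a bounded map between both pairs of endpoint spaces with uniform norm estimates, complex interpolation applied to \eqref{HkwInt} together with the analogous interpolation identity $H^{n+\te}(\R)=[H^n(\R),H^{n+1}(\R)]_\te$ yields the required isomorphism. For negative order $-r<0$, since $\Iu$ on $H^r_{-\su}(\R)$ is the (bounded) inverse map to the case just treated, one dualizes using the definition \eqref{HrwDual}: $\Iu\in\Bou(H^{-r}_{\su}(\R),H^{-r}(\R))$ is the transpose of $I_{-\su}\in\Bou(H^r(\R),H^r_{-\su}(\R))$. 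The arguments for $\Ip$ on $\R^2$ and $\Iup$ on $\R^3$ are verbatim the same, upon noting the factorization $w_{\su,\sp}^{1/2}(x^1,\xp)=w_{\su}^{1/2}(x^1)\,w_{\sp}^{1/2}(\xp)$ and the matching bound $|\de^\al w_{\su,\sp}^{1/2}|\lec w_{\su,\sp}^{1/2}$.

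For part $i)$, fix $\su,\sp\geq 0$. When $r=n\in\N_0$, since $w_{\su}\geq 1$ pointwise, the definition \eqref{normHkw} gives $\|f\|_{H^n(\R)}\leq \|f\|_{H^n_{\su}(\R)}$ directly. For $r=-n$, the reverse inequality $H^n(\R)\hookrightarrow H^n_{-\su}(\R)$ holds because $w_{-\su}\leq 1$, and dualizing via \eqref{HrwDual} yields $H^{-n}_{\su}(\R)\hookrightarrow H^{-n}(\R)$. Interpolation between consecutive integer orders, as in part $ii)$, closes the fractional case. The arguments for $\R^2$ and $\R^3$ are identical.

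For part $iii)$, I combine part $ii)$ with the standard chain $H^r(\R)\hookrightarrow H^{r'}(\R)$ for $r\geq r'$. Explicitly, the composition $I_{\su}^{-1}\circ(H^r(\R)\hookrightarrow H^{r'}(\R))\circ \Iu$ defines a bounded map $H^r_{\su}(\R)\to H^{r'}_{\su}(\R)$, which coincides with the natural inclusion. The analogous arguments carry over verbatim to $H^r_{\sp}(\R^2)$ and $H^r_{\su,\sp}(\R^3)$.

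For part $iv)$, I reduce to Lemma \ref{LemFact}. Given $\uu\otimes\up\in H^{\ru}_{\su}(\R)\otimes H^{\rp}_{\sp}(\R^2)$, applying the isomorphism $\Iup$ from part $ii)$ (which factors as $\Iu\otimes\Ip$ on simple tensors) and then Lemma \ref{LemFact}, one estimates
\[
\|\uu\otimes\up\|_{H^{\min(\ru,\rp)}_{\su,\sp}(\R^3)}\;\lec\;\|(w_{\su}^{1/2}\uu)\otimes(w_{\sp}^{1/2}\up)\|_{H^{\min(\ru,\rp)}(\R^3)}\;\lec\;\|w_{\su}^{1/2}\uu\|_{H^{\ru}(\R)}\|w_{\sp}^{1/2}\up\|_{H^{\rp}(\R^2)}\,,
\]
which is in turn comparable to $\|\uu\|_{H^{\ru}_{\su}(\R)}\|\up\|_{H^{\rp}_{\sp}(\R^2)}$. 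Extension from simple tensors to the full tensor product Hilbert space proceeds by linearity and density.

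The main obstacle is part $ii)$ in the fractional and negative-order regimes: establishing that $\Iu$ descends from an operator defined on distributions to a genuine isomorphism requires the joint use of complex interpolation for \eqref{HkwInt} and of the duality definition \eqref{HrwDual}, together with the uniform derivative estimate on the smoothed weight $w_{\su}^{1/2}$. Once this is in place, parts $i)$, $iii)$, and $iv)$ follow by routine transfer arguments.
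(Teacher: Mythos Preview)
Your overall architecture is correct and parts $i)$, $iii)$, $iv)$ match the paper's proof essentially verbatim: direct inequality for integer orders plus interpolation and duality for $i)$, and reduction to the unweighted case via part $ii)$ for $iii)$ and $iv)$ (the latter combined with Lemma~\ref{LemFact}).

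The genuine difference lies in part $ii)$. The paper does not argue directly: it identifies $H^n_{\su}(\R)$ with the weighted Triebel--Lizorkin space $F^n_{2,2}(\R,w_{\su}^{1/2})$ via \cite[Th.~6.9]{TriIII}, invokes \cite[Th.~6.5]{TriIII} to obtain that multiplication by $w_{\su}^{1/2}$ is an isomorphism $F^r_{2,2}(\R,w_{\su}^{1/2})\to F^r_{2,2}(\R)$ for all $r\in\R$, and then interpolates between integer orders. Your route is more elementary and self-contained: the Leibniz expansion together with the pointwise bound $|\de^{\gamma} w_{\su}^{1/2}|\lec w_{\su}^{1/2}$ gives the integer case without any external machinery, and then interpolation and duality extend it. This buys independence from the Triebel--Lizorkin framework at no real cost. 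One small slip: in your duality step you write that $\Iu\in\Bou(H^{-r}_{\su},H^{-r})$ is the transpose of $I_{-\su}\in\Bou(H^r,H^r_{-\su})$, but multiplication by the real weight is self-transpose with respect to the $L^2$ pairing, so the transpose of $I_{-\su}$ is again $I_{-\su}$. The map in $\Bou(H^r,H^r_{-\su})$ you should be transposing is $\Iu$ itself (the inverse of $I_{-\su}:H^r_{-\su}\to H^r$); once this label is corrected the argument goes through unchanged.
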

\proof In the following we discuss separately the proofs of items i)$-$iv). Concerning items i)$-$iii), as examples we derive the corresponding statements involving the space $H^{\ru}_{\su}(\R)$; the analogous claims regarding $H^{r}_{\sp}(\R^2)$ and $H^{r}_{\su,\sp}(\R^3)$ can be inferred by similar arguments.\\
$i)$ For $\su \geqs 0$ and $r = n \in \{0,1,2,...\}$, it can be easily checked by direct inspection that $\|\uu\|^2_{H^{n}_{-\su}(\R)} \leqs \|\uu\|^2_{H^{n}(\R)} \leqs \|\uu\|^2_{H^{n}_{\su}(\R)}$, which proves that $H^{n}_{\su}(\R) \cemb H^{n}(\R)$ and $H^{n}(\R) \cemb H^{n}_{-\su}(\R)$. Recalling our position \eqref{HkwInt} and its dual analogue, from here we infer by interpolation that $H^{r}_{\su}(\R) \cemb H^{r}(\R)$ and $H^{r}(\R) \cemb H^{r}_{-\su}(\R)$ for all $r \geqs 0$. The first of these relations proves the thesis for $r \geqs 0$. On the other hand, on account of the duality relation in Eq. \eqref{HrsDual}, the second relation allow us to infer that $H^{r}_{\su} = (H^{-r}_{-\su}(\R))' \cemb (H^{-r}(\R))' = H^{r}(\R)$ for $r \leqs 0$, which completes the proof.\\
$ii)$ First of all, let us remark that for any $\su \in \R$ the map $\Iu$ defined in Eq. \eqref{IsDist} is invertible, with inverse given by $\Iu^{-1} = I_{-\su}$.\\
Next, let $n \in \{0,1,2,...\}$. It can be checked by direct inspection that the weight $w_{\su}^{1/2} = w_{\su/2}$ on $\R$ fulfills  the conditions of \cite[p. 263, Def. 6.1]{TriIII} and that $\uu \mapsto \|w_{\su}^{1/2} \uu\|^2_{H^n(\R)}$ determines a norm on $H^n_{\su}(\R)$ which is equivalent to that of Eq. \eqref{normHkw} (compare with \cite[p. 267, Eq. (6.18)]{TriIII}); so, from \cite[p. 267, Th. 6.9]{TriIII} it follows that $H^n_{\su}(\R) = F^n_{2,2}(\R,w_{\su}^{1/2})$, where $F^r_{2,2}(\R,w_{\su}^{1/2})$ are the weighted Triebel-Lizorkin spaces defined according to \cite[p. 264, Def. 6.3]{TriIII}. Due to our position \eqref{HrsDual}, the latter identity and \cite[p. 244, Eq. (7)]{TriTop} yield in addition the dual relation $H^{-n}_{\su}(\R) = F^{-n}_{2,2}(\R,w_{\su}^{1/2})$, again for $n \in \{0,1,2,...\}$. Furthermore, let us recall that $H^r(\R) = F^r_{2,2}(\R,1) \equiv F^r_{2,2}(\R)$ for any $r \in \R$ (see, e.g., \cite[p. 3, Eq. (1.8)]{TriIII}).
On the other hand, \cite[p. 265, Th. 6.5]{TriIII} states that the map $F^r_{2,2}(\R,w_{\su}^{1/2}) \to F^r_{2,2}(\R)$, $\uu \mapsto w_{\su}^{1/2} \uu$ is an isomorphism of Banach spaces for all $r \in \R$. In view of the facts mentioned previously, the latter result allows us to infer that the map $\big(\Iu\!\rist\!H^r_{\su}(\R)\big) : H^r_{\su}(\R) \to H^r(\R)$ is a Banach isomorphism for all $r \in \Z$. Then, the analogous statement for arbitrary $r \in \R$ follows by interpolation from \cite[p. 46, Th. 2.1.6]{Lun}, recalling our definition \eqref{HkwInt} and its dual counterpart.\\
$iii)$ The previously proven item ii) implies that $H^{r}_{\su}(\R)$ is isomorphic to $H^{r}(\R)$ for any $r \in \R$. Then the thesis follows straightforwardly from the standard Sobolev embedding $H^{r}(\R) \cemb H^{r'}(\R)$, holding true for $r \geqs r'$.\\
$iv)$ Again, due to item ii) we know that $H^{\ru}_{\su}(\R)$, $H^{\rp}_{\sp}(\R^2)$ and $H^{\min(\ru,\rp)}_{\su,\sp}(\R^3)$ are respectively isomorphic to $H^{\ru}(\R)$, $H^{\rp}(\R^2)$ and $H^{\min(\ru,\rp)}(\R^3)$. On the other hand, Eq. \eqref{EstFT2} of Lemma \ref{LemFact} allows us to infer that $H^{\ru}(\R) \otimes H^{\rp}(\R^2)$ is continuously embedded into $H^{\min(\ru,\rp)}(\R^3)$ for any $\ru,\rp \geqs 0$. Altogether, the previous arguments yield the thesis.
\endproof

The subsequent Lemmata \ref{lemRfW} and \ref{lemtaFw}, characterize the free resolvent operator $\Rf(z)$ and the trace $\tau_F$ on the plane $\pi_F$ as bounded maps on the weighted spaces under analysis.
\begin{lemma}\label{lemRfW}
Let $\su,\sp \in \R$ and $r \in \R$. Then, for all $z \in \C \backslash [0,+\infty)$ there holds
\begin{equation}
\Rf(z) \in \Bou(H^{r}_{\su,\sp}(\R^3),H^{r + 2}_{\su,\sp}(\R^3))\;. \label{lemRfWEq}
\end{equation}
\end{lemma}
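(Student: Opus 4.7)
The strategy is to conjugate the weight away with the isomorphism $\Iup$ of Lemma \ref{lemHW}(ii) and then reduce the problem to the standard unweighted mapping property of the free resolvent, perturbed by a first-order operator with smooth bounded coefficients.

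First, by Lemma \ref{lemHW}(ii), for every $s \in \R$ the multiplication map $\Iup$ is a Banach isomorphism of $H^s_{\su,\sp}(\R^3)$ onto $H^s(\R^3)$. Hence the claim \eqref{lemRfWEq} is equivalent to the boundedness statement
\[
T(z) \;:=\; \Iup\,\Rf(z)\,\Iup^{-1} \;=\; w_{\su,\sp}^{1/2}\,\Rf(z)\,w_{\su,\sp}^{-1/2} \;\in\; \Bou(H^r(\R^3),\,H^{r+2}(\R^3))\,.
\]
A direct computation on smooth functions yields $w_{\su,\sp}^{1/2}(-\Delta)w_{\su,\sp}^{-1/2} = -\Delta + P$, where $P := -2(w_{\su,\sp}^{1/2}\nabla w_{\su,\sp}^{-1/2})\cdot\nabla - w_{\su,\sp}^{1/2}(\Delta w_{\su,\sp}^{-1/2})$ is a first-order differential operator whose coefficients coincide (up to numerical factors) with $\nabla\log w_{\su,\sp}$, $|\nabla\log w_{\su,\sp}|^2$ and $\Delta w_{\su,\sp}/w_{\su,\sp}$. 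Since $w_{\su,\sp}$ is a positive smooth function with bounded logarithmic derivatives of all orders on $\R^3$, these coefficients are bounded and smooth, so $P \in \Bou(H^{s+1}(\R^3),H^s(\R^3))$ for every $s \in \R$.

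Next, the standard Fourier estimate based on \eqref{EquinormFou} and $|(|\k|^2-z)^{-1}|(1+|\k|^2) \leq C_z$ gives $\Rf(z) \in \Bou(H^s(\R^3),H^{s+2}(\R^3))$ for every $s \in \R$ and every $z \in \C\setminus[0,+\infty)$. Combined with the boundedness of $P$, the composition $\Rf(z)P$ is bounded on $H^{r+2}(\R^3)$, and a direct Fourier multiplier estimate shows that $\|\Rf(-\lambda)\,P\|_{\Bou(H^{r+2}(\R^3))} = O(\lambda^{-1/2})$ as $\lambda \to +\infty$. Therefore $I+\Rf(-\lambda)P$ is invertible on $H^{r+2}(\R^3)$ by Neumann series for $\lambda$ large enough, and the formal identity $(-\Delta - z + P)T(z) = I$, equivalent to $(I+\Rf(z)P)\,T(z) = \Rf(z)$, yields
\[
T(-\lambda) \;=\; (I+\Rf(-\lambda)\,P)^{-1}\Rf(-\lambda) \;\in\; \Bou(H^r(\R^3),H^{r+2}(\R^3))
\]
for $\lambda$ sufficiently large.

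Finally, the extension to all $z \in \C\setminus[0,+\infty)$ is obtained from the first resolvent identity, which after conjugation reads $T(z) - T(z_0) = (z-z_0)\,T(z)\,T(z_0)$, so that $T(z) = [I-(z-z_0)T(z_0)]^{-1}T(z_0)$ whenever the inverse exists on $H^{r+2}(\R^3)$. Starting from a base point $z_0 = -\lambda$ where $T(z_0)$ is already controlled in the desired sense, this identity propagates the smoothing property along an analytic-continuation chain through the connected resolvent set $\C\setminus[0,+\infty)$. The main obstacle is precisely this last step: ensuring the uniform $H^r \to H^{r+2}$ boundedness of $T(z)$ across the whole resolvent set, in the absence of self-adjointness of the conjugated operator $-\Delta + P$ on $L^2(\R^3)$. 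A robust way to bypass the obstruction is to supplement the perturbative argument with a direct kernel-level estimate: the free Green function $G_z(\x-\y) = (4\pi|\x-\y|)^{-1}e^{i\sqrt{z}|\x-\y|}$ decays like $e^{-(\Im\sqrt{z})|\x-\y|}/|\x-\y|$, while the polynomial ratio $w_{\su,\sp}^{1/2}(\x)/w_{\su,\sp}^{1/2}(\y) \leq C(1+|\x-\y|)^{|\su|+|\sp|}$ is dominated by this exponential decay, giving the required locally uniform bounds in $z$ via Young's inequality and the identity $(-\Delta)T(z)f = f + zT(z)f - P\,T(z)f$ to recover the two missing derivatives.
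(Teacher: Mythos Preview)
Your conjugation strategy is a reasonable alternative to the paper's route, but the argument has a genuine gap at the point you yourself flag. The resolvent identity $T(z) = [I - (z-z_0)T(z_0)]^{-1}T(z_0)$ only inverts the bracket when $|z-z_0|\,\|T(z_0)\|$ is small, so chaining discs across $\C\setminus[0,+\infty)$ requires a priori control of $\|T(\cdot)\|_{\Bou(H^r,H^{r+2})}$ along the path --- which is precisely what you are trying to prove; the continuation is circular. Your kernel fallback is the right repair and is essentially the Reed--Simon lemma \cite[p.~170, Lem.~1]{ReSiIV} that the paper invokes: Peetre's inequality $w_{\su,\sp}^{1/2}(\x)/w_{\su,\sp}^{1/2}(\y) \lec (1+|\x-\y|)^{|\su|+|\sp|}$ against the exponential decay of $G_z$ gives $T(z)\in\Bou(L^2)$ for \emph{every} $z\in\C\setminus[0,+\infty)$ directly, with no continuation needed. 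But the step where you ``recover the two missing derivatives'' from $(-\Delta)T(z)f = f + zT(z)f - PT(z)f$ is incomplete: with only $T(z)f\in L^2$ you cannot yet place $PT(z)f$ in $L^2$, and a bootstrap (e.g.\ via $T(z)=\Rf(z)-\Rf(z)PT(z)$, first lifting to $H^1$ and then to $H^2$) is needed but not supplied. Reaching arbitrary $r\in\R$ then still requires an induction/duality/interpolation scheme that you do not address at all.

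The paper avoids conjugation entirely: it cites Reed--Simon for $\Rf(z)\in\Bou(L^2_{\su,\sp})$, upgrades to $\Bou(L^2_{\su,\sp},H^2_{\su,\sp})$ by commutator arguments as in \cite[Th.~4.2]{MPS2}, then obtains $r=2n$ by induction, $r=-2n$ from the duality $\Rf(z)=\Rf(\bar z)^\dag$ together with \eqref{HrsDual}, and finally all $r\in\R$ by complex interpolation via \eqref{HkwInt}. No analytic-continuation step is required, and the passage to general $r$ is explicit.
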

\proof First of all, let us notice that by \cite[p. 170, Lem. 1]{ReSiIV} we have $\Rf(z) \in \Bou(L^2_{\su,\sp}(\R^3))$ for all $\su,\sp \in \R$; from here it can be inferred that $\Rf(z) \in \Bou(L^2_{\su,\sp}(\R^3),H^2_{\su,\sp}(\R^3))$, using arguments similar to those described in the proof of Th. 4.2 in \cite{MPS2}. This yields the thesis for $r = 0$; then, proceeding by induction it can be inferred by similar arguments that Eq. \eqref{lemRfWEq} holds true as well for all $r = 2 n$ with $n \in \{0,1,2,...\}$.

To proceed let us remark that $\Rf(z) = \Rf(\ov{z})^\dag$, where ${}^{\dag}$ indicates the adjoint with respect to the usual $L^2$-duality; thus, recalling the basic relation \eqref{HrsDual} we obtain $\Rf(z) \in \Bou(H^{-2 n -2}_{-\su,-\sp}(\R^3),H^{-2 n}_{-\su,-\sp}(\R^3))$ for all $n \in \{0,1,2,...\}$, which is equivalent to Eq. \eqref{lemRfWEq} for $r = - 2 n$ after the obvious replacement $(\su\,,\sp) \to (-\su\,,-\sp)$.

Finally, notice that our definition \eqref{HkwInt} and its dual counter part yield the interpolation identity 
\[[H^{2 n}_{\su,\sp}(\R^3),H^{2 n + 2}_{\su,\sp}(\R^3)]_{\te} = H^{2 n + 2\te}_{\su,\sp}(\R^3) \quad\; \mbox{for any $\te \in (0,1)$ and all $n \in \Z$}; \] 
then, from \cite[p. 46, Th. 2.1.6]{Lun} we can conclude that the thesis stated in Eq. \eqref{lemRfWEq} holds true for arbitrary $r \in \R$ by interpolation.
\endproof

\begin{lemma}\label{lemtaFw} Let $\su,\sp \in \R$, $r \in (0,1/2]$. Then, the evaluation map $\tau_F : \DD(\R^3) \to \DD(\R^{2})$, $(\tau_F f)(\xp) := f(F(\xp),\xp)$ (see subsection \ref{ss:trace}) can be uniquely extended to a continuous operator
\begin{equation*}
\tau_F \in \Bou(H^{r+1/2}_{\su,\sp}(\R^3),H^{r}_{\sp}(\R^{2}))\;. \label{taFwEq}
\end{equation*}
\end{lemma}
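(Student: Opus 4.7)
My plan is to reduce the weighted statement to the unweighted result of Remark \ref{r:taubonded} by conjugating $\tau_F$ with the isomorphisms $\Iup$ and $\Ip$ furnished by Lemma \ref{lemHW}(ii), absorbing the discrepancy between the transverse weight $w_\su$ evaluated at the surface and its value on $\pi_0$ into a harmless multiplication operator on $H^{r}(\R^{2})$.

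Concretely, I first note that for any $f \in \DD(\R^3)$ a direct computation of pointwise values at the surface yields
\begin{equation*}
w_\sp^{1/2}(\xp)\, (\tau_F f)(\xp) \;=\; \phi(\xp)\, \bigl(\tau_F \bigl(w_{\su,\sp}^{1/2}\,f\bigr)\bigr)(\xp)\,,
\qquad
\phi(\xp) := \bigl(1 + F(\xp)^{2}\bigr)^{-\su/2}\;,
\end{equation*}
which in operator language reads $\Ip\,\tau_F = M_\phi \circ \tau_F \circ \Iup$ on test functions, where $M_\phi$ denotes the multiplication operator $g\mapsto \phi\,g$. The natural candidate for the extension is therefore
\begin{equation*}
\tau_F \;=\; \Ip^{-1}\circ M_\phi \circ \tau_F^{(0)} \circ \Iup\,,
\end{equation*}
with $\tau_F^{(0)}$ the unweighted trace of Remark \ref{r:taubonded}.

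The verification then proceeds step by step. Lemma \ref{lemHW}(ii) gives $\Iup\in\Bou(H^{r+1/2}_{\su,\sp}(\R^3),H^{r+1/2}(\R^3))$ and $\Ip^{-1}\in\Bou(H^r(\R^2),H^r_\sp(\R^2))$; Remark \ref{r:taubonded} supplies $\tau_F^{(0)}\in\Bou(H^{r+1/2}(\R^3),H^r(\R^2))$ for $r\in(0,1/2]$. Composing these three factors is straightforward, so the whole question reduces to showing that $M_\phi\in\Bou(H^r(\R^2))$ for $r\in(0,1/2]$.

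This last point is the main (but mild) obstacle. Because $F\in C^{0,1}_0(\R^2)$, the function $\phi$ is bounded above and below (with explicit bounds depending only on $\|F\|_\infty$), is globally Lipschitz on $\R^2$, and differs from the constant $1$ only on $\sF$, so $\phi-1$ is compactly supported, bounded and Lipschitz. Multiplication by such a function is clearly bounded on $L^2(\R^2)=H^0(\R^2)$ (by the $L^\infty$ bound) and on $H^1(\R^2)$ (by the product rule, using $\nabla\phi\in L^\infty$ with compact support); complex interpolation then yields boundedness on $H^r(\R^2)$ for every $r\in[0,1]$, which covers our range $r\in(0,1/2]$. Finally, since $\DD(\R^3)$ is dense in $H^{r+1/2}_{\su,\sp}(\R^3)$ (via the isomorphism $\Iup$ and the density of $\DD(\R^3)$ in $H^{r+1/2}(\R^3)$), the identity above extends by continuity, proving the asserted unique bounded extension.
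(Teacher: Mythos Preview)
Your proof is correct and follows essentially the same route as the paper: both factor $\tau_F$ as $\Ip^{-1}\circ M_\phi\circ \hat\tau_F\circ\Iup$ with the same multiplier $\phi=w_{\su}^{-1/2}\!\circ F$, and both reduce the claim to showing $M_\phi\in\Bou(H^r(\R^2))$. The only difference is that the paper invokes a multiplier theorem from Agranovich for $C^{0,1}_b$ functions, whereas you supply a self-contained interpolation argument between $L^2$ and $H^1$; both are valid for the required range $r\in(0,1/2]$.
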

\proof In order to avoid misunderstandings, throughout the present proof we shall temporarily indicate with $\hat{\tau}_F$ the usual trace operator introduced in Remark \ref{r:taubonded}, mapping $H^{r+1/2}(\R^3)$ into $H^r(\R^2)$ for $r \in (0,1/2]$. It can be checked by elementary computations that on $\DD(\R^3)$ we have 
\begin{equation}
\tau_F = \Ip^{-1}\,W_{\su}^{(F)}\, \hat{\tau}_F\,\Iup\;, \label{tauFW}
\end{equation}
where $\Ip,\Iup$ are the Banach isomorphisms introduced in item ii) of Lemma \ref{lemHW} and $W_{\su}^{(F)}$ indicates the multiplication operator defined by $(W_{\su}^{(F)} \up)(\xp) := w_{\su}^{-1/2}(F(\xp))\,\up(\xp)$.

Taking into account the identity \eqref{tauFW} and the fact that $\DD(\R^3)$ is a dense subset of $H^{r+1/2}_{\su,\sp}(\R^3)$, the thesis follows as soon as we can prove that $W_{\su}^{(F)} \in \Bou(H^r(\R^2))$ for $r \in (0,1/2]$. On the other hand, due to our assumption $F \in C^{0,1}_0(\R^2)$, we have $w_{\su}^{-1/2}(F(\cdot))\in C_b^{0,1}(\R^2)$ (i.e. $w_{\su}^{-1/2}(F(\cdot))$ is Lipschitz continuous and uniformly bounded); by \cite[p. 13, Th. 1.9.2]{agranovich}, this suffices to infer that $W_{\su}^{(F)}$ is indeed a bounded multiplier in $H^r(\R^2)$ for all $r \in (0,1)$ (hence, in particular, for $r \in (0,1/2]$), which yields the thesis in view of the previous considerations.
\endproof

\begin{rem}\label{remta0}
The proof of Lemma \ref{lemtaFw} can be easily generalized to cases where $F$ fulfills stronger regularity assumptions. In particular, for $F = 0$ one readily obtains
\begin{equation*}
\tau_0 \in \Bou(H^{r+1/2}_{\su,\sp}(\R^3),H^{r}_{\sp}(\R^{2})) \qquad \mbox{for all\, $r > 0$}\;.
\end{equation*}
\end{rem}

\section{Schr\"odinger operators $A_0$ and $A_F$, and their resolvents\label{s:res}}

In this section we give a rigorous definition of the operators $A_0$ and $A_F$ and obtain a Kre\u{\i}n's type formula for their resolvents. We remark that a rigorous definition of the operators $A_0$ and $A_F$ can also be obtained, more directly, starting from the associated quadratic form, see, e.g., \cite{BEL14, BEKS,ER16}. However, since we shall extensively use Kre\u{\i}n's type resolvent formulae, we prefer to use them  also to characterize  the domain and action of the operators $A_0$ and $A_F$. 

We  consider first the case in which the interaction is supported on the surface $\pi_F$, and give several definitions and results.  The corresponding definitions for the plane $\pi_0$ are simply obtained by setting $F=0$; obviously, all the results obtained for generic $F\in C_0^{0,1}(\R)$ remain true. Indeed, in the flat case, by exploiting the factorized structure of $\R^3$, several quantities can be explicitly computed and some results can be improved. We pursue this goal at the end of the section. 

\begin{rem}\label{r:Rfreebounded}It is a well known fact (see also Lemma \ref{lemRfW}) that 
\begin{equation*}
\Rf(z) \in \Bou(H^{r}(\R^3), H^{r+2}(\R^3))\qquad \mbox{for all $r\in\R$}\;,
\end{equation*}
where $\Rf(z)$ is the resolvent of the free Laplacian (see Eq.s \eqref{Lapfree} \eqref{Rfree}).
\end{rem}

\subsection{Resolvent and rigorous definition of $A_F$\label{ss:AF}}
Next, we  introduce several families of operators defined by means of  the trace operator $\tau_F$ and the free resolvent $\Rf(z)$. 

Taking into account Remarks \ref{r:taubonded} and \ref{r:Rfreebounded}, we define the operator
\begin{equation}
\Gc_F(z) := \tau_F \Rf(z) : L^2(\R^3) \to H^{1/2}(\R^{2})\qquad \mbox{with $z \in \C \backslash [0,+\infty)$}\;, \label{Gcdef}
\end{equation}
which admits a unique continuous extension 
\begin{equation*}
\Gc_F(z)  \in \Bou(H^{r-3/2}(\R^3), H^{r}(\R^{2}))\qquad \mbox{for $r\in(0,1/2]$}\;.
\end{equation*}

The corresponding adjoint (meant in the sense of the Sobolev duality $H^{-r}(\R^k) = (H^r(\R^k))'$) with conjugate spectral parameter is the \emph{single layer operator}
\begin{equation*}
G_F(z) := (\tau_F \Rf(\ov{z}))^\dag : H^{-1/2}(\R^2)\to L^2(\R^3)\qquad \mbox{with $z \in \C \backslash [0,+\infty)$}\;, \label{Gdef}
\end{equation*}
which admits a unique continuous extension
\begin{equation}
G_F(z) \in \Bou(H^{-r}(\R^{2}), H^{3/2-r}(\R^3)) \qquad \mbox{for $r\in(0,1/2]$} \;. \label{GExt}
\end{equation}

Next, taking into account Eq. \eqref{GExt}, we consider the trace of the single layer operator
\begin{equation*}
\Gt_F(z) := \tau_F G_F(z)\,,
\end{equation*}
so that  
\begin{equation}\label{3.4a}
\Gt_F(z) \in \Bou(H^{-1/2}(\R^{2}),H^{1/2}(\R^{2})).
\end{equation}
The Sobolev indices in the latter claim are fixed by the restriction $r\in(0,1/2]$ in  Eq. \eqref{GExt}, and by the fact that it must be $1-r \in  (0,1/2]$ as well. Let us notice that with respect to the $H^{-1/2}(\R^{2})$-$H^{1/2}(\R^{2})$ duality induced by the $L^{2}(\R^{2})$ scalar product, one has 
\begin{equation}\label{adj}
\Gt_F(z)^{*}=(\tau_{F}R_{\fr}(z)\tau_{F}^{*})^{*}=\Gt_F(\bar z)\,.
\end{equation}
\par
\begin{rem} Notice that the result in Eq. \eqref{3.4a} can be improved whenever $F$ is smooth: in this case $\Gt_F(z)\in \Bou(H^{r-1/2}(\R^{2}),H^{r+1/2}(\R^{2}))$ for any $r\in\R$ (see, e.g., \cite[Prop. 13]{Rab}).
\end{rem}

For any given $\al > 0$, we define the operator
\begin{equation} \label{3.6a}
\Ga_F(z) := \big(1 + \al\,\Gt_F(z)\big)\;.
\end{equation}

The following lemma guarantees the invertibility of the operator $\Ga_F(z)$ for $z$ far away from the real positive axis. Indeed, in view of \cite[Th. 2.19]{CFPinv}, one has that $\Ga^{-1}_F(z) \in \Bou(H^{r}(\R^{2}))$ for all $|r|<1/2$ and  $z \in \C\backslash [0,+\infty)$, see Theorem \ref{thm:RF} and Remark \ref{r:RF} below. 

\begin{lemma}\label{l:lipschitz} Let $d_z := \inf_{\lambda\in[0,+\infty)}|\lambda-z|$.  Then   there exists $z_0 \in \C\backslash[0,+\infty)$ such that
\begin{equation}\label{pigs2}
\Ga^{-1}_F(z) \in \Bou(H^{r}(\R^{2})) \qquad \mbox{for all $|r|<1/2$ and  $z \in \C$ such that  $d_z > d_{z_0}$}\;.
\end{equation}
\end{lemma}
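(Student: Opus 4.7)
The plan is to establish bounded invertibility of $\Ga_F(z) = 1 + \al\,\Gt_F(z)$ on $H^{r}(\R^{2})$, $|r|<1/2$, via the Neumann series $\Ga^{-1}_F(z) = \sum_{n\geq 0}(-\al)^{n}\Gt_F(z)^{n}$, which converges in $\Bou(H^{r}(\R^{2}))$ provided $\|\al\,\Gt_F(z)\|_{\Bou(H^{r}(\R^{2}))}<1$. The core task is therefore to show that $\|\Gt_F(z)\|$ can be made arbitrarily small by taking $d_z$ sufficiently large.

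Using the explicit kernel $\Rf(z)(\x,\y) = e^{i\sqrt{z}|\x-\y|}/(4\pi|\x-\y|)$ with $\Im\sqrt z>0$, $\Gt_F(z)=\tau_F\,\Rf(z)\,\tau_F^{*}$ is identified as the integral operator on $\R^{2}$ with kernel
\[
K_z(\xp,\yp) \;=\; \frac{e^{i\sqrt{z}\,\delta_F(\xp,\yp)}}{4\pi\,\delta_F(\xp,\yp)}\,, \qquad \delta_F(\xp,\yp):=\sqrt{|F(\xp)-F(\yp)|^{2}+|\xp-\yp|^{2}}\,.
\]
Since $\delta_F\geq |\xp-\yp|$ and $t\mapsto e^{-\Im\sqrt{z}\,t}/t$ is decreasing on $(0,+\infty)$, the pointwise bound $|K_z(\xp,\yp)|\leq e^{-\Im\sqrt{z}\,|\xp-\yp|}/(4\pi|\xp-\yp|)$ and Schur's test give $\|\Gt_F(z)\|_{\Bou(L^{2}(\R^{2}))}\leq 1/(2\Im\sqrt z)$. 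Choosing $z_0=-M$ with $M>(\al/2)^{2}$ makes this bound $<1/\al$, so the Neumann series converges in $\Bou(L^{2}(\R^{2}))$. The upgrade to $H^{r}(\R^{2})$ for $|r|<1/2$ is then obtained by bootstrap, combining the $L^{2}$-invertibility with the smoothing property $\Gt_F(z)\in\Bou(H^{-1/2},H^{1/2})$ of \eqref{3.4a} via the identity $\Ga^{-1}_F(z) = 1 - \al\,\Gt_F(z)\,\Ga^{-1}_F(z)$ for $r\in[0,1/2)$, and then by duality from $\Gt_F(z)^{*}=\Gt_F(\bar z)$ in \eqref{adj} and $d_{\bar z}=d_z$ for $r\in(-1/2,0)$.

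The main obstacle is that the Schur bound $1/(2\Im\sqrt z)$ does \emph{not} tend to zero as $d_z\to\infty$ in general: for $z_n=n^{2}+in$ one has $d_{z_n}=n\to\infty$ while $\Im\sqrt{z_n}\to 1/2$. To reach all $z$ with $d_z>d_{z_0}$ I would complement the above by a Fredholm argument. Namely, $\Gt_F(z)-\Gt_0(z)$ is compact on $H^{r}(\R^{2})$ because $\tau_F-\tau_0$ is supported in the compact set $\supp F$, so the smoothing of $\Rf(z)$ upgrades to compactness via Rellich's theorem; and $\Ga_0(z)=1+\al\,\Gt_0(z)$ is invertible on $H^{r}(\R^{2})$ throughout $\C\backslash[0,+\infty)$, thanks to the explicit partial-Fourier representation $\Gt_0(z)=\Foup^{-1}(2\sqrt{|\kp|^{2}-z})^{-1}\Foup$ and the sign condition $\Re\sqrt{|\kp|^{2}-z}>0$ (so that $2\sqrt{|\kp|^{2}-z}+\al\neq 0$ uniformly in $\kp$). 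Hence $\Ga_F(z)=\Ga_0(z)+\al(\Gt_F(z)-\Gt_0(z))$ is Fredholm of index $0$ on $H^{r}(\R^{2})$, and invertibility reduces to injectivity. For $z\notin\R$, the latter is obtained by taking imaginary parts in $\|u\|^{2}_{L^{2}}=-\al\,\la\tau_F^{*}u|\Rf(z)\tau_F^{*}u\ra$ and using $\Im\la g|\Rf(z)g\ra=\Im z\,\|\Rf(\bar z)g\|^{2}_{L^{2}(\R^{3})}$ together with injectivity of $\tau_F^{*}$, which forces $u=0$; for $z$ on the negative real axis far out, injectivity is already subsumed in the Neumann step.
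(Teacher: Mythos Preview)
Your argument is correct, but it takes a genuinely different route from the paper. The paper's proof is a single quantitative estimate: writing $\Gt_F(z)=\tau_F\Rf(z)\tau_F^{\dag}$ and working on the Fourier side, it shows that by sacrificing $2\eps$ derivatives one gains decay, namely
\[
\|\Rf(z)\|_{\Bou(H^{-r-1/2+\eps}(\R^{3}),H^{3/2-r-\eps}(\R^{3}))}\ \lec\ d_z^{-\eps}\,,
\]
because $\sup_{\k}(1+|\k|^{2})^{2-2\eps}/||\k|^{2}-z|^{2}\lec d_z^{-2\eps}$. Combined with the trace bounds this yields $\|\Gt_F(z)\|_{\Bou(H^{r}(\R^{2}))}\lec d_z^{-\eps}$ for every $|r|<1/2$, and the Neumann series then gives the explicit threshold $d_{z_0}=\alpha^{1/\eps}$. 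The whole proof is three lines of Fourier analysis.

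Your approach instead combines two mechanisms. Your Schur estimate $\|\Gt_F(z)\|_{\Bou(L^{2})}\leq 1/(2\Im\sqrt z)$ is sharp but, as you correctly observe, does not decay in $d_z$ along the real axis; so you patch the remaining region with a Fredholm argument (compact perturbation of the explicitly invertible flat operator $\Ga_0(z)$) and kill the kernel via the imaginary-part identity. This works, and in fact proves more than the lemma asks: your Fredholm step gives invertibility for every non-real $z$, essentially anticipating Theorem~\ref{thm:RF} without invoking \cite{CFPinv}. The trade-off is that your route is longer, imports compactness facts (Lemma~\ref{lemtaComW}) and the explicit structure of $\Gt_0(z)$ (Remark~\ref{LemGt0}) that in the paper appear only later, and does not produce a quantitative threshold. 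The paper's argument is self-contained at this point and strictly weaker in conclusion, deferring the full range $z\in\C\setminus[0,+\infty)$ to Theorem~\ref{thm:RF}.
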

\proof We start with the proof of the following statement: for any  $\eps\in (0,1/2)$,  and  $z \in \C \backslash [0,+\infty)$, on has 
\begin{equation}\label{pigs1}
\|\Gt_F(z)\|_{\Bou(H^{-1/2+\eps}(\R^{2}),H^{1/2 -\eps}(\R^{2}))} \lec d_z^{-\eps} \quad \left(d_z := \inf_{\lambda\in[0,+\infty)}|\lambda-z|\right) . 
\end{equation}

Recall that $\tau_F \in \Bou(H^{r+1/2}(\R^3),H^r(\R^{2}))$ for all $r\in(0,1/2]$, which suffices to infer $\tau_F^\dag \in \Bou(H^{-r}(\R^{2}),H^{-r-1/2}(\R^3))$ as well. From the identity $\Gt_F(z) = \tau_F \Rf(z) \tau_F^\dag$ we obtain the bound  
\[\begin{aligned}
& \|\Gt_F(z)\|_{\Bou(H^{-r+\eps}(\R^{2}),H^{1-r-\eps}(\R^{2}))}  \\ 
\leq & \,\|\tau_F\|_{\Bou(H^{3/2 - r -\eps}(\R^{3}),H^{1 - r -\eps}(\R^{2}))} \times  \\ 
&\times \|\Rf(z)\|_{\Bou(H^{-r -1/2+\eps}(\R^{3}),H^{3/2-r-\eps}(\R^{3}))} \|\tau_F^\dag\|_{\Bou(H^{-r+\eps}(\R^{2}),H^{-r-1/2+\eps}(\R^{3}))}\;.
\end{aligned}\]
Where we used $1-r-\eps \in (0,1/2]$ and $r-\eps \in(0,1/2]$. Then, the thesis follows as soon as we can show the norm bound
\begin{equation}
\|\Rf(z)\|_{\Bou(H^{-r -1/2+\eps}(\R^{3}),H^{3/2-r-\eps}(\R^{3}))}  \lec {1 \over d_z^\eps}\;. \label{BouRes}
\end{equation}
Let us fix arbitrarily $f \in H^{-r-1/2+\eps}(\R^3)$ and notice that, using the equivalent Sobolev norm \eqref{Equinorm} (see also Eq. \eqref{EquinormFou}), we have
\[\begin{aligned}
\|\Rf(z) f\|_{H^{3/2-r-\eps}(\R^3)}^2  & \lec \|(1-\lap)^{(3/2-r-\eps)/2}\Rf(z) f\|^2_{L^2(\R^3)} \\ 
& =  \int_{\R^3} d\k\;{(1 + |\k|^2)^{3/2-r-\eps} \over \big||\k|^2-z\big|^2}\; |(\Fou f)(\k)|^2\;. 
\end{aligned}
\]
Then, using the elementary estimates $||\k|^2-z| \geqs d_z$ and 
\[
\begin{aligned}
&\sup_{\k \in \R^3} \l({(1 + |\k|^2)^{3/2-r-\eps} (1 + |\k|^2)^{r+1/2-\eps}\over \big||\k|^2-z\big|^2}\r)  \\ 
\leqs &  \sup_{\k \in \R^3} \l({(1 + |\k|^2)^{2-2\eps} \over \big||\k|^2-z\big|^{2-2\eps} d_z^{2\eps}}\r) \lec {1 \over d_z^{2\eps}} 
\end{aligned}
\]
and making reference to the relations contained in Eq.s \eqref{Equinorm}-\eqref{EquinormIne}, we obtain
\[
\begin{aligned}
\|\Rf(z) f\|_{H^{3/2-r-\eps} (\R^3)}^2  \lec & {1 \over d_z^{2\eps}} \int_{\R^3} d\k\;(1 + |\k|^2)^{-r-1/2+\eps}\; |(\Fou f)(\k)|^2  \\ 
\lec & {1 \over d_z^{2\eps}}\; \|f\|_{H^{-r-1/2+\eps}(\R^3)}^2 \;; 
\end{aligned}
\]
in view of the arbitrariness of $f \in H^{-r-1/2+\eps}(\R^3)$, the above bound proves the statement \eqref{BouRes}, whence \eqref{pigs1}.

To prove claim \eqref{pigs2}, we note first that,  by  continuous Sobolev embedding, 
\[
\|\Gt_F(z)\|_{\Bou(H^{r}(\R^{2}))}  \lec\|\Gt_F(z)\|_{\Bou(H^{-1/2+\eps}(\R^{2}),H^{1/2 -\eps}(\R^{2}))}  \lec  {1 \over d_z^\eps }
\]
for any $r \in(-1/2,1/2)$, and $\eps\in(0,1/2)$ such that $-1/2+\eps \leq r \leq 1/2 -\eps$. Hence, $\|\Gt_F(z)\|_{\Bou(H^{r}(\R^{2}))}   \lec  {1 /d_z^\eps }$ for all $r \in (-1/2,1/2) $ and $\eps \in (0, 1/2-|r|)$. As a consequence,   for $r\in(-1/2,1/2)$,  $\Ga_F(z)\in  \Bou(H^{r}(\R^{2})) $ has bounded inverse for any $z$ such that $d_{z}>\alpha^{1/\epsilon}$.
\endproof

We note that  
\begin{equation*}\label{adjG}
(u,\Ga_F(z)\,v)_{H^{-1/2}(\R^{2}),H^{1/2}(\R^{2})} = \overline{(v,\Ga_F(\bar z)\, u)_{H^{-1/2}(\R^{2}),H^{1/2}(\R^{2})}}\,, 
\end{equation*}
with $u,v\in H^{-1/2}(\R^{2})$, and where $(\cdot,\cdot)_{X',X}$ denotes the duality product between $X$ and its dual $X'$, which follows immediately from the identity $\Gt_F(z)^{*} = \Gt_F(\bar z)$. Moreover,  for all $z,w\in\C\backslash [0,+\infty)$, on has 
\begin{equation} \label{diffGa}
\alpha^{-1}\Ga_F(z) -\alpha^{-1} \Ga_F(w) = (z-w)\,\Gc_F(w)\,G_F(z) \;.
\end{equation}
To prove the latter identity,  note that  
\[
\begin{aligned}
 \Gc_F(z) -  \Gc_F(w) =\, & \tau_F \big(\Rf(z)-\Rf(w)\big) \\
 =\, & (z-w)\,\tau_F \Rf(z)\Rf(w) =  (z-w)\,\Gc_F(z)\Rf(w)\,.
\end{aligned}
\]
By taking the adjoint (in $\bar z$ and $\bar w$) it follows that  $ G_F(z) -  G_F(w) = (z-w)\Rf(w)G_F(z)$. Hence, 
\begin{equation*}
\Gt_F(z) - \Gt_F(w) =  \tau_F\big(G_F(z) -  G_F(w)\big) = (z-w)\,\Gc_F(w)\,G_F(z) \,,
\end{equation*}
from which Identity \eqref{diffGa} readily follows. 

By Lemma \ref{l:lipschitz},  \cite[Th. 2.1]{Pos01}  and   \cite[Th. 2.19]{CFPinv}, we have the following

\begin{thm}\label{thm:RF} There holds
\begin{equation} \label{3.13a}
\Ga^{-1}_F(z) \in \Bou(H^{r}(\R^{2})) \qquad \mbox{for all $|r|<1/2$ and  $z \in \C\backslash [0,+\infty)$}\;,
\end{equation}
and  the bounded linear operator
\begin{equation}
R_F(z) := \Rf(z) - \al\,G_F(z)\,\Ga^{-1}_F(z)\,\Gc_F(z) \qquad \mbox{with $z\in \C\backslash [0,+\infty)$}
\label{RG}
\end{equation}
is the resolvent of the self-adjoint operator $A_F$ which coincides with $A_\fr$  on $H^2(\R^3 \backslash \pi_F)$ and which is defined by 
\[\Dom(A_F) := \big\{f\in L^2(\R^3) \; \big|\; f = f_z  - \al\,G_F(z)\,\Ga^{-1}_F(z)\,\tau_F f_z\,,\; f_z \in H^2(\R^3)\big\}\]
\[ (A_F - z) f := (A_\fr -z) f_z\;.\]
\end{thm}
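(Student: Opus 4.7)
The plan is to invoke two external results in combination with the preparation done in this subsection: the abstract Kreĭn-type extension theorem of Posilicano \cite[Th.~2.1]{Pos01} and the Sobolev-regularity result of \cite[Th.~2.19]{CFPinv}. First, I would verify the input hypotheses of \cite[Th.~2.1]{Pos01}. The operator $\Gc_F(z)$ is the continuous trace of the free resolvent (see \eqref{Gcdef} and Remark \ref{r:taubonded}); the single-layer $G_F(z)$ is its adjoint in the sense of Sobolev duality; the boundary operator $\Gt_F(z)$ satisfies the symmetry relation \eqref{adj} and the first-resolvent-type identity \eqref{diffGa}. Lemma \ref{l:lipschitz} furnishes a single spectral parameter $z_0 \in \C\setminus[0,+\infty)$ at which $\Ga_F(z_0)$ is boundedly invertible on $H^{r}(\R^{2})$ for $|r|<1/2$. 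These are exactly the data required by \cite[Th.~2.1]{Pos01}, which then uniquely produces a self-adjoint operator $A_F$ on $L^2(\R^3)$ whose resolvent is given by the Kreĭn formula \eqref{RG}, together with the stated description of $\Dom(A_F)$ and of the action $(A_F - z)f = (A_\varnothing - z) f_z$.

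Next I would extend the invertibility of $\Ga_F(z)$ from the subregion $\{d_z > d_{z_0}\}$ provided by Lemma \ref{l:lipschitz} to all of $\C\setminus[0,+\infty)$. Since $\alpha>0$ the heuristic form associated with $A_F$ is non-negative, so the self-adjoint operator just produced satisfies $\sigma(A_F)\subseteq[0,+\infty)$; in particular $R_F(z)\in\Bou(L^2(\R^3))$ for every $z\in\C\setminus[0,+\infty)$, which, by the internal consistency of the Kreĭn framework (i.e.\ rewriting \eqref{RG} and using \eqref{diffGa} to transport invertibility along the resolvent set), forces $\Ga_F(z)$ to be invertible on $H^{1/2}(\R^{2})$ for every such $z$. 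To upgrade the range of invertibility to the full Sobolev window $|r|<1/2$, I appeal to \cite[Th.~2.19]{CFPinv}: its abstract hypotheses are verified by the mapping properties of $\Gc_F(z)$ and $G_F(z)$ summarized above together with identity \eqref{diffGa}. Triviality of $\ker\Ga_F(z)$ on each $H^r(\R^{2})$ with $|r|<1/2$ follows by embedding/duality against the $H^{1/2}$ case.

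Finally, I would check the compatibility of $A_F$ with $A_\varnothing$ on $H^2(\R^3\setminus\pi_F)$. If $f\in\Dom(A_F)\cap H^2(\R^3\setminus\pi_F)$, write $f = f_z - \alpha\,G_F(z)\,\Ga_F^{-1}(z)\,\tau_F f_z$ with $f_z\in H^2(\R^3)$. The single-layer term $G_F(z)q$ with $q:=\Ga_F^{-1}(z)\tau_F f_z$ satisfies $(-\Delta - z)G_F(z)q = 0$ distributionally on $\R^3\setminus\pi_F$ (the singular contribution is supported on $\pi_F$ and produces the transmission condition $[\de_\bn f]_{\pi_F}=\alpha\,f\!\rist_{\pi_F}$). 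Therefore $(A_F - z)f = (-\Delta - z)f_z = (-\Delta - z)f$ on $\R^3\setminus\pi_F$, giving the claimed coincidence.

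The main technical obstacle, in my view, is upgrading the invertibility of $\Ga_F(z)$ from a single faraway spectral point (Lemma \ref{l:lipschitz}) to the entire cut-plane $\C\setminus[0,+\infty)$ simultaneously with the full Sobolev range $|r|<1/2$; once the two black boxes \cite[Th.~2.1]{Pos01} and \cite[Th.~2.19]{CFPinv} are fed the right preparatory data, the domain characterization and the identification with $A_\varnothing$ on $H^2(\R^3\setminus\pi_F)$ are essentially bookkeeping.
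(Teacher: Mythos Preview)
Your proposal is correct and follows essentially the same route as the paper: invoke Lemma \ref{l:lipschitz} together with \cite[Th.~2.1]{Pos01} to construct the self-adjoint $A_F$, then use \cite[Th.~2.19]{CFPinv} (fed with \eqref{diffGa} and the symmetry \eqref{adj}) plus the spectral inclusion $\sigma(A_F)\subseteq[0,+\infty)$ to extend the invertibility of $\Gamma_F(z)$ to all of $\C\setminus[0,+\infty)$. Two minor slips: the invertibility is on $H^r(\R^2)$ for $|r|<1/2$ (strict), not on $H^{1/2}(\R^2)$, and the paper establishes $\sigma(A_F)\subseteq[0,+\infty)$ via the explicit positivity argument of Remark \ref{r:spectrum} rather than by appealing to a form bound.
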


\begin{rem}\label{r:RF} Few comments on the proof of Theorem \ref{thm:RF} are in order. By Lemma \ref{l:lipschitz} and  \cite[Th. 2.1]{Pos01} one has that $R_F(z)$, given in Eq. \eqref{RG}, is the resolvent of the operator $A_F$ for $d_z$ large enough. Hence, \cite[Th. 2.19]{CFPinv}, together with Eq. \eqref{diffGa} and the fact that $\Gamma_F(z)^* = \Gamma_F(\bar z)$ (see Eq. \eqref{adj} and \eqref{3.6a}) implies that $\Ga_F(z)$ is invertible for all $z\in \rho(\Afree)\cap \rho(A_F)$, where $\rho(\Afree)$ and $\rho(A_F)$ denote the resolvent set of $\Afree$ and $A_F$ respectively.  That $ \rho(\Afree)\cap \rho(A_F) \subseteq \C\backslash [0,+\infty)$ is a consequence of Remark \ref{r:spectrum} below. Indeed, we shall prove that $ \sigma (A_F) = \sigma(\Afree)  =[0,+\infty)$, hence  $\rho(\Afree)\cap \rho(A_F) =  \C\backslash [0,+\infty)$, see Remark \ref{remWd22}. 
\end{rem}

With $f$ given as in $\Dom(A_F)$ one has 
\[
\tau_F f = \tau_F f_z - \al \, \Gt_F(z) \Ga^{-1}_F(z)\tau_F f_z =  \al \,\Ga^{-1}_F(z) \tau_F f_z\;.
\]

Moreover, indicating with $\bn$ the unit vector normal to the surface $\pi_F$ pointing to the right and with $\de_\bn$ the derivative in the direction normal to $\pi_F$, using the notations introduced in Eq. \eqref{jump} one has 
\[
[\de_\bn f]_{\pi_F} = -\, \al\, [\de_\bn G_F(z)\,\Ga^{-1}_F(z)\,\tau_F f_z]_{\pi_F}\;. 
\]
Since $ [\de_\bn G_F(z)\,u ]_{\pi_F} = -\, u$ (see, e.g., \cite{McL}), the operator $A_F$ can be also characterized as 
\begin{equation*}\label{251}
\Dom(\AA_F) = \big\{f \in H^2(\R^3 \backslash \pi_F)\;\big|\; f\!\rist_{\pi_F^+}\, = f\!\rist_{\pi_F^-}\, = f\!\rist_{\pi_F}\,,\; [\de_\bn f]_{\pi_F} = \al\,f\!\rist_{\pi_F}\!\big\}\,,
\end{equation*}
\begin{equation*}\label{252}
A_F f = -\Delta f \qquad \forall\,\x \in \R^3\backslash \pi_F \quad\! \textrm{and}\quad\! f\in \Dom(\AA_F) \;.
\end{equation*}

The operator $A_F$ corresponds to  the singular perturbation of the free Laplacian formally written as in Eq. \eqref{AF}, and given by  a delta-type potential of strength $\alpha$ supported on the surface $\pi_F$. 

\begin{rem}\label{r:spectrum}
By the definition of $\Gt_F(z)$ one has that, for all $\lambda>0$
\[(u, \Gt_F(-\lambda)\, u )_{L^2(\R^2)} = (\tau_F^* u,  \Rf(-\lambda)\,\tau_F^* u)_{H^{-1}(\R^3),H^{1}(\R^3)}\geqs 0, \]
because  $ \Rf(-\lambda)$ is a positive definite operator.  Hence,
\[
\|u\|_{L^2(\R^2)} \|(1+\alpha \Gt_F(-\lambda))u\|_{L^2(\R^2)} \geqs \big (u, (1+\alpha \Gt_F(-\lambda)) u \big)_{L^2(\R^2)}  \geqs \|u\|^2_{L^2(\R^2)},
\]
which in turn implies that the inverse $\Gamma_F^{-1}(-\lambda) = (1+\alpha \Gt_F(-\lambda))^{-1}$ is a well-defined and bounded operator in $L^2(\R^2)$ for all $\lambda>0$. This argument, together with the fact that for $\lambda>0$ the operators $\Rf(-\lambda)$, $G_F(-\lambda)$, and $\Gc_F(-\lambda)$ are bounded in $L^2(\R^k)$ $(k = 2,3)$, allows us to infer that $R_F(-\lambda) \in \Bou(L^2(\R^3))$ and  $\sigma(A_F) \subseteq [0,+\infty)$.
\end{rem}

\subsection{Resolvent and rigorous definition of $A_0$\label{ss:A0}}  
Obviously, all the results stated in the previous section hold true for the case $F=0$ as well. In particular, by Theorem \ref{thm:RF}, the operator 
\begin{equation*}
\Ga_0(z)= 1 + \al\,\Gt_0(z)
\end{equation*}
is certainly invertible in $H^r(\R^2)$ for all $|r|\leqs 1/2$ and $z\in\C\backslash[0,+\infty)$ (indeed it admits a bounded inverse for all  $r\in\R$, see Lemma \ref{LemGt0} below).  Hence,  the operator $R_0(z)$ is well defined and it is the resolvent of the self-adjoint operator 
\begin{equation*}
\Dom(\AA_0) = \big\{f \in H^2(\R^3 \backslash \pi_0)\;\big|\; f\!\rist_{\pi_0^+}\, = f\!\rist_{\pi_0^-}\, = f\!\rist_{\pi_0}\,,\; [\de_\bn f]_{\pi_0} = \al\,f\!\rist_{\pi_0}\!\big\}\,,
\end{equation*}
where we used the notations of Eq. \eqref{jump}, $\de_\bn$ denotes  the derivative in the direction normal to the plane $\pi_0$, $(\de_\bn f)\!\rist_{\pi_0^\pm}\, \equiv (\partial_{x^1} f)\!\rist_{x^1=0^\pm}$;
\[
A_0 f = -\Delta f \qquad \forall\,\x \in \R^3\backslash \pi_0 \quad\! \textrm{and}\quad\! f\in \Dom(\AA_0) \;.
\]
\begin{rem}\label{r:R0}
Explicitly $R_0(z)$ reads 
\begin{equation}\label{R0}
R_0(z) := \Rf(z) - \al\,G_0(z)\,\Ga^{-1}_0(z)\,\Gc_0(z)\,,
\end{equation}
and it is a bounded operator for all $z\in \C\backslash [0,+\infty)$.
\end{rem}

Before proceeding further let us point out some special properties of the operators $\Gc_0(z)$, $G_0(z)$,  and $\Gt_0(z)$. For all $r > 0$, since  $\tau_0 : H^{r+1/2}(\R^3) \to H^{r}(\R^{2})$, we have
\begin{equation*}
\Gc_0(z)  \in \Bou(H^{r-3/2}(\R^3), H^{r}(\R^{2}))\;,\qquad 
G_0(z) \in \Bou(H^{-r}(\R^{2}), H^{3/2-r}(\R^3))\;.
\end{equation*}

Moreover let us recall a fact which was proven in \cite{CFP}; namely that  there holds true the identity
\begin{equation*}
\Gt_0(z) = {i \over 2}\,(z+\lap_\parallel)^{-1/2} \,\in\, \Bou(L^2(\R^{2})) \qquad \mbox{for any $z \in \C \backslash [0,+\infty)$} \;.
\end{equation*}
In view of the above result, it appears that $\Gt_0(z)$ sends $H^r(\R^{2})$ continuously into $H^{r+1}(\R^{2})$ for any $r \in \R$; in turn, this allows to infer a stronger version of the statement in Eq. \eqref{3.13a} in the particular case $F = 0$:
\begin{rem}\label{LemGt0} For any $z \in \C \backslash [0,+\infty)$ and any $r \in \R$, $\Ga^{-1}_0(z)\in \Bou(H^{r}(\R^{2}))$. To see that this is indeed the case, let $u \in H^r(\R^{2})$; then, using the equivalent Fourier norm on $H^r(\R^2)$, one obtains
$$ \|\Ga^{-1}_0(z)\,u\|_{H^r(\R^{2})}^2 \lec \int_{\R^{2}}d\kp\;(1+|\kp|^2)^r \l|{\sqrt{|\kp|^2-z} \over {\al \over 2} + \sqrt{|\kp|^2-z}}\r|^2 |(\Fou \,u)(\kp)|^2\;. $$
Note that for all $\al > 0$ and all $z \in \C \backslash [0,+\infty)$, the map
$$ (0,+\infty) \ni k \mapsto \l|{\sqrt{k^2-z} \over {\al \over 2} + \sqrt{k^2-z}}\r|^2 $$
is bounded. This and the previous bound suffice to infer the claim.
\end{rem}

\begin{rem} \label{r:factor}
The operator $\AA_0$ can be equivalently represented as
\begin{equation}
\AA_0 = \Hu_0 \!\otimes \uno_{\parallel} + \uno_{1} \otimes \Hp_0\;, \label{AA0fact}
\end{equation}
where $\uno_{1}$, $\uno_{\parallel}$ indicate, respectively, the identity operators on $L^2(\R)$, $L^2(\R^2)$, and
\begin{eqnarray}
&\dd{\Hu_0 \!:= -\,\de_{x^1 x^1} : \Dom \Hu_0 \subset L^2(\R) \to L^2(\R) \,,} &  \label{A01_1}\\ 
&\dd{\Dom \Hu_0\! := \{u \!\in\! H^1(\R)\!\cap\! H^2(\R \backslash \{0\})\,|\, u'(0^{+}) - u'(0^{-}) = \al\,u(0)\}\,,\quad\;} & \label{A01_2}
\end{eqnarray}
\[ 
\Hp_0 \!:= -\,\lapp : H^2(\R^2) \to L^2(\R^2) \;.
\]
In other words, $\Hu_0$ denotes the self-adjoint operator on $L^2(\R)$ corresponding to the formal expression ``$-\,\de_{x^1 x^1} + \al\,\delta_{\{x^1 = 0\}}$'' (see \cite[Th. 3.1.1]{AlbBook}) and $\Hp_0$ is the free Laplacian on the plane $\pi_0$. We recall that $\sigma(\Hp_0) = \sigma_{ac}(\Hp_0) = [0,+\infty)$ and that, for $\alpha>0$, one has $\sigma(\Hu_0) = \sigma_{ac}(\Hu_0) = [0,+\infty)$ (see \cite[Th. 3.1.4]{AlbBook}).  Hence, 
$$\sigma(A_0) =\sigma_{ac}(A_0) =  [0,+\infty)\,.$$
\end{rem}

\subsection{A formula for the difference of the resolvents}
We conclude this section  by pointing  out the following basic identity:
\begin{equation}\label{ResDiff}
\begin{aligned}
& R_F(z) - R_0(z)  
=  -\,\al\Big[G_0(z)\,\Ga^{-1}_0(z)\,\big(\Gc_F(z) - \Gc_0(z)\big) \\ 
 +&   
\big(G_F(z) - G_0(z)\big)\,\Ga^{-1}_F(z)\,\Gc_F(z)
 +  G_0(z)\,\big(\Ga^{-1}_F(z)- \Ga^{-1}_0(z)\big)\,\Gc_F(z)\Big]\,,
 \end{aligned}
\end{equation}
which can be easily derived from Eq.s \eqref{RG} and \eqref{R0} by addition and subtraction of identical terms. 

\section{The limiting absorption principle\label{s:LAP}}

In the present section we derive Limiting Absorption Principles (LAPs) for the resolvent operators $R_0(z)$ and $R_F(z)$; more precisely, we show that, for any $\lam \in (0,+\infty)$, the limits $\eps \downarrow 0$ of $R_0(\lam \pm i \eps)$ and $R_F(\lam\pm i \eps)$ determine bounded operators on suitable functional spaces (namely, on weighted Sobolev spaces, see Section \ref{ss:wss}).
The results obtained here will be employed in the forthcoming Sections \ref{s:wo} and \ref{s:sm}, concerning respectively the existence and completeness of the wave operators and the scattering matrix associated to the couple $(A_F,A_0)$.

Our approach mainly consists of the following two steps: first, we derive LAP for $R_0(z)$, starting from simpler lower dimensional operators and employing a result of Ben-Artzi and Devinatz \cite{BAD83} (see also \cite{BAD87}) about sums of tensor products; then, we determine an analogous result for $R_F(z)$, using some abstract perturbation techniques of Renger \cite{Ren}.

\subsection{The limiting absorption principle for $R_0(z)$}

Let us consider the unperturbed resolvent $R_0(z)$ and the corresponding self-adjoint operator $\AA_0$. 

We recall  that the operator $\AA_0$ is factorized as $\AA_0 =  \Hu_0 \!\otimes \uno_{\parallel} + \uno_{1} \otimes \Hp_0$, see Remark \ref{r:factor}.  For any $z \in \C \backslash [0,+\infty)$ we consider the resolvent operators
\[
\Ru_0(z) := (\Hu_0-z)^{-1} : L^2(\R) \to \Dom \Hu_0 
\]
and 
\[
\Rp_0(z) := (\Hp_0-z)^{-1} : L^2(\R^2) \to H^2(\R^2)\;. 
\]

The result of Ben-Artzi and Devinatz we refer to is Th. 3.8 in \cite{BAD83} which, in our setting,  grants the validity of LAP for the  operator $A_0$ in the factorized form \eqref{AA0fact} mentioned above. The cited theorem of \cite{BAD83} is here employed by setting $H_1 := \Hp$, $H_2 := \Hu$,  $\Lambda := (-\infty,0]$, and  $U = (0,+\infty)$. The corresponding hypotheses on $H_1$ and $H_2$ appear to be fulfilled, respectively, in consequence of the results reported in \cite[Example 2.2]{BAD83} and of Theorem \ref{lemLAP1} below. 

\begin{thm}\label{lemLAP1}
Assume that $\te \in (0,1/2)$ and let $\su > 1/2$. Then, for any $\lam \in (0,+\infty)$, the limits
\begin{equation*}
R^{(1),\pm}_{0}(\lam) := \lim_{\eps \downarrow 0} \Ru_0(\lam \pm i \eps)
\end{equation*}
exist in $\Bou(L^2_{\su}(\R),H^{1+\te}_{-\su}(\R))$ and the convergence is uniform in any compact subset $\KK \subset (0,+\infty)$.
\end{thm}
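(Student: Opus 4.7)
I would reduce the LAP for $\Hu_0$ to that for the free 1D Laplacian via a Kre\u{\i}n-type resolvent formula of the kind discussed in Section \ref{s:res}, and then handle the resulting finite-rank correction directly by exploiting the explicit one-dimensional structure. Specializing to $k=1$, $F\equiv 0$ (so that $\tau_0 f=f(0)$) and using the explicit convolution kernel $\tfrac{i}{2\sqrt z}\,e^{i\sqrt z\,|x^1-y|}$ of the free 1D resolvent $R^{(1)}_\fr(z):=(-\de_{x^1 x^1}-z)^{-1}$, one computes
\begin{equation*}
G_0^{(1)}(z)\,c \,=\, \tfrac{i\,c}{2\sqrt z}\,e^{i\sqrt z\,|x^1|},\quad \breve G_0^{(1)}(z)\,f \,=\, \tfrac{i}{2\sqrt z}\!\int_\R e^{i\sqrt z\,|y|}\,f(y)\,dy,\quad \Gt_0^{(1)}(z)\,=\,\tfrac{i}{2\sqrt z},
\end{equation*}
so $\Ga_0^{(1)}(z)=1+\tfrac{i\al}{2\sqrt z}$ is a nonvanishing scalar, and Theorem \ref{thm:RF} reduces to
\begin{equation*}
\Ru_0(z) \,=\, R^{(1)}_\fr(z) \,-\, \frac{\al}{\,1+\tfrac{i\al}{2\sqrt z}\,}\;G_0^{(1)}(z)\,\breve G_0^{(1)}(z).
\end{equation*}
The scalar prefactor is continuous in $\sqrt z$ and has a well-defined non-vanishing limit as $\eps\downarrow 0$ for any $\lam\in K\subset(0,+\infty)$, approached from either side of the cut.

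\textbf{LAP for the free resolvent.} Since $s_1>1/2$ implies $w_{-s_1}\in L^1(\R)$, conjugation by $w_{\pm s_1}^{1/2}$ turns $R^{(1)}_\fr(z)$ into an integral operator whose kernel is pointwise bounded by $\tfrac{1}{2\sqrt\lam}$ and has integrable weights on either side, hence is Hilbert--Schmidt with norm uniformly controlled for $\eps\geqs 0$ and $\lam\in K$. Dominated convergence on the Hilbert--Schmidt kernel then yields $R^{(1)}_\fr(\lam\pm i\eps)\to R^{(1),\pm}_\fr(\lam)$ in $\Bou(L^2_{s_1}(\R),L^2_{-s_1}(\R))$, uniformly on $K$. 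The resolvent identity $-\de_{x^1 x^1}R^{(1)}_\fr(z)=\uno+z\,R^{(1)}_\fr(z)$, together with the embedding $L^2_{s_1}\cemb L^2_{-s_1}$ and a weighted integration by parts that exploits the pointwise bound $|w_{-s_1}'|\lec w_{-s_1}$, upgrades this to convergence in $\Bou(L^2_{s_1}(\R),H^2_{-s_1}(\R))$, and in particular in the target $H^{1+\theta}_{-s_1}(\R)$ for every $\theta\in(0,1/2)$.

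\textbf{LAP for the correction.} The Kre\u{\i}n correction is rank-one; the prefactor is continuous and $\breve G_0^{(1)}(z):L^2_{s_1}(\R)\to\C$ converges uniformly on $K$ by dominated convergence. The range vector $G_0^{(1)}(z)\cdot 1$ is a constant multiple of $e^{i\sqrt z\,|x^1|}$: its distributional derivative, proportional to $\sgn(x^1)\,e^{i\sqrt z\,|x^1|}$, is bounded but jumps at $x^1=0$, so that $e^{i\sqrt z\,|x^1|}\in H^{3/2-\eta}_{\loc}(\R)\setminus H^{3/2}_{\loc}(\R)$ for every $\eta>0$. Multiplied by $w_{-s_1}^{1/2}$ with $s_1>1/2$, it lies in $H^{1+\theta}_{-s_1}(\R)$ for every $\theta<1/2$, with continuous dependence on $\sqrt z$. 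The sharp threshold $\theta<1/2$ is the main obstacle and does not come from the free resolvent (which tolerates $\theta<1$) but from the intrinsic corner imposed by the $\delta$-interaction: one has $\Dom(\Hu_0)\subset H^{3/2-\eta}(\R)$ for every $\eta>0$ but $\Dom(\Hu_0)\not\subset H^{3/2}(\R)$. Accordingly, the delicate step is the uniform weighted $H^{1+\theta}$ quantification of $G_0^{(1)}(z)\cdot 1$; once this is secured, combining with the preceding two items closes the LAP for $\Ru_0(z)$ in $\Bou(L^2_{s_1}(\R),H^{1+\theta}_{-s_1}(\R))$, uniformly on $K$.
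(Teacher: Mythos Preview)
Your proposal is correct and proceeds along a route that, while using the same explicit kernel as the paper, is organized quite differently. The paper works directly with the full integral kernel of $\Ru_0(z)$ and estimates the weighted $H^{1+\te}$ norm by hand in three stages ($L^2$, $H^1$, then the fractional seminorm, the last being split into the pieces $\SJ^{(1)},\SJ^{(2)},\SJ^{(3)}$ and then $J_1,J_2,J_3$). You instead separate $\Ru_0(z)$ into the free resolvent plus a rank-one Kre\u{\i}n correction, invoke the classical Agmon LAP to dispatch the free part in $\Bou(L^2_{\su},H^2_{-\su})$ (hence a fortiori in $H^{1+\te}_{-\su}$), and reduce the correction to the analysis of the single explicit vector $e^{i\sqrt{z}\,|x^1|}$ in weighted $H^{1+\te}$. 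Your approach is more modular and makes transparent the origin of the constraint $\te<1/2$: it comes solely from the corner of $e^{i\sqrt{z}\,|x^1|}$ at the origin, whereas in the paper this emerges implicitly in the $J_3$ estimate. Two small points to tighten: for the free part, the upgrade from $L^2_{-\su}$ to $H^2_{-\su}$ is cleanest if you also estimate $(R^{(1)}_\fr f)'$ directly via its bounded kernel $-\tfrac12\,\sgn(x-y)\,e^{i\sqrt{z}\,|x-y|}$ (your ``weighted integration by parts'' is otherwise a bit circular for the $H^1$ piece); for the correction, the continuity in $z$ of $e^{i\sqrt{z}\,|\cdot|}$ in $H^{1+\te}_{-\su}$ follows easily once you observe that the difference $e^{i\sqrt{z}\,|x|}-e^{i\sqrt{\lam}\,|x|}$ vanishes at $x=0$, so the jump in the derivative cancels and dominated convergence in the Gagliardo seminorm goes through without obstruction.
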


The proof of the latter  theorem is based on a series of explicit estimates and it is rather lengthy; for this reason we postpone it to   \ref{app:A}. 

Using the previous Theorem \ref{lemLAP1} and some known results of Agmon \cite{AgmonLAP}, and of Ben-Artzi and Devinatz \cite{BAD83} we can infer the following Proposition.
\begin{prop}[LAP for $A_0$]\label{PropLAP0}
Assume that $\te \in (0,1/2)$ and let $\siu,\sip > 1/2\,$. Then, for any $\lam \in (0,+\infty)$, the limits
\begin{equation}
R^{\pm}_{0}(\lam) := \lim_{\eps \downarrow 0} R_0(\lam \pm i \eps) \label{R0lpm}
\end{equation}
exist in $\Bou(L^2_{\siu,\sip}(\R^3),H^{1+\te}_{-\siu,-\sip}(\R^3))$ and the convergence is uniform in any compact subset $\KK \subset (0,+\infty)$; in particular,
\begin{equation*}
R^{\pm}_{0}(\lam) \in \Bou(L^2_{\siu,\sip}(\R^3),L^2_{-\siu,-\sip}(\R^3))\;. \label{LAP0eq}
\end{equation*}
\end{prop}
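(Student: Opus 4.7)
Since $A_{0}$ splits as the sum of commuting tensor products
$$ A_{0} = \Hu_{0} \otimes \uno_{\parallel} + \uno_{1} \otimes \Hp_{0}$$
on $L^{2}(\R^{3}) \cong L^{2}(\R) \otimes L^{2}(\R^{2})$, the natural strategy is to apply the Ben-Artzi--Devinatz theorem on sums of tensor products, \cite[Th. 3.8]{BAD83}, which is precisely designed for this factorized setting. I would verify the hypotheses of that abstract result for each factor separately, assemble the conclusion in the tensor-product weighted setting, and then transport it to the weighted Sobolev spaces on $\R^{3}$ appearing in the statement.

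For the parallel factor, $\Hp_{0}=-\lap_{\parallel}$ is the free Laplacian on $\R^{2}$, and Agmon's classical LAP (recalled as \cite[Example 2.2]{BAD83}) gives, for any $\sip > 1/2$ and any compact $\KK \subset (0,+\infty)$, uniform convergence of $(\Hp_{0} - (\lambda \pm i\eps))^{-1}$ in $\Bou(L^{2}_{\sip}(\R^{2}), H^{2}_{-\sip}(\R^{2}))$ as $\eps \downarrow 0$. For the perpendicular factor $\Hu_{0}$, Theorem \ref{lemLAP1} above furnishes exactly the analogous statement in $\Bou(L^{2}_{\siu}(\R), H^{1+\te}_{-\siu}(\R))$, uniformly on compacts of $(0,+\infty)$, for any $\siu > 1/2$ and $\te \in (0,1/2)$. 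Since $\sigma(\Hu_{0}) = \sigma(\Hp_{0}) = [0,+\infty)$, one takes $\Lambda = (-\infty,0]$ and $U = (0,+\infty)$ in the notation of \cite{BAD83}, so that the additive spectral structure $\sigma(A_{0}) = [0,+\infty)$ fits the framework of their Theorem 3.8.

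Applying \cite[Th. 3.8]{BAD83} then yields existence of the boundary values
$$R_{0}^{\pm}(\lam) = \slim_{\eps \downarrow 0}(A_{0} - (\lam \pm i\eps))^{-1}$$
as bounded operators from the Hilbert tensor product $L^{2}_{\siu}(\R) \otimes L^{2}_{\sip}(\R^{2})$ into a target tensor product in which each factor inherits the regularity of the corresponding one-dimensional LAP, again with uniform convergence on any compact $\KK \subset (0,+\infty)$. By item $ii)$ of Lemma \ref{lemHW} (applied on both the source and, via duality, on the target), one identifies
$$L^{2}_{\siu}(\R) \otimes L^{2}_{\sip}(\R^{2}) \cong L^{2}_{\siu,\sip}(\R^{3})\,,$$
while item $iv)$ of the same lemma, used with $\ru = 1+\te$ and $\rp = 2$ (so that $\min(\ru,\rp) = 1+\te$ since $\te < 1/2$), gives
$$H^{1+\te}_{-\siu}(\R) \otimes H^{2}_{-\sip}(\R^{2}) \cemb H^{1+\te}_{-\siu,-\sip}(\R^{3})\,.$$
Composition of the Ben-Artzi--Devinatz statement with these two embeddings delivers $R_{0}^{\pm}(\lam) \in \Bou(L^{2}_{\siu,\sip}(\R^{3}), H^{1+\te}_{-\siu,-\sip}(\R^{3}))$ with the required uniform convergence on compacts. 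The membership in $\Bou(L^{2}_{\siu,\sip}(\R^{3}), L^{2}_{-\siu,-\sip}(\R^{3}))$ then follows from item $i)$ of Lemma \ref{lemHW}, which provides the continuous embedding $H^{1+\te}_{-\siu,-\sip}(\R^{3}) \cemb L^{2}_{-\siu,-\sip}(\R^{3})$.

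The main obstacle is essentially bookkeeping: matching the weighted-space conventions of \cite{BAD83} with the ones adopted here, and in particular identifying tensor products of weighted $L^{2}$ (respectively, weighted Sobolev) spaces with the corresponding weighted spaces on $\R^{3}$. All the nontrivial analytic content has already been isolated in Theorem \ref{lemLAP1}, which supplies the only non-standard ingredient (the 1D Laplacian with a $\delta$-interaction); once Lemma \ref{lemHW} is available, Proposition \ref{PropLAP0} reduces to the abstract Ben-Artzi--Devinatz machinery.
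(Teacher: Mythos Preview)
Your proposal is correct and follows essentially the same route as the paper: factorize $A_0$, invoke Theorem \ref{lemLAP1} for $\Hu_0$ and Agmon's LAP (\cite[Example 2.2]{BAD83}) for $\Hp_0$, apply \cite[Th. 3.8]{BAD83} with $\Lambda=(-\infty,0]$, $U=(0,+\infty)$, and pass from the tensor-product spaces to the weighted spaces on $\R^3$ via Lemma \ref{lemHW}. One small slip: the final embedding $H^{1+\te}_{-\siu,-\sip}(\R^{3}) \cemb L^{2}_{-\siu,-\sip}(\R^{3})$ is item $iii)$ of Lemma \ref{lemHW}, not item $i)$.
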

\proof Let us consider the representation \eqref{AA0fact} of $\AA_0$ as a sum of tensor products involving the reduced operators $\Hu_0$ and $\Hp_0$. Recall that Theorem \ref{lemLAP1} gives LAP for the resolvent operator $\Ru_0(z)\,$. Moreover, let us mention the following analogous result for the resolvent $\Rp_0(z)$ (see \cite[Sec. 4]{AgmonLAP}; see also \cite[Prop. 5.1]{BAD83}): for all $\sip > 1/2$ and for any $\lam \in (0,+\infty)$, the limits $R^{(\parallel),\pm}_{0}(\lam) := \lim_{\eps \downarrow 0} \Rp_0(\lam \pm i \eps)$ exist in $\Bou(L^2_{\sip}(\R^2),H^{2}_{-\sip}(\R^2))$. Then, noting the elementary identity $L^2_{\siu}(\R) \otimes L^2_{\sip}(\R^2) = L^2_{\siu,\sip}(\R^3)$ and that the tensor product of Hilbert spaces $H^{1+\te}_{-\siu}(\R) \otimes H^2_{-\sip}(\R)$ is continuously embedded into $H^{1+\te}_{-\siu,-\sip}(\R)$ on account of item iv) of Lemma \ref{lemHW}, the thesis follows straightforwardly from \cite[Th. 3.8]{BAD83} (recall the remarks at the beginning of this section).
As we remarked in the introduction to this section, the cited theorem of \cite{BAD83} is here employed setting $H_1 := \Hp$, $H_2 := \Hu$,  $\Lambda := (-\infty,0]$, and  $U = (0,+\infty)$; the corresponding hypotheses on $H_1$ and $H_2$ are  fulfilled, respectively, in consequence of the results  in \cite[Example 2.2]{BAD83} and in Theorem \ref{lemLAP1}.
\endproof

\subsection{The limiting absorption principle for $R_F(z)$}

As previously anticipated, we now proceed to derive LAP for the resolvent $R_F(z)$; to this purpose, we start from the analogous result for $R_0(z)$ determined in the previous subsection and use an abstract perturbation method of Renger \cite{Ren}. More precisely, we want to use Th. 7  of \cite{Ren} (see also Prop. 10 of the same paper). To this aim we must check that the operators $A_0$ and $A_F$ satisfy Hypotheses $1$, $8$, and $9$ of \cite{Ren}; adapted to our setting those read: 
\begin{itemize}
\item Hypothesis 1 of \cite{Ren}. The operators $A_0$ and $A_F$ are self-adjoint and semibounded (which is certainly true). There exists a constant $c_R\in \R$ such that for all $z\in\C $ with $\Re z<c_R$, there holds  $R_0(z) \in \Bou(L^2_{\siu,\sip}(\R^3))$ and $R_F(z) \in \Bou(L^2_{\siu,\sip}(\R^3))$. 
\item Hypothesis 8 of \cite{Ren}.  For all $\lambda \in (0,+\infty)$, $R^{\pm}_{0}(\lam) := \lim_{\eps \downarrow 0} R_0(\lam \pm i \eps)$ exist and are continuous in $\Bou(L^2_{\siu,\sip}(\R^3),L^2_{-\siu,-\sip}(\R^3))$. Moreover, for each compact subset $\KK$ of $(0,+\infty)$  there exists a constant $c_K>0$ such that for all $\lambda\in\KK$  and all $f \in L^2_{2\siu,2\sip}(\R^3)$ with $R^{+}_{0}(\lam) f = R^{-}_{0}(\lam) f$, there holds $\|R_0^{\pm}(\lam)\,f \|_{L^2(\R^3)} \lec \|f\|_{L^2_{2\siu,2\sip}(\R^3)}$. 
\item Hypothesis 9 of \cite{Ren}. There exists a constant $c_E\in\R$ such that for all $\mu<c_E$ and for some $\gamma>0$, there holds 
\[R_F(\mu) - R_0(\mu) \in \Com(L^2(\R^3), L^2_{2\siu + \ga,2\sip + \ga}(\R^3))\;.\]
\end{itemize}
Here $\siu$ and $\sip$ are some suitable indices for the weights in the Sobolev spaces. 

The forthcoming Proposition \ref{PropH1RF} (see also Remark \ref{remR0}) proves  Hypothesis 1; Propositions  \ref{PropLAP0} and  \ref{PropR0pm} prove Hypothesis 8, and Proposition   \ref{PropResComW}  proves Hypothesis 9. They are   later employed,  together with Th. 7 of \cite{Ren}, in the proof of Theorem \ref{ThmLAPF} (were the allowed values of the indices $\siu$ and $\sip$ are explicitly  given).

\begin{lemma}\label{lemGF}Let $\su,\sp \in \R$, $r \in (0,1/2]$. Then, for all $z \in \C \backslash [0,+\infty)$, there hold:
\begin{align}
& \Gc_F(z) \in \Bou(H^{r-3/2}_{\su,\sp}(\R^3),H^{r}_{\sp}(\R^2))\;; \label{lemGFEq1}\\
& G_F(z) \in \Bou(H^{-r}_{\sp}(\R^2),H^{3/2-r}_{\su,\sp}(\R^3))\;. \label{lemGFEq2}
\end{align}
\end{lemma}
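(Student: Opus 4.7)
The plan is to obtain both inclusions as consequences of the factorizations
\[
\Gc_F(z) = \tau_F\, R_\fr(z), \qquad G_F(z) = \bigl(\tau_F\, R_\fr(\bar z)\bigr)^{\dag},
\]
together with the mapping properties of $R_\fr(z)$ on weighted Sobolev spaces (Lemma \ref{lemRfW}), those of $\tau_F$ on weighted Sobolev spaces (Lemma \ref{lemtaFw}), and the duality relations \eqref{HrsDual}.

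For \eqref{lemGFEq1}, I would simply compose. By Lemma \ref{lemRfW} applied with Sobolev index $r-3/2$ and weight indices $(\su,\sp)$,
\[
R_\fr(z) \in \Bou\bigl(H^{r-3/2}_{\su,\sp}(\R^3),\, H^{r+1/2}_{\su,\sp}(\R^3)\bigr).
\]
Since $r\in(0,1/2]$, Lemma \ref{lemtaFw} gives $\tau_F \in \Bou\bigl(H^{r+1/2}_{\su,\sp}(\R^3),\, H^{r}_{\sp}(\R^2)\bigr)$. Composing yields \eqref{lemGFEq1}.

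For \eqref{lemGFEq2}, I would proceed by duality. Since $\su,\sp\in\R$ are arbitrary, apply the inclusion \eqref{lemGFEq1} just proved with $(\su,\sp)$ replaced by $(-\su,-\sp)$ to obtain
\[
\tau_F\, R_\fr(\bar z) \in \Bou\bigl(H^{r-3/2}_{-\su,-\sp}(\R^3),\, H^{r}_{-\sp}(\R^2)\bigr).
\]
Taking the $L^2$-adjoint and invoking the duality relations \eqref{HrsDual}, namely $\bigl(H^{r}_{-\sp}(\R^2)\bigr)' = H^{-r}_{\sp}(\R^2)$ and $\bigl(H^{r-3/2}_{-\su,-\sp}(\R^3)\bigr)' = H^{3/2-r}_{\su,\sp}(\R^3)$, gives
\[
G_F(z) = \bigl(\tau_F\, R_\fr(\bar z)\bigr)^{\dag} \in \Bou\bigl(H^{-r}_{\sp}(\R^2),\, H^{3/2-r}_{\su,\sp}(\R^3)\bigr),
\]
which is \eqref{lemGFEq2}.

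The whole argument is essentially a bookkeeping exercise: there is no analytic obstacle beyond what was already dealt with in Lemmata \ref{lemRfW} and \ref{lemtaFw}. The only point requiring a bit of care is keeping track of the weight signs under duality, i.e.\ verifying that negating both weight indices in \eqref{lemGFEq1} is precisely what is needed so that the $L^2$-adjoint lands in the space claimed in \eqref{lemGFEq2}; this is guaranteed by \eqref{HrsDual}.
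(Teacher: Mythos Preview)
Your proof is correct and follows exactly the same approach as the paper: compose Lemmata \ref{lemRfW} and \ref{lemtaFw} to get \eqref{lemGFEq1}, then obtain \eqref{lemGFEq2} by duality using $G_F(z)=\Gc_F(\bar z)^\dag$ and \eqref{HrsDual}. You have simply spelled out explicitly the weight-sign bookkeeping that the paper leaves to the phrase ``simple duality arguments.''
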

\proof Recall that $\Gc_F(z) = \tau_F \Rf(z)$  by definition (see Eq. \eqref{Gcdef}); then, the statement in Eq. \eqref{lemGFEq1} follows readily from Lemmata \ref{lemRfW} and \ref{lemtaFw}. The analogous claim in Eq. \eqref{lemGFEq2} can be derived by simple duality arguments recalling Eq. \eqref{HrwDual} and the basic relation $G_F(z) = \Gc_F(\ov{z})^{\dag}$.
\endproof

The following lemma adapts the result of Lemma \ref{l:lipschitz} (in particular of Eq. \eqref{pigs2}) to weighted Sobolev spaces. 
\begin{lemma}\label{lemGaIF}Let $\sp \in \R$ and $|r|<1/2$. Then there exists $z_0\in \C\backslash[0,+\infty)$ such that  
\begin{equation}
\Ga_F^{-1}(z) \in \Bou(H^r_{\sp}(\R^2))\qquad \mbox{for all  $z \in \C$ such that  $d_z > d_{z_0}$}\;. \label{lemGaFEq}
\end{equation}
\end{lemma}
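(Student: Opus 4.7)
The plan is to mirror the proof of Lemma \ref{l:lipschitz} in the weighted setting. The main task is to establish a weighted analog of the key estimate \eqref{pigs1}: for some $\eps \in (0, 1/2)$ and all $z \in \C \setminus [0,+\infty)$,
$$\|\Gt_F(z)\|_{\Bou(H^{-1/2+\eps}_{\sp}(\R^2),\, H^{1/2-\eps}_{\sp}(\R^2))} \lec d_z^{-\eps}.$$
Using the decomposition $\Gt_F(z) = \tau_F\,\Rf(z)\,\tau_F^\dag$, Lemma \ref{lemtaFw} gives $\tau_F \in \Bou(H^{1-\eps}_{\su,\sp}(\R^3),\, H^{1/2-\eps}_{\sp}(\R^2))$ for any $\su \in \R$, while applying the same lemma with weight index $-\sp$ and passing to the $L^2$-adjoint yields $\tau_F^\dag \in \Bou(H^{-1/2+\eps}_{\sp}(\R^2),\, H^{-1+\eps}_{\su,\sp}(\R^3))$. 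Note the $\sp$-weight is preserved while the $\su$-weight ends up with a consistent sign on both sides of $\Rf(z)$. The remaining task is thus a weighted resolvent bound between matching-weight spaces:
$$\|\Rf(z)\|_{\Bou(H^{-1+\eps}_{\su,\sp}(\R^3),\, H^{1-\eps}_{\su,\sp}(\R^3))} \lec d_z^{-\eps}.$$

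To prove this weighted resolvent bound, I would conjugate by the isomorphism $\Iup$ of Lemma \ref{lemHW}(ii), reducing to the unweighted estimate \eqref{BouRes} already used in the proof of Lemma \ref{l:lipschitz}. Concretely, $\Iup\,\Rf(z)\,\Iup^{-1}$ has integral kernel differing from that of $\Rf(z)$ by the ratio $w_{\su,\sp}^{1/2}(\x)\,w_{\su,\sp}^{-1/2}(\y)$, which is controlled polynomially by $(1+|\x-\y|)^{|\su|+|\sp|}$; combined with the exponential spatial decay of the free resolvent kernel for $z$ far from $[0,+\infty)$, this factor is absorbed without spoiling the $d_z^{-\eps}$ scaling. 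Alternatively, one can complex-interpolate between the weight-preserving bound of Lemma \ref{lemRfW} (viewed as the endpoint where the Sobolev loss is $2$ and no $d_z$-decay is claimed) and the sharp unweighted estimate \eqref{BouRes}.

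Once the bound on $\Gt_F(z)$ between $H^{\mp(1/2-\eps)}_{\sp}(\R^2)$ is secured, the weighted Sobolev embeddings of Lemma \ref{lemHW}(iii) yield the chain
$$H^{1/2-\eps}_{\sp}(\R^2) \cemb H^r_{\sp}(\R^2) \cemb H^{-1/2+\eps}_{\sp}(\R^2)\qquad \mbox{for } |r| \leqs 1/2-\eps,$$
whence $\|\Gt_F(z)\|_{\Bou(H^r_{\sp}(\R^2))} \lec d_z^{-\eps}$ for every $|r| < 1/2$, by choosing $\eps$ small enough depending on $r$. Then $\|\al\,\Gt_F(z)\|_{\Bou(H^r_{\sp}(\R^2))} < 1$ whenever $d_z$ is sufficiently large, and a Neumann series argument produces the bounded inverse $\Ga_F^{-1}(z) = (1+\al\,\Gt_F(z))^{-1} \in \Bou(H^r_{\sp}(\R^2))$. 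The statement then follows upon taking $d_{z_0}$ so that $d_{z_0}^{\eps} \gec \al$.

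The hard part is the middle step, namely transferring the $d_z^{-\eps}$ decay of the free resolvent to the weighted setting. The unweighted bound \eqref{BouRes} comes from a direct Fourier computation, but the polynomial weights in $H^r_{\su,\sp}(\R^3)$ are non-local operators that do not commute with $\Rf(z)$; the conjugation argument via $\Iup$ works precisely because the kernel of the conjugated resolvent is pointwise dominated by that of $\Rf(z)$ multiplied by a factor absorbed by the exponential decay of the free Green's function for $z$ with large $d_z$, so no degradation of the $d_z$-rate occurs.
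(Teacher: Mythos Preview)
Your strategy is sound but takes a different and more laborious route than the paper. You aim to redo the entire Neumann-series argument of Lemma~\ref{l:lipschitz} in the weighted setting, which forces you to establish the quantitative bound $\|\Rf(z)\|_{\Bou(H^{-1+\eps}_{\su,\sp},\,H^{1-\eps}_{\su,\sp})} \lec d_z^{-\eps}$. The paper sidesteps this entirely. It first shows only the qualitative fact $\Gt_F(z) \in \Bou(H^{-1/2}_{\sp}(\R^2),H^{1/2}_{\sp}(\R^2))$ for every $\sp\in\R$ (immediate from Lemmata~\ref{lemtaFw} and~\ref{lemGF}, with no $d_z$-rate), whence $\Ga_F(z)\in\Bou(H^r_{\sp})$ for $|r|\leqs 1/2$. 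Then, for $\sp\geqs 0$, it uses the continuous inclusion $H^r_{\sp}(\R^2)\subset H^r(\R^2)$ from Lemma~\ref{lemHW}(i) together with the invertibility of $\Ga_F(z)$ on the unweighted space already supplied by Lemma~\ref{l:lipschitz}, and closes with the bounded inverse theorem; the case $\sp<0$ follows by duality from $\Ga_F(z)=\Ga_F(\bar z)^{\dag}$. The paper thus recycles the $d_z$-smallness obtained once in the unweighted setting rather than reproving it with weights.

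A word of caution on your ``hard part'': the kernel-domination argument you sketch does not by itself yield the Sobolev mapping $H^{-1+\eps}\to H^{1-\eps}$. Pointwise control of the kernel of $\Iup\,\Rf(z)\,\Iup^{-1}$ by $(1+|\x-\y|)^{|\su|+|\sp|}\,|\Rf(z;\x,\y)|$ gives at best an $L^2\to L^2$ bound via Schur's test, while the two derivatives of smoothing in $\Rf(z)$ are a Fourier-side phenomenon that does not survive mere pointwise majorization of the kernel. Your interpolation alternative is more viable, but to extract a genuine $d_z^{-\eps'}$ rate you would need Lemma~\ref{lemRfW} to provide a bound that is uniform (or at least non-exploding) in $z$ over $\{d_z>d_{z_0}\}$, which that lemma does not state; you would have to revisit its proof and track the $z$-dependence explicitly. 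None of this is fatal, but it is exactly the extra work that the paper's embedding-plus-duality argument is designed to avoid.
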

\proof 
Let us first recall that for  $|r| < 1/2$ and $z$ away from the positive real axis ($d_z>d_{z_0}$) the map $\Ga_F(z) : H^r(\R^2) \to H^r(\R^2)$ is bounded, invertible and can be expressed as $\Ga_F(z) = 1 + \al\,\Gt_F(z)$ in terms of the operator $\Gt_F(z) = \tau_F\,G_F(z)$ (see  Lemma \ref{l:lipschitz}).\\
Keeping in mind the definition of $\Gt_F(z)$, by Lemmata \ref{lemtaFw} and \ref{lemGF} we get $\Gt_F(z) \in \Bou(H^{-1/2}_{\sp}(\R^2),H^{1/2}_{\sp}(\R^2))$ for all $\sp \in \R$; so, in particular, by item iii) of Lemma \ref{lemHW} (see also Remark \ref{r:embedding}) we have $\Gt_F(z) \in \Bou(H^{r}_{\sp}(\R^2))$ for all $|r|\leq 1/2$. This implies that $\Ga_F(z) \in \Bou(H^{r}_{\sp}(\R^2))$, under the same assumptions on $r$ and $\sp$. \\ 
Next, let $\sp \geqs 0$ and  $|r|< 1/2$  and notice that $H^r_{\sp}(\R^2) \subset H^r(\R^2)\,$ by item i) of Lemma \ref{lemHW}; then, the previous considerations and the bounded inverse theorem yield the thesis \eqref{lemGaFEq}. \\
On the other hand, for $\sp < 0$ the analogous statement can be derived by the usual duality arguments noting that $\Ga_F(z) = \Ga_F(\ov{z})^{\dag}$.
\endproof

\begin{prop}\label{PropH1RF} [Check of Hypothesis 1 of \cite{Ren}]
Assume that $\eta,\eta'\!>0$ and let $\su,\sp \in \R$. 
Then there exists $z_0\in \C\backslash[0,+\infty)$ such that  $R_F(z) \in \Bou\big(H^{\eta-3/2}_{\su,\sp}(\R^3),H^{1/2+\eta}_{\su,\sp}(\R^3) \cap H^{3/2-\eta'}_{\su,\sp}(\R^3)\big)$ for all  $z \in \C$ such that  $d_z > d_{z_0}$. In particular there holds 
\begin{equation*}
R_F(z) \in \Bou(L^2_{\su,\sp}(\R^3))  \qquad \mbox{for all  $z \in \C$ such that  $d_z > d_{z_0}$}\;. \label{H1RFEqL2}
\end{equation*}
\end{prop}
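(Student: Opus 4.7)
The plan is to invoke the Kre\u\i n-type resolvent formula \eqref{RG}, namely $R_F(z) = \Rf(z) - \al\,G_F(z)\,\Ga^{-1}_F(z)\,\Gc_F(z)$, and analyze the free and perturbative pieces separately on the weighted Sobolev scale.

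First I would handle the free term: Lemma~\ref{lemRfW}, applied with $r=\eta-3/2$, yields $\Rf(z)\in \Bou(H^{\eta-3/2}_{\su,\sp}(\R^3), H^{\eta+1/2}_{\su,\sp}(\R^3))$ at no extra cost on $z$.

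The core of the argument is the perturbative piece $\al\,G_F(z)\,\Ga^{-1}_F(z)\,\Gc_F(z)$, which must be threaded through a two-dimensional ``bottleneck''. My plan is to chain the three mappings as follows. Using Eq.~\eqref{lemGFEq1} of Lemma~\ref{lemGF} directly for small $\eta$, or for larger $\eta$ the factorization $\Gc_F(z) = \tau_F\Rf(z)$ combined with Lemmata~\ref{lemRfW}--\ref{lemtaFw} and the embeddings in item~iii) of Lemma~\ref{lemHW}, I obtain that $\Gc_F(z)$ sends $H^{\eta-3/2}_{\su,\sp}(\R^3)$ into some $H^{r}_{\sp}(\R^2)$ with $r\in(0,1/2]$. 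Choosing $z_0$ as supplied by Lemma~\ref{lemGaIF}, one secures $\Ga^{-1}_F(z)\in \Bou(H^{r'}_{\sp}(\R^2))$ for all $|r'|<1/2$ and $d_z>d_{z_0}$; via a further embedding one routes the output of $\Gc_F(z)$ through this bounded inverse. Finally, Eq.~\eqref{lemGFEq2} of Lemma~\ref{lemGF} gives $G_F(z)\in\Bou(H^{-\eta'}_{\sp}(\R^2), H^{3/2-\eta'}_{\su,\sp}(\R^3))$ for any $\eta'\in(0,1/2]$ (and for larger $\eta'$ by a harmless enlargement of the target via Lemma~\ref{lemHW}~iii)), delivering the stated $H^{3/2-\eta'}_{\su,\sp}(\R^3)$ regularity of the perturbative contribution.

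Combining the two contributions and invoking once more the embeddings of Lemma~\ref{lemHW}~iii), I would conclude that $R_F(z)f$ lies in $H^{1/2+\eta}_{\su,\sp}(\R^3)\cap H^{3/2-\eta'}_{\su,\sp}(\R^3)$, with operator norm controlled solely by $z_0$ and $\al$. The $L^2$-boundedness highlighted at the end follows from the sandwich $L^2_{\su,\sp}(\R^3)\cemb H^{\eta-3/2}_{\su,\sp}(\R^3)$ (valid for $\eta<3/2$) together with the embedding of the intersection into $L^2_{\su,\sp}(\R^3)$ (valid as soon as both output indices are nonnegative), taking for instance $\eta=\eta'=1/2$, which also sits comfortably inside the admissible range of Lemmata~\ref{lemGF} and~\ref{lemGaIF}.

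I expect the main difficulty to be the careful bookkeeping of Sobolev indices across the bottleneck, especially at the borderline $|r'|=1/2$ where $\Ga^{-1}_F$ is not directly available from Lemma~\ref{lemGaIF}: the embeddings of Lemma~\ref{lemHW}~iii) must be invoked first to drop into the open interval $|r'|<1/2$ in order to invert $\Ga_F$, and then once again to dualize into the input range $H^{-\eta'}_{\sp}$ of $G_F$. A secondary point worth noting is that the spectral constraint $d_z>d_{z_0}$ is inherited entirely from Lemma~\ref{lemGaIF}; no further restriction on $z$ is introduced by the composition, so the same $z_0$ works throughout.
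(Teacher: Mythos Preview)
Your proposal is correct and follows essentially the same approach as the paper: the paper's proof is a one-liner that invokes the Kre\u{\i}n formula \eqref{RG} and then cites Lemmata \ref{lemHW}, \ref{lemRfW}, \ref{lemGF} and \ref{lemGaIF}, and you have simply unpacked this citation, correctly identifying how the four lemmata chain together (free part via Lemma~\ref{lemRfW}, trace bottleneck via Lemmata~\ref{lemGF} and~\ref{lemGaIF}, with Lemma~\ref{lemHW}~iii) supplying the connecting embeddings). Your remark that the spectral constraint $d_z>d_{z_0}$ comes entirely from Lemma~\ref{lemGaIF} is also exactly the point.
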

\proof Consider the Kre\u{\i}n's type relation \eqref{RG} for $R_F(z)$; then, the thesis follows from a straightforward application of Lemmata \ref{lemHW}, \ref{lemRfW}, \ref{lemGF} and \ref{lemGaIF}.
\endproof

\begin{rem}\label{remR0}
In view of the facts pointed out in Remark \ref{remta0}, it appears that the results stated in Lemmata \ref{lemGF} and \ref{lemGaIF} can be easily generalized under stronger hypotheses on $F$. In particular, for $F = 0$ we have
\begin{equation*}
\Gc_0(z) \in \Bou(H^{r-3/2}_{\su,\sp}(\R^3),H^{r}_{\sp}(\R^2)) \quad \mbox{and} \quad
G_0(z) \in \Bou(H^{-r}_{\sp}(\R^2),H^{3/2-r}_{\su,\sp}(\R^3)) \label{lemG0Eq}
\end{equation*}
for all $r > 0$,
\begin{equation*}
\Ga_0^{-1}(z) \in \Bou(H^r_{\sp}(\R^2)) \qquad \mbox{for all $r \in \R$ and $z\in\C\backslash [0,+\infty)$}\;. \label{lemGa0Eq}
\end{equation*}
Of course, Proposition \ref{PropH1RF} continues to hold true if $F = 0$.
\end{rem}

Let us now proceed to characterize the resolvent difference $R_F(z)-R_0(z)$ as a compact operator between suitable, weighted Sobolev spaces. Our main result in this direction is stated in Proposition \ref{PropResComW}, whose proof relies on the forthcoming Lemmata \ref{lemtaComW} and \ref{GcCompW}.

\begin{lemma}\label{lemtaComW}Let $\su,\sp\in \R$. Then, for all $r \in (0,1/2)$, $\eps \in (0,r]$ and all $\tp \in \R$ there holds
\begin{equation*}
\tau_F - \tau_0 \in \Com(H^{r+1/2}_{\su,\sp}(\R^3), H^{r-\eps}_{\tp}(\R^{2})) \;. \label{trdiffW}
\end{equation*}
\end{lemma}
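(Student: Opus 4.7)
The key observation is that the compact support of $F$ forces the range of $\tau_F - \tau_0$ to consist of functions supported in $\sF$. Indeed, for $f \in \DD(\R^3)$,
\[
(\tau_F f - \tau_0 f)(\xp) \;=\; f(F(\xp),\xp) - f(0,\xp),
\]
and this vanishes whenever $\xp \notin \sF$, since then $F(\xp) = 0$. On a fixed compact set the weights $w_\sp$ and $w_\tp$ are bounded above and below, so they cannot affect compactness, and Rellich's theorem supplies the compact embedding responsible for the loss of regularity $\eps$.

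I would organize the argument into the following steps. \emph{Step 1.} By Lemma \ref{lemtaFw} and Remark \ref{remta0}, both $\tau_F$ and $\tau_0$ belong to $\Bou(H^{r+1/2}_{\su,\sp}(\R^3),H^r_{\sp}(\R^2))$, so their difference is bounded between the same spaces. \emph{Step 2.} Pick $\chi \in \DD(\R^2)$ with $\chi \equiv 1$ on $\sF$, and put $K := \supp \chi$. By the observation above, $(\tau_F - \tau_0)f = \chi\,(\tau_F - \tau_0)f$. Multiplication by $\chi$ is a bounded operator on $H^r(\R^2)$ for $r \in (0,1/2)$ and maps into the closed subspace $H^r_{\overline K}(\R^2) := \{u \in H^r(\R^2)\,|\,\supp u \subset K\}$. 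Moreover, on functions whose support is contained in the fixed compact set $K$, the unweighted norm $\|\cdot\|_{H^r(\R^2)}$ and the weighted norm $\|\cdot\|_{H^r_\sp(\R^2)}$ are equivalent; this can be verified via the Banach isomorphism $\Ip$ of Lemma \ref{lemHW}, item ii), after replacing $w_\sp^{1/2}$ with a smooth, globally bounded multiplier with bounded derivatives that agrees with it on $K$.

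\emph{Step 3.} By the classical Rellich-Kondrachov theorem applied to functions supported in the fixed compact set $K$, the embedding $H^r_{\overline K}(\R^2) \ccemb H^{r-\eps}(\R^2)$ is compact for any $\eps > 0$. \emph{Step 4.} The image produced by Step 3 is still supported in $K$; by the same argument as in Step 2, for $s := r - \eps \in [0,r)$ the norms $\|\cdot\|_{H^{s}(\R^2)}$ and $\|\cdot\|_{H^{s}_\tp(\R^2)}$ are equivalent on such functions, so the composition lands continuously in $H^{r-\eps}_{\tp}(\R^2)$. Chaining these maps,
\[
H^{r+1/2}_{\su,\sp}(\R^3) \stackrel{\tau_F-\tau_0}{\longrightarrow} H^{r}_{\sp}(\R^2) \stackrel{M_\chi}{\longrightarrow} H^{r}_{\overline K}(\R^2) \ccemb H^{r-\eps}(\R^2) \cemb H^{r-\eps}_{\tp}(\R^2),
\]
so $\tau_F - \tau_0$ factors through a compact embedding and is therefore compact.

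The main technical obstacle is the weight bookkeeping in the fractional-order regime for arbitrary $\tp \in \R$, in particular when $\tp > 0$ so $w_\tp$ is unbounded. I would address this by replacing $w_\tp^{1/2}$ outside a neighborhood of $K$ by a smooth, globally bounded function with bounded derivatives that coincides with $w_\tp^{1/2}$ on $K$; this surrogate acts as a bounded multiplier on $H^{r-\eps}(\R^2)$ by the same result (\cite[Th. 1.9.2]{agranovich}) invoked in the proof of Lemma \ref{lemtaFw}, and produces the required equivalence of norms on functions supported in $K$. Everything else in the argument is a standard combination of Lemma \ref{lemtaFw}, Remark \ref{remta0}, and Rellich compactness.
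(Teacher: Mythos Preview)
Your proof is correct and follows essentially the same approach as the paper's own proof: both observe that the range of $\tau_F - \tau_0$ is supported in the compact set $\sF$, invoke a Rellich-type compact embedding on a bounded domain to produce the $\eps$ loss, and note that the weights are irrelevant on compactly supported functions. The paper phrases the compact embedding step via $H^r_0(B_F) \ccemb H^{r-\eps}_0(B_F)$ (citing Lions--Magenes) and extension by zero, rather than your cutoff-and-multiplier route, but the underlying mechanism is identical.
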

\proof First of all, let us remark that by Lemma \ref{lemtaFw} (see also the related Remark \ref{remta0}) we have $\tau_F - \tau_0 \in \Bou(H^{r+1/2}_{\su,\sp}(\R^3),H^{r}_{\sp}(\R^{2}))$ for any $\su,\sp \in \R$ and for all $r \in (0,1/2)$.

It can be easily checked that the range of $\tau_F - \tau_0$ fulfills
\[
\ran(\tau_F-\tau_0) \subseteq \{u \in H^{r}_{\sp}(\R^{2})\,|\,\supp u \subset \supp F \} = \{u \in H^{r}_{\tp}(\R^{2})\,|\,\supp u \subset \supp F \}\]
for all $r \in (0,1/2)$ and all $\tp \in \R$. The r.h.s. of the above equation is a closed subset of $H^{r }(\R^{2})$ which, for any bounded, open subset $B_F\!\subset\! \R^{2}$ such that $\supp F\! \subset\! B_F$, can be isomorphically identified (as a Hilbert space) with $\{u\!\in\! H^{r}_0(B_F)\,|\,\supp u\!\subset\!\supp F\} \subset H^{r}_0(B_F)$. 

Since $B_F$ is bounded, it follows that $H^{r}_0(B_F)$ can be compactly embedded into $H^{r-\eps}_0(B_F)$ (see \cite[p. 99, Th. 16.1]{LioMag}). Since  $r-\eps \in [0,1/2)$,   one can extend elements of $H^{r- \eps}_0(B_F)$  to elements of $H^{r-\eps}_{\tp}(\R^{2})$ which vanish a.e. outside $B_F$; as well known this procedure gives a continuous map sending $H^{r-\eps}_0(B_F)$ into $H^{r-\eps}_{\tp}(\R^{2})$ (see \cite[p. 60, Th. 11.4]{LioMag}, the weight $w_{\tp}$ is inessential since elements are extended by setting them to zero outside $B_F$). The above considerations show that the map $\tau_F - \tau_0 : H^{r+1/2}_{\su,\sp}(\R^3) \to H^{r-\eps}_{\tp}(\R^{2})$ is compact, since it can be obtained by composition of a compact operator with continuous ones.
\endproof

\begin{lemma}\label{GcCompW} Let $\su,\sp\,,\tp \in \R$. Then, for all $r \in (0,1/2)$, $\eps \in(0,r]$ and $z \in \C \backslash [0,+\infty)$ there hold:
\begin{align}
& \Gc_F(z) - \Gc_0(z) \in \Com(H^{r-3/2}_{\su,\sp}(\R^3), H^{r-\eps}_{\tp}(\R^{2})) \;, \label{GcCompWEq0} \\
& G_F(z) - G_0(z) \in \Com(H^{\eps-r}_{\tp}(\R^{2}),H^{3/2-r}_{\su,\sp}(\R^3))\;. \label{GcCompWEq}
\end{align}
\end{lemma}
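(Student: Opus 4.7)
The plan is to reduce both claims to Lemma \ref{lemtaComW} (which concentrates the compactness in the trace difference $\tau_F-\tau_0$) combined with the mapping properties of the free resolvent from Lemma \ref{lemRfW}; for the second claim I will then pass to the adjoint. No separate cut-off or approximation argument should be needed beyond what was already done inside Lemma \ref{lemtaComW}.

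For \eqref{GcCompWEq0}, I start from the definition $\Gc_F(z)=\tau_F R_\fr(z)$, which gives the factorization
\begin{equation*}
\Gc_F(z)-\Gc_0(z) \,=\, (\tau_F-\tau_0)\,R_\fr(z)\,.
\end{equation*}
Lemma \ref{lemRfW} says that $R_\fr(z)\in\Bou(H^{r-3/2}_{\su,\sp}(\R^3),H^{r+1/2}_{\su,\sp}(\R^3))$, while Lemma \ref{lemtaComW} gives $\tau_F-\tau_0\in\Com(H^{r+1/2}_{\su,\sp}(\R^3),H^{r-\eps}_{\tp}(\R^{2}))$ for $r\in(0,1/2)$ and $\eps\in(0,r]$. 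The composition of a bounded and a compact operator is compact, which yields \eqref{GcCompWEq0}.

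For \eqref{GcCompWEq} I pass to $L^2$-adjoints. Recall that $G_F(z)=(\tau_F R_\fr(\bar z))^\dag=R_\fr(z)(\tau_F)^\dag$ (using $R_\fr(\bar z)^\dag=R_\fr(z)$), and similarly for $G_0(z)$, so
\begin{equation*}
G_F(z)-G_0(z) \,=\, R_\fr(z)\,(\tau_F-\tau_0)^\dag\,.
\end{equation*}
By the duality relations \eqref{HrsDual}, the compactness statement of Lemma \ref{lemtaComW} dualizes to
\begin{equation*}
(\tau_F-\tau_0)^\dag \in \Com\bigl(H^{\eps-r}_{-\tp}(\R^{2}),H^{-r-1/2}_{-\su,-\sp}(\R^3)\bigr)\,,
\end{equation*}
and a second application of Lemma \ref{lemRfW} gives $R_\fr(z)\in\Bou(H^{-r-1/2}_{-\su,-\sp}(\R^3),H^{3/2-r}_{-\su,-\sp}(\R^3))$. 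Since $\su,\sp,\tp$ range freely over $\R$, the sign flip is immaterial: relabelling $-\su\mapsto\su$, $-\sp\mapsto\sp$, $-\tp\mapsto\tp$ yields precisely \eqref{GcCompWEq}.

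The only delicate point is making sure the Sobolev indices and weights align across the duality in the second step; this is purely bookkeeping once Lemma \ref{lemtaComW} is available, and there is no analytic difficulty beyond the compactness already encoded there. In particular I do not expect to need the restriction $r<1/2$ anywhere except insofar as it enters Lemma \ref{lemtaComW} itself.
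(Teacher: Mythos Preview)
Your proof is correct and follows essentially the same approach as the paper: factor $\Gc_F(z)-\Gc_0(z)=(\tau_F-\tau_0)R_\fr(z)$, apply Lemmata \ref{lemRfW} and \ref{lemtaComW}, then obtain \eqref{GcCompWEq} by duality. The paper's one-line ``usual duality arguments'' is exactly what you spell out; a marginally shorter route would be to take the adjoint of \eqref{GcCompWEq0} directly (applied at $\bar z$, using $G_F(z)-G_0(z)=(\Gc_F(\bar z)-\Gc_0(\bar z))^\dag$) rather than re-factoring, but this is purely cosmetic.
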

\proof First of all, let us recall that $\Gc_F(z) - \Gc_0(z) = (\tau_F - \tau_0) \Rf(z)$ by definition (see Eq. \eqref{Gcdef}); then, the claim in Eq. \eqref{GcCompWEq0} follows easily from Lemmata \ref{lemRfW} and \ref{lemtaComW}. Taking this into account, we are able to derive the analogous statement in Eq. \eqref{GcCompWEq} by the usual duality arguments.
\endproof

\begin{prop}\label{PropResComW}[Check of Hypothesis 9 of \cite{Ren}]
Let $\su,\sp,\tu,\tp \in \R$. Then there exists $z_0\in\C\backslash[0,+\infty)$ such that $R_F(z) - R_0(z) \in \Com(L^2_{\su,\sp}(\R^3), L^2_{\tu,\tp}(\R^3))$ for all $z\in\C$ such that $d_z>d_{z_0}$. In particular, there holds 
\begin{equation*}
R_F(z) - R_0(z) \in \Com(L^2(\R^3), L^2_{\tu,\tp}(\R^3)) \quad \mbox{for all  $z \in \C$ such that  $d_z > d_{z_0}$}.
\end{equation*}
\end{prop}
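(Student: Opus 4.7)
\proof[Proof proposal] The plan is to exploit the resolvent difference identity \eqref{ResDiff}, which expresses $R_F(z) - R_0(z)$ as a sum of three terms, and show that each is compact by combining the compactness statements from Lemmata \ref{GcCompW} and \ref{lemtaComW} with the boundedness results of Lemmata \ref{lemGF}, \ref{lemGaIF} and Remark \ref{remR0}, together with the two-sided ideal property of $\Com$ under composition with bounded operators. Throughout, $d_z$ must be chosen large enough for $\Ga^{-1}_F(z)$ to exist in the relevant weighted Sobolev spaces (Lemma \ref{lemGaIF}), while $\Ga^{-1}_0(z)$ is available for every $z \in \C\backslash[0,+\infty)$ by Remark \ref{remR0}.

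For the \textbf{first term} $G_0(z)\,\Ga^{-1}_0(z)\,(\Gc_F(z)-\Gc_0(z))$, I would fix small $r\in(0,1/2)$ and $\eps\in(0,r]$, use the embedding $L^2_{\su,\sp}(\R^3)\cemb H^{r-3/2}_{\su,\sp}(\R^3)$ from item iii) of Lemma \ref{lemHW}, then apply Lemma \ref{GcCompW} to obtain a compact map into $H^{r-\eps}_{\tp}(\R^2)$, then the boundedness of $\Ga^{-1}_0(z)$ and the boundedness of $G_0(z)$ from $H^{-r}_{\tp}(\R^2)$ into $H^{3/2-r}_{\tu,\tp}(\R^3)$ (using the free choice of the first weight in Remark \ref{remR0}), and finally the embedding $H^{3/2-r}_{\tu,\tp}(\R^3)\cemb L^2_{\tu,\tp}(\R^3)$. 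The \textbf{second term} $(G_F(z)-G_0(z))\,\Ga^{-1}_F(z)\,\Gc_F(z)$ is handled symmetrically: $\Gc_F(z)$ is bounded from $L^2_{\su,\sp}(\R^3)$ into $H^{1/2}_{\sp}(\R^2)$ by Lemma \ref{lemGF}, then I embed into $H^{r'}_{\sp}(\R^2)$ for some $r'<1/2$ to apply $\Ga^{-1}_F(z)$, then compose with the compact operator $G_F(z)-G_0(z)$ from Lemma \ref{GcCompW} into $H^{3/2-r''}_{\tu,\tp}(\R^3)$, and embed into $L^2_{\tu,\tp}(\R^3)$.

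The \textbf{third term} $G_0(z)\,(\Ga^{-1}_F(z)-\Ga^{-1}_0(z))\,\Gc_F(z)$ is the main obstacle, because it requires controlling the difference of the two inverses. I would use the second resolvent identity
\[
\Ga^{-1}_F(z)-\Ga^{-1}_0(z) \;=\; -\,\al\,\Ga^{-1}_F(z)\,\big(\Gt_F(z)-\Gt_0(z)\big)\,\Ga^{-1}_0(z)\,,
\]
followed by the algebraic decomposition
\[
\Gt_F(z)-\Gt_0(z) \;=\; \tau_F\,\big(G_F(z)-G_0(z)\big)\,+\,\big(\tau_F-\tau_0\big)\,G_0(z)\,.
\]
Both summands are compact between appropriate weighted Sobolev spaces on $\R^2$: the first by Lemma \ref{GcCompW} (compactness of $G_F(z)-G_0(z)$) composed with the bounded trace $\tau_F$ from Lemma \ref{lemtaFw}, and the second by Lemma \ref{lemtaComW} composed with the bounded single layer $G_0(z)$ from Remark \ref{remR0}. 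Sandwiching by $\Ga^{-1}_F(z)$, $\Ga^{-1}_0(z)$, $\Gc_F(z)$ and $G_0(z)$, all bounded in the relevant weighted Sobolev scales, then yields the compactness of the third term into $L^2_{\tu,\tp}(\R^3)$.

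The routine part is bookkeeping: one needs to pick the fractional orders $r,r',r''\in(0,1/2)$ and the corresponding $\eps$'s so that every intermediate composition lives in a space in which the next operator acts boundedly, and so that the final target embeds continuously into $L^2_{\tu,\tp}(\R^3)$. Once the three summands are shown to be compact from $L^2_{\su,\sp}(\R^3)$ into $L^2_{\tu,\tp}(\R^3)$, the thesis follows by additivity of $\Com$. The particular statement with input $L^2(\R^3)$ is immediate since $L^2(\R^3)=L^2_{0,0}(\R^3)$ corresponds to the choice $\su=\sp=0$.
\endproof
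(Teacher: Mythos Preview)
Your proposal is correct and follows essentially the same route as the paper: split $R_F(z)-R_0(z)$ via \eqref{ResDiff}, handle the first two terms using the compactness of $\Gc_F-\Gc_0$ and $G_F-G_0$ from Lemma~\ref{GcCompW} sandwiched between bounded factors, and for the third term use the second resolvent identity together with a two-summand decomposition of $\Gt_F(z)-\Gt_0(z)$. The only cosmetic differences are that the paper writes $\Ga_F^{-1}-\Ga_0^{-1}=-\al\,\Ga_0^{-1}(\Gt_F-\Gt_0)\Ga_F^{-1}$ (with the inverses in the opposite order) and splits $\Gt_F-\Gt_0=(\tau_F-\tau_0)G_F+\tau_0(G_F-G_0)$ instead of your $\tau_F(G_F-G_0)+(\tau_F-\tau_0)G_0$; both variants work for the same reasons.
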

\proof 
The main argument employed in the proof is the fact that compact operators are a two-sided ideal of bounded operators.

We proceed to prove separately the compactness of each of the three addenda appearing on the r.h.s. of Eq. \eqref{ResDiff}. 
By Lemmata \ref{lemtaComW} and \ref{GcCompW}, for all $r\in(0,1/2)$ we infer 
\begin{align}
& \tau_F - \tau_0 \in \Com(H^{r+1/2}_{\su,\sp}(\R^3), L^2_\tp(\R^{2}))\; ; \label{overdrive0}\\
& \Gc_F(z) - \Gc_0(z) \in \Com(H^{r-3/2}_{\su,\sp}(\R^3), L^2_\tp(\R^{2}))\; ; \label{overdrive1}\\
& G_F(z) - G_0(z) \in \Com(L^2_\tp(\R^{2}),H^{3/2-r}_{\su,\sp}(\R^3))\; \label{overdrive2} 
\end{align}
for all $\su,\sp,\tp \in\R$.  Next we analyze the three addenda on the r.h.s. of Eq. \eqref{ResDiff} one by one. 

\textsl{First term: There holds}
\begin{equation*}
G_0(z)\,\Ga^{-1}_0(z)\,\big(\Gc_F(z) - \Gc_0(z)\big) \in \Com(L^2_{\su,\sp}(\R^3), L^2_{\tu,\tp}(\R^3))\;.
\end{equation*} 
The claim follows by noticing that by Eq. \eqref{overdrive1} one has $\Gc_F(z) - \Gc_0(z) \in \Com(L^2_{\su,\sp}(\R^3),L^2_\tp(\R^2))$, in addition to $\Ga_0^{-1}(z) \in \Bou(L^2_\tp(\R^2))$ and $G_0(z) \in \Bou(L^2_{\tp}(\R^2),$ $L^2_{\tu,\tp}(\R^3))$, see  Remark \ref{remR0}. \\ 

\textsl{Second term: There holds}
\begin{equation*}
\big(G_F(z) - G_0(z)\big)\,\Ga^{-1}_F(z)\,\Gc_F(z) \in \Com(L^2_{\su,\sp}(\R^3), L^2_{\tu,\tp}(\R^3))\;.
\end{equation*}
The claim follows by noticing that by Eq. \eqref{overdrive2} one has $G_F(z) - G_0(z) \in \Com(L^2_\sp(\R^2),L^2_{\tu,\tp}(\R^3))$,  in addition to $\Gc_F(z) \in \Bou(L^2(\R^3)_{\su,\sp},$ $ L^2_\sp(\R^2))$ and  $\Ga_F^{-1}(z) \in \Bou(L^2_\sp(\R^2))$, see Lemmata  \ref{lemGF} and \ref{lemGaIF}. \\ 

\textsl{Third term: There holds}
\begin{equation*}
G_0(z)\,\big(\Ga^{-1}_F(z)- \Ga^{-1}_0(z)\big)\,\Gc_F(z) \in \Com(L^2_{\su,\sp}(\R^3), L^2_{\tu,\tp}(\R^3))\;.
\end{equation*}
Let us recall that $\Gt_F(z) = \tau_F G_F(z)$ (and similarly for $F=0 $). We   point out the following basic identity
(we are using essentially the identity $(A+B)^{-1} - A^{-1} = - A^{-1} B (A+B)^{-1}$). 
\begin{align*}
\Ga_F^{-1}(z) - \Ga_0^{-1}(z) = &  \Big(1 + \al \Gt_0(z) + \al \big(\Gt_F(z)-\Gt_0(z)\big)\Big)^{\!-1} - \Big(1 + \al\Gt_0(z)\Big)^{\!-1}  \nonumber \\ 
 = & -\,\al\;\Ga_0^{-1}(z)\, \big(\tau_F G_F(z) -  \tau_0 G_0(z)\big)\, \Ga_F^{-1}(z)\;. 
\end{align*}
Since 
\begin{equation*}
\tau_F G_F(z) -  \tau_0 G_0(z) = (\tau_F - \tau_0) G_F(z) + \tau_0 \big(G_F(z) - G_0(z)\big)\,, 
\end{equation*}
we infer
\begin{equation*}
\Ga_F^{-1}(z) - \Ga_0^{-1}(z) = -\al\,\Ga_0^{-1}(z)\, \Big((\tau_F - \tau_0) G_F(z) + \tau_0 \big(G_F(z) - G_0(z)\big)\Big)\, \Ga_F^{-1}(z)\,.
\end{equation*}
To proceed, let us recall that: \\ 
$i)$ $\Gc_F(z)\in  \Bou(L^2_{\su,\sp}(\R^3),L^2_\sp(\R^2))$ and  $G_0(z)\in  \Bou(L^2_\tp(\R^2),$ $L^2_{\tu,\tp}(\R^3))$. \\
$ii)$ $\Ga_F^{-1}(z)\in \Bou(L^2_\sp(\R^2))$ and $\Ga_0^{-1}(z) \in \Bou(L^2_\tp(\R^2))$. \\
$iii)$ $G_F(z)\in  \Bou(L^2_\sp(\R^2),H^1_{\su,\sp}(\R^3))$ and $\tau_0 \in  \Bou(H^1_{\tu,\tp}(\R^3),L^2_\tp(\R^2))$.\\
In view of these facts, the thesis follows from  $\tau_F - \tau_0 \in \Com(H^{1}_{\su,\sp}(\R^{3}),L^2_{\tp}(\R^{2}))$ and $G_F(z) - G_0(z) \in \Com(L^2_\sp(\R^{2}),H^{1}_{\tu,\tp}(\R^{3}))$, which are in turn a consequence of Eq.s \eqref{overdrive0} \eqref{overdrive2}.
\endproof

\begin{rem}The latter proposition implies in particular that $R_F(z) - R_0(z) $ is a compact operator in $L^2(\R^3)$ for $z$ far away from the real axis, hence for any $z \in \rho(A_F)\cap \rho(A_0)$. From which we infer $\sigma_{ess}(A_F) = \sigma_{ess}(A_0) = [0,+\infty)$. 
\end{rem}

Finally we check the validity of Hypothesis 8 of \cite{Ren}.  Let us consider the following set of functions $\{\ff_{\k}\}$, indexed by the labels $\k = (k_1,\kp) \in \R\! \times\! \R^2$ (see \cite[p. 85, Eq. (3.4.1)]{AlbBook}):
\begin{equation}
\ff_{\k} := \fu_{k_1} \otimes \fp_{\kp}\;, \label{geneig}
\end{equation}
with 
\begin{equation*}
\fu_{k_1}(x^1) := {e^{i k_1 x^1} \over \sqrt{2\pi}} - {i\,\al \over 2|k_1|+i\,\al}\,{e^{i\, |k_1|\, |x^1|} \over \sqrt{2\pi}}\;, 
\end{equation*}
and 
\begin{equation*}
 \fp_{\kp}(\xp) := {e^{i \kp \cdot \xp} \over 2\pi}\;. 
\end{equation*}

These form a complete set of generalized eigenfunctions for $\AA_0$ with respect to a suitable rigging of $L^2(\R^3)$ (see \cite{AlbBook} and \cite[Ch. VI, Sec. 21]{Naim}). More precisely, it can be checked by direct inspection that $\ff_{\k} \in L^2_{-\su,-\sp}(\R^3)$ for any $\k \in \R^3$ and for all $\su > 1/2$, $\sp > 1$; furthermore, denoting with $\la \;|\;\ra$ the $(L^2_{-\su,-\sp\,},L^2_{\su,\sp})$-duality pairing induced by the $L^2$-inner product, we have $\la \ff_{\k} | (\AA_0 - |\k|^2) f\ra = 0$ for all $f \in \Dom \AA_0 \cap L^2_{\su,\sp}(\R^3)$ with $\AA_0 f \in L^2_{\su,\sp}(\R^3)$.

Keeping in mind the facts mentioned above, in the forthcoming Lemma \ref{lemFk} we proceed to point out some notable features of the functions $\R^3\! \ni \k \mapsto \la \ff_{\k} | f\ra$, for $f \in L^2_{\su,\sp}(\R^3)$. We will later employ these features in the proof of Proposition \ref{PropR0pm}.

\begin{lemma}\label{lemFk} Assume that $\su > 1/2$, $\sp > 1$ and let $f \in L^2_{\su,\sp}(\R^3)$. Then, the map $\R^3\! \ni \k \mapsto \la \ff_{\k} | f\ra$ enjoys the following properties: \\
$i)$ For all $\eta \in (0,1)$ such that $\eta \leqs \min(\su - 1/2,\sp - 1)$, the map $\k \mapsto \la \ff_{\k} | f\ra$ belongs to $C^{0,\eta}(\ov{\R^3})$; moreover,
\begin{equation}
\|\k \mapsto \la \ff_{\k} | f\ra\|_{C^{0,\eta}(\ov{\R^3})} \,\lec\, \|f\|_{L^2_{\su,\sp}(\R^3)} \;. \label{C0ffk}
\end{equation}
$ii)$ Let $\KK \subset (0,+\infty)$ be any compact subset and consider the operators $R^{\pm}_0(\lam)$ of Eq. \eqref{R0lpm} for $\lam \in \KK$. If $R^{+}_{0}(\lam) f = R^{-}_{0}(\lam) f$ for all $\lam \in \KK$, then 
\begin{equation*}
\la \ff_{\k} |f \ra = 0 \quad \mbox{for all $\k \in \R^3$\! such that\, $|\k|^2\! \in \KK$}\;. \label{fk0}
\end{equation*}
\end{lemma}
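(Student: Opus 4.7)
The proof splits naturally along the two parts. For (i), I would derive a pointwise Hölder estimate in $\k$ for $\ff_\k(\x)$ and combine it with a Cauchy--Schwarz pairing against $f$ through the weight $w_{\su,\sp}^{1/2}$; for (ii), I would combine the spectral representation of $\AA_0$ by the generalized eigenfunctions $\ff_\k$ with the Sokhotski--Plemelj jump formula for $R_0$ across $(0,+\infty)$, and upgrade the resulting a.e.\ vanishing to a pointwise one via (i).

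The estimates for (i) exploit the product structure $\ff_\k = \fu_{k_1} \otimes \fp_{\kp}$. The transverse factor obeys $|\fp_{\kp}(\xp) - \fp_{\kp'}(\xp)| \lec |\kp-\kp'|^\eta |\xp|^\eta$ via the elementary inequality $|e^{ia}-e^{ib}| \leq 2^{1-\eta}|a-b|^\eta$. For the longitudinal factor, setting $a_{k_1} := i\al/(2|k_1|+i\al)$, one checks $|a_{k_1}-a_{k_1'}| \lec \min(1,|k_1-k_1'|) \lec |k_1-k_1'|^\eta$ and $|e^{i|k_1|x^1}-e^{i|k_1'|x^1}| \lec |k_1-k_1'|^\eta |x^1|^\eta$, which together yield
\[
|\fu_{k_1}(x^1) - \fu_{k_1'}(x^1)| \lec |k_1-k_1'|^\eta\,(1+|x^1|)^\eta.
\]
A Cauchy--Schwarz pairing with $f$, after the splitting $|\ff_\k-\ff_{\k'}|\,|f| = \bigl(|\ff_\k-\ff_{\k'}|\,w_{\su,\sp}^{-1/2}\bigr)\cdot\bigl(w_{\su,\sp}^{1/2}|f|\bigr)$, delivers the Hölder bound \eqref{C0ffk}; the convergence of $\int(1+|x^1|)^{2\eta}(1+|x^1|^2)^{-\su}\,dx^1$ and $\int(1+|\xp|^2)^{\eta-\sp}\,d\xp$ reproduces precisely the constraints $\eta < \su-1/2$ and $\eta < \sp-1$. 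The uniform $C^0$ bound follows from the same splitting applied without the difference.

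For (ii), I would invoke the spectral representation of $\AA_0$ induced by the $\ff_\k$'s. In view of Remark \ref{r:factor} and the fact that $\{\fu_{k_1}\}_{k_1\in\R}$ realizes a unitary generalized Fourier diagonalization of $\Hu_0$ (see \cite[Ch.~I, Sec.~3.1]{AlbBook}) while $\{\fp_{\kp}\}_{\kp\in\R^2}$ is the standard Fourier basis of $\Hp_0=-\lapp$, the map $\Fou_0 : f \mapsto (\k \mapsto \la\ff_\k|f\ra)$ extends to a unitary operator $L^2(\R^3) \to L^2(\R^3;d\k)$ intertwining $\AA_0$ with multiplication by $|\k|^2$; hence $(\Fou_0 R_0(z) f)(\k) = (|\k|^2-z)^{-1}(\Fou_0 f)(\k)$. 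Passing $\eps \downarrow 0$ via the Sokhotski--Plemelj formula (legitimate thanks to Proposition \ref{PropLAP0}) gives
\[
\la f\,|\,(R_0^+(\lam) - R_0^-(\lam))\,f\ra \,=\, 2\pi i \int_{\{|\k|^2=\lam\}} |\la\ff_\k|f\ra|^2 \, \frac{d\sigma(\k)}{2|\k|}\,.
\]
The hypothesis forces the non-negative integrand to vanish a.e.\ on the sphere $\{|\k|^2=\lam\}$ for every $\lam \in \KK$; by part (i) the function $\k \mapsto \la\ff_\k|f\ra$ is continuous on $\R^3$, so the vanishing propagates to every point of $\{\k : |\k|^2 \in \KK\}$.

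The main delicate point will be reconciling the pairing $\la\ff_\k|f\ra$, interpreted distributionally through the rigging of $L^2(\R^3)$, with its interpretation as the pointwise value of the (a priori only a.e.-defined) function $\Fou_0 f$ on the spheres $\{|\k|^2=\lam\}$; the Hölder regularity from (i) furnishes precisely the rigidity needed to collapse a.e.\ identities to pointwise ones on those spheres.
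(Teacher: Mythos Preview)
Your proof is correct, and part (ii) is essentially the paper's argument: both compute the imaginary part of the resolvent jump via the spectral representation by the $\ff_\k$'s, obtain a vanishing spherical integral of $|\la\ff_\k|f\ra|^2$ on $\{|\k|^2=\lam\}$, and upgrade a.e.\ vanishing to pointwise vanishing using the continuity from (i). Your phrasing via Sokhotski--Plemelj and the unitarity of $\Fou_0$ is a compact repackaging of the paper's explicit Poisson-kernel limit $\frac{2i\eps}{(|\k|^2-\lam)^2+\eps^2}\to \pi i\,\delta(|\k|^2-\lam)$.

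For part (i) the routes genuinely diverge. The paper reduces by density to tensors $f=\uu\otimes\up$, computes $\la\ff_\k|\uu\otimes\up\ra$ explicitly as a product of (inverse) Fourier transforms, and reads off H\"older continuity in $\k$ from the Sobolev embeddings $H^{\su}(\R)\hookrightarrow C^{0,\etu}(\overline{\R})$, $H^{\sp}(\R^2)\hookrightarrow C^{0,\etp}(\overline{\R^2})$. Your approach instead estimates the eigenfunctions themselves pointwise, $|\ff_\k(\x)-\ff_{\k'}(\x)|\lec |\k-\k'|^\eta(1+|x^1|)^\eta(1+|\xp|)^\eta$, and then pairs against $f$ via Cauchy--Schwarz with the weight splitting. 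Your argument is shorter and avoids both the density reduction and the explicit Fourier computation; the paper's argument is slightly more flexible in that it handles the endpoint $\eta=\min(\su-1/2,\sp-1)$ through the sharp Sobolev embedding, whereas your weighted integrals $\int(1+|x^1|)^{2\eta}(1+|x^1|^2)^{-\su}dx^1$ and $\int(1+|\xp|^2)^{\eta-\sp}d\xp$ converge only for the strict inequalities $\eta<\su-1/2$, $\eta<\sp-1$. This discrepancy is harmless for the downstream application (Proposition~\ref{PropR0pm} needs only some $\eta>1/2$).
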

\proof We discuss separately the proofs of items i) and ii). \\
$i)$ The thesis follows by obvious density arguments as soon as we can infer the relation stated in Eq. \eqref{C0ffk} for any factorized function of the form $f = \uu \otimes \up$, with $\uu \!\in\! L^2_{\su}(\R)$ and $\up\!\in\!L^2_{\sp}(\R^2)$. To this purpose, let us first notice that the Fourier transforms on $\R$ and $\R^2$ induce, respectively, the isomorphisms of Hilbert spaces $\Fou_1 : H^{\su}(\R) \to L^2_{\su}(\R)$ and $\Foup : H^{\sp}(\R^2) \to L^2_{\sp}(\R)$ for any $\su,\sp \in \R$. 
Besides, indicating with $\Te$ the Heaviside step function and using the explicit expression \eqref{geneig} for $\ff_{\k}$, by direct computations we get 
\[  \la \ff_{\k}| \uu \otimes \up\ra =   \bigg[(\Fou_{1}^{-1} \uu)(k_1)   + {i\,\al \over 2|k_1| - i\,\al}\Big(\big(\Fou_{1}^{-1} (\Te \uu)\big)(|k_1|) + \big(\Fou_{1}^{-1} ((1-\Te) \uu)\big)(-|k_1|) \Big) \bigg] (\Foup^{-1}\up)(\kp)\;. 
\]\noindent
Since $\Te \in L^\infty(\R)$, in view of the previously mentioned facts we have $\Fou_{1}^{-1} \uu$, $\Fou_{1}^{-1} (\Te \uu), \Fou_{1}^{-1}((1-\Te) \uu) \in H^{\su}(\R)$ and $\Foup^{-1} \up \in H^{\sp}(\R^2)$; so, by standard Sobolev embeddings, $\Fou_{1}^{-1} \uu, \Fou_{1}^{-1} (\Te \uu), \Fou_{1}^{-1}((1-\Te) \uu) \in C^{0,\etu}(\ov{\R})$ for all $\etu \in (0,1)$ such that $\etu < \su - 1/2$ and $\Foup^{-1} \up \in C^{0,\etp}(\ov{\R^2})$ for all $\etp \in (0,1)$ such that $\etp < \sp - 1$. Let us also remark that the absolute value $k_1 \mapsto |k_1|$ is uniformly Lipschitz-continuous on $\R$ (in fact, $|\,|k_1|-|h_1|\,| \leqs |k_1 - h_1|$ for all $k_1,h_1 \in \R$). Summing up, for any $\eta \in (0,1)$ fulfilling $\eta \leqs \min(\etu,\etp)$ by elementary computations we obtain
\begin{align}
& \sup_{\k \in \R^3} \big|\la \ff_{\k}| \uu\!\otimes\!\up\ra\big| \lec \|\uu\! \otimes \up \|_{L^2_{\su,\sp}(\R^3)}\;, \label{exfoot1} \\
& \big|\la \ff_{\k}| \uu\!\otimes\!\up\ra - \la \ff_{\h}| \uu\!\otimes\!\up\ra\big| \lec \|\uu\! \otimes\up\|_{L^2_{\su,\sp}(\R^3)}\,|\k-\h|^{\eta}\;. \label{exfoot2}
\end{align}
Let us give a few more details about the derivation of the latter estimates. On the one hand, we have
\[\begin{aligned}
&\sup_{\k \in \R^3} \big|\la \ff_{\k}| \uu\!\otimes\!\up\ra\big|  \\
\lec\, & \Big[\|\Fou_{1}^{-1} \uu\|_{C^{0}(\ov{\R})} + \|\Fou_{1}^{-1} (\Te \uu)\|_{C^{0}(\ov{\R})}  \\ 
         & \qquad + \|\Fou_{1}^{-1} ((1-\Te) \uu)\|_{C^{0}(\ov{\R})} \Big] \|\Foup^{-1}\up\|_{C^{0}(\ov{\R^2})} \\ 
\lec\, & \Big[\|\Fou_{1}^{-1} \uu\|_{H^{\su}(\R)} + \|\Fou_{1}^{-1} (\Te \uu)\|_{H^{\su}(\R)}  \\ 
         &\qquad + \|\Fou_{1}^{-1} ((1-\Te) \uu)\|_{H^{\su}(\R)} \Big] \|\Foup^{-1}\up\|_{H^{\sp}(\R^2)} \,\lec\, \|\uu\! \otimes \up \|_{L^2_{\su,\sp}(\R^3)}\;.
\end{aligned}\]
Taking this into account, for $|\k - \h| \geqs 1$ we readily get $|\la \ff_{\k}| \uu \otimes \up\ra - \la \ff_{\h}| \uu \otimes \up\ra| \lec 2\,\sup_{\k \in \R^3} |\la \ff_{\k}| \uu\!\otimes\!\up\ra|$ $\lec \|\uu\! \otimes \up \|_{L^2_{\su,\sp}(\R^3)}\,|\k-\h|^{\eta}$. On the other hand, for $|\k-\h| < 1$ we have
\[\begin{aligned}
& \big|\la \ff_{\k}| \uu\!\otimes\!\up\ra \!-\! \la \ff_{\h}| \uu\!\otimes\!\up\ra\big| \\
\lec\, & \Big[\|\Fou_{1}^{-1}\! \uu\|_{C^{0,\etu}\!(\ov{\R})} \!+\! \|\Fou_{1}^{-1}\! (\Te \uu)\|_{C^{0,\etu}\!(\ov{\R})} \! \\ 
      & + \|\Fou_{1}^{-1}\!((1\!-\!\Te) \uu)\|_{C^{0,\etu}\!(\ov{\R})} \Big] \|\Foup^{-1}\!\up\|_{C^{0,\etp}\!(\ov{\R^2})} \Big(|k_1 \!-\! h_1|^{\etu}\! + |\kp \!-\! \hp|^{\etp} \Big)  \\
\lec\, & \Big[\|\Fou_{1}^{-1}\! \uu\|_{H^{\su}(\R)} \!+\! \|\Fou_{1}^{-1}\! (\Te \uu)\|_{H^{\su}(\R)} \\ 
        & +\|\Fou_{1}^{-1}\!((1\!-\!\Te) \uu)\|_{H^{\su}(\R)} \Big] \|\Foup^{-1}\up\|_{H^{\sp}(\R^2)}\,|\k-\h|^{\min(\etu,\etp)} \\ 
\lec\, & \|\uu\! \otimes\up\|_{L^2_{\su,\sp}(\R^3)}\,|\k-\h|^{\eta}\;.
\end{aligned}\]
Estimates \eqref{exfoot1} and \eqref{exfoot2} allow us to infer the relation \eqref{C0ffk} for $f = \uu \otimes \up$, which yields the thesis.\\
$ii)$ First of all, for all $z \in \C \backslash [0,+\infty)$, from \cite[p. 121, Cor. 2]{Naim} we infer that
\[
(R_0(z)\,f)(\x) = \int_{\R^3} d\k\; {\la \ff_{\k}|f\ra \over |\k|^2 - z}\;\ff_{\k}(\x) \qquad (\x \in \R^3)\;.
\]
On account of Proposition \ref{PropLAP0}, for any $f \in L^2_{\su,\sp}(\R^3)$ with $\su > 1/2$, $\sp > 1$ and for all $\lam \in \KK$, the condition $R^{+}_{0}(\lam) f = R^{-}_{0}(\lam) f$ can be rephrased as follows using the above integral kernel identity:
\[
\begin{aligned}
0 = & \lim_{\eps \downarrow 0}\, \la \big[ R_0(\lam + i \eps) - R_0(\lam - i \eps)\big] f \,|\, f \ra \\ 
 = & \lim_{\eps \downarrow 0}\int_{\R^3}\!\! d\k\; {2i \eps \over (|\k|^2 - \lam)^2 + \eps^2}\; |\la \ff_{\k} | f \ra|^2\;. 
\end{aligned}
\]
Recalling that the function $\k \mapsto \la \ff_{\k} | f \ra$ is, in particular, continuous (see item $i)$ of this Lemma) and denoting  with $d\sigma_{r}$ the spherical measure on the $2$-sphere $S^2_{r} := \{\k \in \R^3\,|\; |\k|^2 = r \}$ induced by the usual Lebesgue measure on $\R^3$, the above relation 
can be rewritten as 
\[
0 =  \lim_{\eps \downarrow 0} \int_{0}^{+\infty}\!\! dr\; {2i \eps \over (r - \lam)^2 + \eps^2} \int_{S^2_r}\!\! d\sigma_{r}(\k)\; |\la \ff_{\k} | f \ra|^2\;. 
\]
and yields  $0 = \!\int_{S^2_{\lam}} d\sigma_{\lam}(\k)\;|\la \ff_{\k} | f \ra|^2$. This suffices to infer that $|\la \ff_{\k} | f \ra|^2 = 0$ for almost every $\k \in S^2_{\lam}$, which in view of the continuity of $\k \mapsto \la \ff_{\k} | f \ra$ implies the thesis.
\endproof

\begin{prop}\label{PropR0pm}[Check of Hypothesis 8 of \cite{Ren}] Assume that $\su > 1$, $\sp > 3/2$ and let $\KK \subset (0,+\infty)$ be any compact subset. Then, for all $\lam \in \KK$ and for all $f \in L^2_{\su,\sp}(\R^3)$ such that $R^{+}_{0}(\lam) f = R^{-}_{0}(\lam) f$, there holds
\begin{equation*}
\|R_0^{\pm}(\lam) f \|_{L^2(\R^3)} \lec \|f\|_{L^2_{\su,\sp}(\R^3)}\;. \label{R0pmEq}
\end{equation*}
\end{prop}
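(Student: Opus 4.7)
The plan is to exploit the fact that $\AA_0$ admits a complete set of generalized eigenfunctions $\{\ff_{\k}\}_{\k \in \R^3}$, yielding a unitary generalized Fourier transform $\mathcal{U}: L^2(\R^3) \to L^2(\R^3)$, $(\mathcal{U}f)(\k) := \la \ff_{\k} | f\ra$, which diagonalizes $\AA_0$ into multiplication by $|\k|^2$ (see \cite[Ch. I, Sec. 3.1]{AlbBook} and \cite[Ch. VI, Sec. 21]{Naim}). By Plancherel, for any $\eps > 0$,
\[
\|R_0(\lam \pm i\eps) f\|^2_{L^2(\R^3)} \,=\, \int_{\R^3}\! d\k\; \frac{|(\mathcal{U}f)(\k)|^2}{(|\k|^2 - \lam)^2 + \eps^2}\,,
\]
so the goal reduces to bounding this integral uniformly in $\eps > 0$ and then passing to the limit $\eps \downarrow 0$.

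Set $g := \mathcal{U}f$. The assumptions $\siu > 1$ and $\sip > 3/2$ allow one to fix an exponent $\eta \in (1/2, 1)$ with $\eta \leq \min(\siu - 1/2, \sip - 1)$. By Lemma \ref{lemFk} item $i)$, $g \in C^{0,\eta}(\ov{\R^3})$ with $\|g\|_{C^{0,\eta}(\ov{\R^3})} \lec \|f\|_{L^2_{\siu,\sip}(\R^3)}$; by Lemma \ref{lemFk} item $ii)$ (applied to $\KK = \{\lam\}$), $g$ vanishes identically on the sphere $S^2_{\sqrt\lam}$. Combining these two facts with the elementary identity $\dist(\k, S^2_{\sqrt\lam}) = \bigl||\k| - \sqrt\lam\bigr|$ yields the key pointwise bound
\[
|g(\k)| \,\lec\, \|f\|_{L^2_{\siu,\sip}(\R^3)}\, \bigl||\k| - \sqrt\lam\bigr|^{\eta} \qquad \mbox{for all } \k \in \R^3\,.
\]

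Next, split the integral using a neighbourhood $U_\delta := \{\k \in \R^3 \,|\, \bigl||\k|- \sqrt\lam\bigr| < \delta\}$ of the sphere, with $\delta = \sqrt\lam/2$. On $U_\delta$, the factorization $(|\k|^2 - \lam)^2 = (|\k|-\sqrt\lam)^2 (|\k|+\sqrt\lam)^2$, the H\"older bound above and polar coordinates give
\[
\int_{U_\delta}\! d\k\; \frac{|g(\k)|^2}{(|\k|^2 - \lam)^2 + \eps^2} \,\lec\, \|f\|^2_{L^2_{\siu,\sip}(\R^3)} \int_{|\rho - \sqrt\lam|< \delta}\! d\rho\; |\rho - \sqrt\lam|^{2\eta - 2}\,,
\]
the last integral being finite \emph{precisely because} $\eta > 1/2$. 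On $\R^3 \setminus U_\delta$, the denominator is bounded away from zero uniformly in $\eps$ and decays like $|\k|^{-4}$ at infinity, while $|g(\k)| \lec \|f\|_{L^2_{\siu,\sip}(\R^3)}$ by the $C^0$ part of the H\"older bound from Lemma \ref{lemFk}; this part of the integral is therefore also controlled by $\|f\|^2_{L^2_{\siu,\sip}(\R^3)}$. All implicit constants can be chosen uniformly in $\lam \in \KK$, since $\sqrt\lam$ then ranges in a compact subinterval of $(0,+\infty)$.

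The limit $\eps \downarrow 0$ is finally handled by Fatou's lemma: by Proposition \ref{PropLAP0}, $R_0(\lam \pm i\eps) f \to R_0^\pm(\lam) f$ in $L^2_{-\siu,-\sip}(\R^3)$, hence pointwise almost everywhere on $\R^3$ along a subsequence, and the uniform $L^2$-bound established above persists in the limit. The conceptual obstacle and true content of the proof lies in the quantitative calibration of exponents: the integrability threshold $\eta > 1/2$ forced by the singularity at $S^2_{\sqrt\lam}$ is exactly matched by the strengthened weight conditions $\siu > 1$, $\sip > 3/2$ in the hypothesis, which via Lemma \ref{lemFk} yield H\"older continuity of $g$ with an exponent strictly greater than $1/2$.
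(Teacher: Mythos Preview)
Your proof is correct and follows essentially the same route as the paper's: generalized eigenfunction expansion for $\AA_0$, vanishing of $g=\mathcal{U}f$ on the critical sphere via Lemma~\ref{lemFk}~ii), and H\"older continuity from Lemma~\ref{lemFk}~i) to absorb the $|\,|\k|-\sqrt\lam\,|^{-2}$ singularity. Your presentation is in fact a bit more streamlined than the paper's, which splits into ``interior'' versus ``boundary'' points of $\KK$ (a distinction that is unnecessary, since your H\"older argument handles every $\lam\in\KK$ uniformly), and you are more explicit about the passage $\eps\downarrow 0$ via Fatou, which the paper leaves implicit.
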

\proof Let us fix arbitrarily $\eps_0 > 0$ and $\KK \subset (0,+\infty)$. Of course, on account of the definition \eqref{R0lpm} of the operators $R_0^{\pm}(\lam)$, the thesis follows as soon as we are able to derive the following uniform bound \eqref{R0Proof} for all $\eps \in (0,\eps_0)$, $\lam \in \KK$ and for some constant $c > 0$ depending only on $\eps_0$ and $\KK$:
\begin{equation}
\|R_0(\lam \pm i \eps) f \|_{L^2(\R^3)} \,\leqs\, c\; \|f\|_{L^2_{\su,\sp}(\R^3)}\;. \label{R0Proof}
\end{equation}
To this purpose, let us first remark that \cite[p. 111, Th. 2]{Naim} yields 
\[\|R_0(\lam \pm i \eps) f\|^2_{L^2(\R^3)} = \int_{\R^3} d\k\,  \frac{|\la \ff_{\k}| f \ra|^2}{\big|\,|\k|^2 - (\lam \pm i\eps)\big|^2}.\] 
Let us introduce the notation  $\R^3 \backslash \KK$ to denote the set of all $\k\in\R^3$ such that $|\k|^2\notin\KK$. Due to the assumption $R^{+}_{0}(\lam) f = R^{-}_{0}(\lam) f$, by item ii) of Lemma \ref{lemFk} we have $\la \ff_{\k}| f \ra = 0$ for all $\k \in \R^3$ with $|\k|^2 \!\in\! \KK$. Therefore, we obtain
\begin{equation}
\|R_0(\lam \pm i \eps) f\|^2_{L^2(\R^3)}\, = \int_{\R^3 \backslash \KK}\!\! d\k\;{|\la \ff_{\k}| f \ra|^2 \over (|\k|^2\! - \lam)^2 + \eps^2}\;. \label{R0eig}
\end{equation}
Assume that $\lam \in \KK$ is an interior point; in this case, we have $\big||\k|^2\! - \lam\big| \geqs \delta$ for some fixed $\delta > 0$ and for all $\k \in \R^3 \backslash \KK$. Thus, taking into account that \cite[p. 121, Cor. 2]{Naim} gives $\|f\|^2_{L^2(\R^3)} = \int_{\R^3} d\k\,|\la \ff_{\k}| f \ra|^2$, from Eq. \eqref{R0eig} we infer
\[
\|R_0(\lam \pm i \eps)\, f\|^2_{L^2(\R^3)}\, \leqs\, {1 \over \delta^2} \int_{\R^3 \backslash \KK}\!\! d\k\;|\la \ff_{\k}| f \ra|^2 \lec\, \|f\|^2_{L^2(\R^3)}\, \lec\, \|f\|^2_{L^2_{\su,\sp}(\R^3)}\;,
\]
where the last inequality obviously holds true for all $\su,\sp \geqs 0$.\\
Next, let us consider the case where $\lam \in \KK$ is a boundary point; in this situation, it is convenient to re-express Eq. \eqref{R0eig} as
\[
\begin{aligned}
\|R_0(\lam \pm i \eps) f\|^2_{L^2(\R^3)}\, = &  \int_{(\R^3 \backslash \KK)\,\cap\, \{||\k|^2 - \lam| \geqs 1\}}\!\!\! d\k\;{|\la \ff_{\k}| f \ra|^2 \over (|\k|^2\! - \lam)^2 + \eps^2}\;  \\ 
 & + \int_{(\R^3 \backslash \KK)\, \cap\, \{||\k|^2 - \lam| < 1\}}\!\!\! d\k\;{|\la \ff_{\k}| f \ra|^2 \over (|\k|^2\! - \lam)^2 + \eps^2}\;. 
\end{aligned}
\]
On the one hand, by computations similar to those described before we easily get 
\[\int_{(\R^3 \backslash \KK) \,\cap\, \{||\k|^2 - \lam| \geqs 1\}} d\k\,  \frac{ |\la \ff_{\k}| f \ra|^2}{(|\k|^2\! - \lam)^2 + \eps^2} \leqs \|f\|^2_{L^2_{\su,\sp}(\R^3)}\,.\]
 On the other hand, let us consider the unit vector $\hat{\k} := \k/|\k|$ and notice that item ii) of Lemma \ref{lemFk} implies $\la \ff_{\sqrt{\lam}\,\hat{\k}}|f \ra = 0$ for all $\lam \in \KK$; so, for all $\su > 1/2$, $\sp > 1$ and $\eta \in (0,1)$ with $\eta \leqs \min(\su-1/2,\sp - 1)$, by item i) of the same Lemma (see, in particular, Eq. \eqref{C0ffk}) we get $|\la \ff_{\k}|f \ra| = |\la \ff_{\k}|f \ra - \la \ff_{\sqrt{\lam}\,\hat{\k}}|f \ra| \lec \big||\k| \!-\! \sqrt{\lam}\big|^{\eta} \|f\|_{L^2_{\su,\sp}(\R^3)}$\,. Taking this into account, we infer that 
\[
 \begin{aligned}
& \int_{(\R^3 \backslash \KK) \cap \{||\k|^2 - \lam| < 1\}}\! d\k\; {|\la \ff_{\k}| f \ra|^2 \over \big((|\k|^2\! - \lam)^2 + \eps^2\big)} \\ 
\lec\, & \left(\int_{(\R^3 \backslash \KK)\, \cap \,\{||\k|^2 - \lam| < 1\}}\! d\k\; {\big|\,|\k| - \sqrt{\lam}\,\big|^{2\eta} \over (|\k|^2\! - \lam)^2}\right) \|f\|^2_{L^2_{\su,\sp}(\R^3)}\;.
\end{aligned}
\]
Under the stronger hypotheses $\su > 1$, $\sp > 3/2$ and $1/2 < \eta \leqs \min(\su-1/2,\sp-1)$, the latter relation yields
\[\begin{aligned}
& \int_{(\R^3 \backslash \KK) \cap \{||\k|^2 - \lam| < 1\}}\!\!\! d\k\;{|\la \ff_{\k}| f \ra|^2 \over (|\k|^2\! - \lam)^2 + \eps^2}  \\ 
\lec & \l(\int_{\{|r^2-\lam| < 1\}}\! dr\;{1 \over |r-\sqrt{\lam}|^{2-2\eta}}\r) \|f\|^2_{L^2_{\su,\sp}(\R^3)} \lec \|f\|^2_{L^2_{\su,\sp}(\R^3)} \;. 
\end{aligned}
\]
Summing up, the arguments described above prove Eq. \eqref{R0Proof} for all $\lam \in \KK$ (either in the interior or on boundary), thus implying the thesis.
\endproof

We are now ready to state the main result of this subsection.

\begin{thm}[LAP for $A_F$]\label{ThmLAPF}
Let $\siu > 1/2$, $\sip > 3/4$, and denote by $\sigma_p^+(A_F):= (0,+\infty)\cap \sigma_p(A_F)$ the (possibly empty) set of embedded eigenvalues of $A_F$. 

Then  $\sigma_p^+(A_F)$ is a discrete set and for any $\lam \in (0,+\infty)\backslash \sigma_p^+(A_F)$, the limits
\begin{equation*}
R^{\pm}_{F}(\lam) := \lim_{\eps \downarrow 0} R_F(\lam \pm i \eps) \label{RFlpm}
\end{equation*}
exist in $\Bou(L^2_{\siu,\sip}(\R^3),L^2_{-\siu,-\sip}(\R^3))$ and the convergence is uniform in any compact subset $\KK \subset (0,+\infty)\backslash\sigma_p^+(A_F)$.
\end{thm}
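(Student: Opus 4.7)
The plan is to invoke the abstract perturbation theorem of Renger (Theorem 7 and Proposition 10 of \cite{Ren}) with the scattering pair $(A_F,A_0)$, taking the weight indices used by Renger to be $\siu$ and $\sip$ themselves. All three hypotheses of Renger's theorem have been checked in the preceding propositions, so the main task is to verify that the numerical hypotheses $\siu>1/2$ and $\sip>3/4$ are precisely what the abstract machinery requires on the present couple.

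First, Hypothesis 1 follows from Proposition \ref{PropH1RF} applied to both $A_F$ and $A_0$ (the latter via Remark \ref{remR0}): for $d_z$ sufficiently large both $R_F(z)$ and $R_0(z)$ belong to $\Bou(L^2_{\siu,\sip}(\R^3))$, so one can take $c_R=\Re z_0$ with $z_0$ chosen as in those statements. Next, the first part of Hypothesis 8, namely the existence and continuity of the boundary values $R^{\pm}_0(\lam)$ in $\Bou(L^2_{\siu,\sip}(\R^3),L^2_{-\siu,-\sip}(\R^3))$ uniformly on compacts $\KK\subset(0,+\infty)$, is immediate from Proposition \ref{PropLAP0} since $\siu>1/2$ and $\sip>3/4>1/2$. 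The second, more delicate part of Hypothesis 8 requires the estimate $\|R_0^\pm(\lam)f\|_{L^2(\R^3)}\lec\|f\|_{L^2_{2\siu,2\sip}(\R^3)}$ whenever $R^+_0(\lam)f=R^-_0(\lam)f$, and this is exactly Proposition \ref{PropR0pm} applied with the doubled weights $(2\siu,2\sip)$; the doubled weights satisfy $2\siu>1$ and $2\sip>3/2$, which is precisely the threshold in Proposition \ref{PropR0pm} and which dictates the choice $\siu>1/2$, $\sip>3/4$ in the thesis. Finally, Hypothesis 9, which asks for the compactness of $R_F(\mu)-R_0(\mu)$ as a map from $L^2(\R^3)$ into some $L^2_{2\siu+\gamma,2\sip+\gamma}(\R^3)$, is provided by Proposition \ref{PropResComW} for $\mu$ sufficiently negative.

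With all three hypotheses of \cite{Ren} verified, Theorem 7 (and Proposition 10) of that paper immediately yields the conclusion: the set $\sigma_p^+(A_F)$ of embedded positive eigenvalues is discrete (in fact consists of eigenvalues of finite multiplicity with no accumulation points in $(0,+\infty)$), and for any $\lam\in(0,+\infty)\setminus\sigma_p^+(A_F)$ the one-sided limits $R^\pm_F(\lam)=\lim_{\eps\downarrow 0}R_F(\lam\pm i\eps)$ exist in $\Bou(L^2_{\siu,\sip}(\R^3),L^2_{-\siu,-\sip}(\R^3))$, with convergence uniform on compact subsets of $(0,+\infty)\setminus\sigma_p^+(A_F)$.

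The main obstacle in this argument is not any single computation but the alignment of the functional-analytic framework of the present paper with Renger's abstract hypotheses, in particular making sure that the weighted spaces $L^2_{\siu,\sip}(\R^3)$ (equipped with the tensor-product weights $w_{\siu,\sip}$) and the auxiliary compactness and boundedness results (Lemmata \ref{lemtaComW}, \ref{GcCompW} and Propositions \ref{PropH1RF}, \ref{PropResComW}) produce the required mapping properties with the correct doubled-weight thresholds; this bookkeeping is what forces the precise constraints $\siu>1/2$ and $\sip>3/4$, via the passage through $2\siu>1$, $2\sip>3/2$ in Proposition \ref{PropR0pm}.
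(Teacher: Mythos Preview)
Your proposal is correct and follows essentially the same route as the paper: verify Hypotheses 1, 8, and 9 of Renger via Propositions \ref{PropH1RF}, \ref{PropLAP0}, \ref{PropR0pm}, and \ref{PropResComW}, then invoke \cite[Th.~7 and Prop.~10]{Ren}. Your identification of the doubled-weight passage $2\siu>1$, $2\sip>3/2$ through Proposition \ref{PropR0pm} as the source of the constraints $\siu>1/2$, $\sip>3/4$ matches the paper's argument exactly.
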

\proof As already anticipated, the results derived previously in this work allow us to infer the thesis by a straightforward application of \cite[Th. 7]{Ren}; in the following we give more details about this claim.

Firstly, Proposition \ref{PropH1RF} shows that $R_0(z),R_F(z) \in \Bou(L^2_{\siu,\sip}(\R^3))$ for any $\siu\,,\sip \in \R$; this and the previously discussed properties of the corresponding operators $\AA_0,\AA_F$ prove Hypothesis 1 of \cite{Ren}. \\
Secondly, for all $\siu,\sip > 1/2$, Proposition \ref{PropLAP0} grants the existence of the limits $R^{\pm}_{0}(\lam) := \lim_{\eps \downarrow 0} R_0(\lam \pm i \eps)$ in $\Bou(L^2_{\siu,\sip}(\R^3),L^2_{-\siu,-\sip}(\R^3))$; furthermore, under the stronger assumptions $\siu > 1/2$, $\sip > 3/4$, by Proposition \ref{PropR0pm} we have $\|R^{\pm}_0(\lam) f\|_{L^2(\R^3)} \lec \|f\|_{L^2_{2\siu,2\sip}(\R^3)}$ for all $u\!\in\!L^2_{2\siu,2\sip}(\R^3)$ such that $R^{+}_0(\lam)f \!=\! R^{-}_0(\lam)f$. The above remarks prove that Hypothesis 8 of \cite{Ren} holds true.

Thirdly, Proposition \ref{PropResComW} implies, in particular,
\[R_F(z) - R_0(z) \in  \Com(L^2(\R^3), L^2_{2\siu + \ga,2\sip + \ga}(\R^3))\]
for any $\siu,\sip \in \R$ and $\ga > 0$, thus yielding Hypothesis 9 of \cite{Ren}.

Taking into account \cite[Prop. 10]{Ren}, the above arguments suffice to infer that all the hypotheses of \cite[Th. 7]{Ren} are fulfilled by the pair $\AA_0,\AA_F$.
\endproof

\section{An alternative formula for the resolvents difference\label{s:altres}}
The aim of this section is to re-write the resolvent difference $R_F(z) -  R_0(z)$ in a way that is convenient to get existence and completeness of the wave operators for the scattering couple $(A_{F}, A_{0})$  and to get a representation of the corresponding scattering matrix. In particular we show that $R_F(z) -  R_0(z)$ depends only on the restriction of the traces $\tau_{F}$ and $\tau_{0}$ to $\mbox{\rm supp}(F)$.\par
Let us recall the resolvent formulae that we obtained for the two self-adjoint operators $A_{0}$ and $A_{F}$, respectively (see Eq.s \eqref{R0} \eqref{RG}):
\begin{align}
R_0(z) := \Rf(z) - \al\,G_0(z)\,\Gamma_0^{-1}(z)\,\breve G_0(z)\;, \label{R0_A} \\
R_F(z) := \Rf(z) - \al\,G_F(z)\,\Gamma_F^{-1}(z)\,\breve G_F(z) \,, \label{RF_A}
\end{align}
where
$$
G_0(z)\in\Bou(L^{2}(\R^{2}),L^{2}(\R^{3}))\,,\qquad G_F(z)\in\Bou(L^{2}(\R^{2}),L^{2}(\R^{3}))\,,
$$
$$
\breve G_0(z)\in\Bou(L^{2}(\R^{3}),L^{2}(\R^{2}))\,,\qquad \breve G_F(z)\in\Bou(L^{2}(\R^{3}),L^{2}(\R^{2}))\,,
$$
\begin{equation}\label{marble0_A}
\Ga_0(z) := \big(1 + \al\,\tau_0 G_0(z)\big)\in\Bou(L^{2}(\R^{2}))\,, \qquad \Ga_F(z) :=\big( 1 + \al\,\tau_F G_F(z)\big)\in\Bou(L^{2}(\R^{2})) \,,
\end{equation}
$$
\Ga_0^{-1}(z) \in\Bou(L^{2}(\R^{2}))\,, \qquad \Ga_F^{-1}(z)\in\Bou(L^{2}(\R^{2})) \,.
$$\noindent
Here and below $z$ is any point in $\C\backslash[0,+\infty)$. 
The resolvent formula in the following Theorem resembles the one used in \cite{EK05}:
\begin{thm}\label{TS} The resolvent difference $R_{F}(z)-R_{0}(z)$ depends only on the restriction of the traces $\tau_{F}$ and $\tau_{0}$ to $\Sigma:=\mbox{\rm supp}(F)$: 
\begin{align}\label{RS_A}
R_{F}(z)
=R_{0}(z)+ g_{\Sigma}(z)\Lambda_{F,\Sigma}(z)\breve g_{\Sigma}(z)\,, \qquad 
z\in \C\backslash[0,+\infty)\,
\end{align}
where 
$$
\breve g_{\Sigma}(z):L^{2}(\R^{3})\to L^{2}(\Sigma)\oplus L^{2}(\Sigma)\,,\quad \breve g_{\Sigma}(z):=\tau_{\Sigma}R_{0}(z)\,,
$$
$$
\tau_{\Sigma}\!:\!H^{r+1/2}(\R^{3})\!\to \!L^{2}(\Sigma)\oplus L^{2}(\Sigma)\,,\!\quad\tau_{\Sigma} f\!:=(\tau_{F}f |\Sigma)\oplus(\tau_{0}f|\Sigma) \quad \mbox{for $r>0 $}\,,
$$
$$
g_{\Sigma}(z):L^{2}(\Sigma)\oplus L^{2}(\Sigma)\to L^{2}(\R^{3})
\,,\quad g_{\Sigma}(z):=(g_{\Sigma}(\bar z))^{*}\,,
$$
$$
\Lambda_{F,\Sigma}(z):L^{2}(\Sigma)\oplus L^{2}(\Sigma)\to L^{2}(\Sigma)\oplus L^{2}(\Sigma)\,,\quad 
$$
$$\Lambda_{F,\Sigma}(z):= - \,\alpha\, \big(1+\alpha J\tau_{\Sigma}g_{\Sigma}(z)\big)^{-1}J\,,\quad
J =  \begin{pmatrix}
 1 & 0 \\
 0 & -1  
 \end{pmatrix} \,.$$
\end{thm}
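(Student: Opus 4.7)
The plan is to derive \eqref{RS_A} by exploiting the locality of the deformation. Since $F$ vanishes outside $\Sigma$, the traces satisfy $\tau_F f(\xp) = \tau_0 f(\xp)$ for $\xp \notin \Sigma$, so at the level of bilinear forms
\[
\alpha(\tau_F^{*}\tau_F - \tau_0^{*}\tau_0) \,=\, \alpha\, \tau_\Sigma^{*} J\, \tau_\Sigma\,,
\]
the contributions from $\R^2 \setminus \Sigma$ cancelling identically. Informally then $A_F = A_0 + \alpha\, \tau_\Sigma^{*} J\, \tau_\Sigma$, and the Sherman--Morrison--Woodbury identity applied to $R_0(z)^{-1} + \alpha\, \tau_\Sigma^{*} J\, \tau_\Sigma$ yields
\[
R_F(z) = R_0(z) - \alpha\, R_0(z)\tau_\Sigma^{*}\bigl(1 + \alpha J\, \tau_\Sigma R_0(z) \tau_\Sigma^{*}\bigr)^{-1} J\, \tau_\Sigma R_0(z)\,,
\]
which is precisely \eqref{RS_A} once one identifies $g_\Sigma(z) = R_0(z)\tau_\Sigma^{*}$ and $\tau_\Sigma g_\Sigma(z) = \tau_\Sigma R_0(z)\tau_\Sigma^{*}$ (the latter by $R_0(\bar z)^{*} = R_0(z)$).

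To make this rigorous while bypassing the singular expression $\tau_\Sigma^{*} J\, \tau_\Sigma$, I would verify \eqref{RS_A} by purely algebraic manipulation of the two Kre\u{\i}n formulae \eqref{R0_A} and \eqref{RF_A}. Two identities play the central role: the resolvent-type identity
\[
R_F(z) - R_0(z) \,=\, -\,\alpha\, R_F(z)\tau_\Sigma^{*}J\, \tau_\Sigma R_0(z)\,,
\]
which follows from \eqref{R0_A}--\eqref{RF_A} by addition and subtraction of identical terms in the spirit of \eqref{ResDiff}, after using that $\tau_F - \tau_0$ factors through multiplication by the characteristic function of $\Sigma$; and the intermediate Kre\u{\i}n identity
\[
R_F(z)\, \tau_\Sigma^{*} \,=\, g_\Sigma(z)\bigl(1 + \alpha J\, \tau_\Sigma g_\Sigma(z)\bigr)^{-1}\,,
\]
which is obtained by evaluating $\tau_\Sigma R_F(z)\tau_\Sigma^{*}$ via \eqref{RF_A}, rearranging, and invoking the mapping properties furnished by Lemmata~\ref{lemtaFw}, \ref{lemGF}, \ref{lemGaIF}. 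Substitution of the second into the first produces \eqref{RS_A}.

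The principal technical obstacle will be establishing the invertibility of $1 + \alpha J\, \tau_\Sigma g_\Sigma(z)$ in $\Bou(L^2(\Sigma) \oplus L^2(\Sigma))$ for every $z \in \C \setminus [0,+\infty)$. For $d_z$ large one argues as in Lemma~\ref{l:lipschitz}: Lemmata~\ref{lemtaFw}, \ref{lemRfW}, \ref{lemGF} yield the estimate $\|\tau_\Sigma g_\Sigma(z)\| \lec d_z^{-\varepsilon}$ for some $\varepsilon>0$, and a Neumann series argument delivers the inverse, thereby establishing \eqref{RS_A} on this unbounded region. For arbitrary $z \in \C \setminus [0,+\infty)$, the explicit candidate inverse
\[
1 - \alpha J\, \tau_\Sigma R_F(z)\tau_\Sigma^{*}\,,
\]
which is bounded because $R_F(z) \in \Bou(L^2(\R^3))$ on that set by Remark~\ref{r:spectrum}, is verified to be both a left and a right inverse by using the resolvent-type identity above; by analytic continuation \eqref{RS_A} then extends to the whole of $\C \setminus [0,+\infty)$. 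The remaining Sobolev-space bookkeeping for $g_\Sigma$, $\breve g_\Sigma$ and $\Lambda_{F,\Sigma}$ is a routine application of the lemmata of Section~\ref{s:prelim}.
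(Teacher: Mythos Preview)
Your approach is correct and takes a genuinely different route from the paper. The paper never isolates the second-resolvent identity $R_F-R_0=-\alpha\,R_F\tau_\Sigma^*J\tau_\Sigma R_0$; instead it combines \eqref{R0_A} and \eqref{RF_A} into a single $2\times 2$ block expression $R_F-R_0=-\,g(z)\gamma(z)^{-1}\breve g(z)$ on $L^2(\R^2)\oplus L^2(\R^2)$ with $\gamma(z)=J/\alpha+\tau g(z)$ and $\tau=\tau_F\oplus\tau_0$, then splits $L^2(\R^2)^2$ along $\Sigma$ versus $\Sigma^c$ (further decomposing the $\Sigma^c$-block into symmetric and antisymmetric parts via projections $P_0,Q_0$), uses $\tau_F|\Sigma^c=\tau_0|\Sigma^c$ to kill the antisymmetric component, and applies the block matrix inversion formula twice. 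Invertibility of $J/\alpha+\tau_\Sigma g_\Sigma(z)$ is then read off from the known invertibility of the full $\gamma(z)$. Your Woodbury argument is shorter and more transparent; the paper's route has the virtue of staying entirely within the algebra of the Kre\u{\i}n data relative to $\Rf$, never appealing to the quadratic-form picture of $A_F$ and $A_0$.

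Two places in your outline need sharpening. First, the identity $R_F-R_0=-\alpha\,R_F\tau_\Sigma^*J\tau_\Sigma R_0$ does not fall out of \eqref{ResDiff} by addition and subtraction alone; the clean derivation uses that the sesquilinear forms satisfy $q_F-q_0=\alpha\langle J\tau_\Sigma\cdot,\tau_\Sigma\cdot\rangle$ on the common form domain $H^1(\R^3)$, together with the form second-resolvent identity (a purely algebraic derivation from \eqref{R0_A}--\eqref{RF_A} is possible but longer than you suggest). Second, boundedness of your candidate inverse $1-\alpha J\tau_\Sigma R_F(z)\tau_\Sigma^*$ on $L^2(\Sigma)^2$ requires $R_F(z)\in\Bou(H^{-1}(\R^3),H^1(\R^3))$, not merely $R_F(z)\in\Bou(L^2(\R^3))$ as you write; this does hold for every $z\in\C\setminus[0,+\infty)$ because the form domain of $A_F$ is $H^1(\R^3)$. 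With that in hand your explicit-inverse verification works directly for all such $z$, and the Neumann-series/analytic-continuation detour becomes unnecessary.
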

\begin{proof}
Combining \eqref{R0_A} and \eqref{RF_A}, one gets (here and below we use the matrix block operator notation) 
\begin{equation}\label{marble-1_A}
 R_F(z) -  R_0(z)  = -\alpha\, \big( G_F(z),\, G_0(z)\big)\!
 \begin{pmatrix}
 \Gamma_F^{-1}(z) & 0 \\ 
 0 & -\Gamma_0^{-1}(z)
 \end{pmatrix} \!
 \begin{pmatrix}
 \breve G_F(z) \\ \breve G_0(z)
 \end{pmatrix} .
\end{equation}
Defining
\begin{equation*}
\breve g_0(z) := \tau_0 R_0(z) \,,\qquad \breve g_F(z) := \tau_F R_0(z) 
\end{equation*}
and 
\begin{equation*}
 g_0(z) :=\breve  g_0(\bar z)^*\,, \qquad  g_F(z) := \breve g_F(\bar z)^* \,,
\end{equation*}
one has
\begin{equation*}
\breve g_0(z) = \Gamma_0^{-1}(z)\,  \breve G_0(z)\,, \qquad g_0(z) = G_0(z)\, \Gamma_0^{-1}(z)
\end{equation*}
and so 
\begin{equation}\label{marble1_A}
\breve G_0(z) = \Gamma_0(z)\, \breve g_0(z)\,, \qquad G_0(z) = g_0(z)\, \Gamma_0(z).
\end{equation}
From the latter identity, together with the definition of $\Gamma_0(z)$ given above, one infers that 
$$
\gamma_0(z) := \big(1-\alpha \, \tau_0 g_0(z)\big)\in\Bou(L^{2}(\R^{2}))
$$
has a bounded inverse:
\begin{equation}\label{marble1.1_A}
 \gamma_0^{-1}(z)=\Gamma_0(z)\,.
\end{equation}
Hence, by Eq. \eqref{R0_A}, 
\begin{equation*}
 \Rf(z) = R_0(z)+  \al\,g_0(z)\,\gamma_0^{-1}(z)\,\breve g_0(z) .
\end{equation*}
By applying $\tau_F$ to the latter equation it follows that 
\begin{equation}\label{marble2_A}
\breve G_F(z) = \tau_F  \Rf(z) = \breve g_F(z)+  \al\, \tau_F g_0(z)\,\gamma_0^{-1}(z)\,\breve g_0(z) 
\end{equation}
and, by taking the adjoint (in $\bar z$), 
\begin{equation}\label{marble3_A}
G_F(z) = g_F(z)+  \al\,  g_0(z)  \,\gamma_0^{-1}(z) \, \tau_0 g_F(z),
\end{equation}
where we used $ (\tau_F g_0(\bar z))^* = (\breve g_F(\bar z) \tau_0^*)^* = \tau_0 g_F(z)$. 
\par
By Eq.s \eqref{marble1_A}, \eqref{marble2_A} and \eqref{marble3_A} (together with Eq. \eqref{marble1.1_A}) it follows that 
\begin{equation*} 
\begin{aligned}
\begin{pmatrix}
 \breve G_F(z) \\ \breve G_0(z)
 \end{pmatrix} = & 
 \begin{pmatrix}
\breve g_F(z)+  \al\, \tau_F g_0(z)\,\gamma_0^{-1}(z)\,\breve g_0(z) \\ \gamma_0^{-1}(z) \,\breve g_0(z)
 \end{pmatrix} \\ 
 = &   
 \begin{pmatrix}
1 &  \al\, \tau_F g_0(z)\,\gamma_0^{-1}(z) \\ 0 &  \gamma_0^{-1}(z)
 \end{pmatrix}
\begin{pmatrix}
\breve g_F(z) \\ \breve g_0(z)
 \end{pmatrix}.
 \end{aligned}
\end{equation*}
Similarly, for $ (G_F(z),\, G_0(z))$ one has that 
\begin{align*}
 (G_F(z),\, G_0(z)) = & \,\big(g_F(z)+  \al\,  g_0(z)  \,\gamma_0^{-1}(z) \, \tau_0 g_F(z), g_0(z) \gamma_0^{-1}(z) \big) \\ 
 = & \,\big(g_F(z), g_0(z) \big) 
 \begin{pmatrix}
  1&   0 \\ 
   \al \,\gamma_0^{-1}(z) \, \tau_0 g_F(z) &  \gamma_0^{-1}(z)
 \end{pmatrix}.
\end{align*}
By Eq.s \eqref{marble0_A} and \eqref{marble3_A}, it follows that 
\begin{equation*}
\Gamma_F(z) =   1+ \alpha \, \tau_F g_F(z) + \alpha^2 \, \tau_F  g_0(z)  \,\gamma_0^{-1}(z) \, \tau_0 g_F(z).
\end{equation*}
Now, let us set 
\begin{equation*}
{\Ma_1}(z)= 1 + \alpha \, \tau_F g_F(z)  \,,\quad \quad {\Ma_2}(z) = \al\, \tau_F g_0(z) \,,
\end{equation*}
\begin{equation*}
{\Ma_3}(z) =    \al \, \tau_0 g_F(z) \,,\quad{\Ma_4}(z)= - \gamma_0 (z)\,. 
\end{equation*}
With this notation, one has 
\begin{equation*}
\Gamma_F(z)=  {\Ma_1}(z) - {\Ma_2}(z) {\Ma_4}^{-1}(z){\Ma_3}(z)\,, \qquad  \Gamma_0(z) = - {\Ma_4}^{-1}(z)
\end{equation*}
and (here we temporarily omit the dependence on $z$) 
\begin{equation*}\begin{aligned}
& \begin{pmatrix}
  1&   0 \\ 
   \al \,\gamma_0^{-1} \, \tau_0 g_F &  \gamma_0^{-1}
 \end{pmatrix}
\begin{pmatrix}
 \Gamma_F^{-1} & 0 \\ 
 0 & -\Gamma_0^{-1}
 \end{pmatrix}
 \begin{pmatrix}
1 &  \al\, \tau_F g_0\,\gamma_0^{-1} \\ 0 &  \gamma_0^{-1}
 \end{pmatrix}  \\
 = &
 \begin{pmatrix}
  1&   0 \\ 
 -{\Ma_4}^{-1} {\Ma_3} & - {\Ma_4}^{-1}
 \end{pmatrix} 
\begin{pmatrix}
\big(   {\Ma_1} - {\Ma_2} {\Ma_4}^{-1}{\Ma_3} \big)^{-1} & 0 \\ 
 0 & {\Ma_4}
 \end{pmatrix}
 \begin{pmatrix}
1 & - {\Ma_2} {\Ma_4}^{-1}  \\ 0 &-{\Ma_4}^{-1}
 \end{pmatrix}\\ 
 = &
 \begin{pmatrix}
\big(   {\Ma_1} - {\Ma_2} {\Ma_4}^{-1}{\Ma_3} \big)^{-1}  & - \big(   {\Ma_1} - {\Ma_2} {\Ma_4}^{-1}{\Ma_3} \big)^{-1}   {\Ma_2} {\Ma_4}^{-1}   \\ 
 -{\Ma_4}^{-1} {\Ma_3}\big(   {\Ma_1} - {\Ma_2} {\Ma_4}^{-1}{\Ma_3} \big)^{-1}   &{\Ma_4}^{-1} {\Ma_3}\big(   {\Ma_1} - {\Ma_2} {\Ma_4}^{-1}{\Ma_3} \big)^{-1}   {\Ma_2} {\Ma_4}^{-1} +  {\Ma_4}^{-1}
 \end{pmatrix} \\ 
  = &
 \begin{pmatrix}
 {\Ma_1}&   {\Ma_2}  \\
 {\Ma_3} & {\Ma_4} 
 \end{pmatrix}^{-1} 
  =
 \begin{pmatrix}
 1 + \alpha \, \tau_F g_F & \al\, \tau_F g_0 \\
 \al \, \tau_0 g_F  & -( 1 - \alpha \, \tau_0 g_0)
 \end{pmatrix}^{-1}\,,
 \end{aligned}
 \end{equation*}\noindent
where  we used the inversion  formula for block matrices.  Hence we can set 
\begin{equation*}
\gamma(z) :=\frac{J}{\alpha}
  +
  \begin{pmatrix}
  \tau_F g_F(z) &  \tau_F g_0(z) \\
  \tau_0 g_F(z)  &  \tau_0 g_0(z)
 \end{pmatrix} \,;\qquad J =  \begin{pmatrix}
 1 & 0 \\
 0 & -1  
 \end{pmatrix} \,,
 \end{equation*}
 and we have that 
 $$
 \gamma(z)\in\Bou\big(L^{2}(\R^{2})\oplus L^{2}(\R^{2})\big)
 $$
 has a bounded inverse
\begin{equation*}\gamma^{-1}(z)   = \alpha
  \begin{pmatrix}
  1&   0 \\ 
   \al \,\gamma_0(z)^{-1} \, \tau_0 g_F(z) &  \gamma_0(z)^{-1}
 \end{pmatrix}
\begin{pmatrix}
 \Gamma_F(z)^{-1} & 0 \\ 
 0 & -\Gamma_0(z)^{-1}
 \end{pmatrix}
 \begin{pmatrix}
1 &  \al\, \tau_F g_0(z)\,\gamma_0(z)^{-1} \\ 0 &  \gamma_0(z)^{-1}
 \end{pmatrix}.
 \end{equation*}\noindent
Going back to the formula for the resolvent difference, see Eq. \eqref{marble-1_A}, we set 
 \begin{equation*}
g(z) :=  \big(g_F(z),\, g_0(z)\big) \,,\qquad  
  \breve g(z) := \begin{pmatrix}
 \breve g_F(z) \\ \breve g_0(z)
 \end{pmatrix}
 \end{equation*}
and we have that
\begin{equation*}
 R_F(z) -  R_0(z) = -\, g(z) \, \gamma^{-1}(z) \,  \breve g(z)\;. 
\end{equation*}
We remark that all the quantities at the r.h.s. of the latter equation are written in terms of the operator $R_0(z)$ and of the traces $\tau_0$ and $\tau_F$. 
\par
Since $\gamma(z)$ is a continuous bijection from $L^{2}(\R^{2})\oplus L^{2}(\R^{2})$ onto itself,
from now on we work in the following setting:
$$
\tau:=\tau_{F}\oplus\tau_{0}\in\Bou\big(H^{s}(\R^{3}), L^{2}(\R^{2})\oplus L^{2}(\R^{2})\big)\,,\qquad s>1/2\,,
$$
$$\breve g(z):=\tau R_{0}(z)\in \Bou\big(L^{2}(\R^{3}), L^{2}(\R^{2})\oplus L^{2}(\R^{2})\big)\,,$$ 
$$ g(z):=\breve g(\bar z)^{*}\in \Bou\big(L^{2}(\R^{2})\oplus L^{2}(\R^{2}), L^{2}(\R^{3})\big)\,,$$ 
$$
\tau g(z)\in\Bou\big(L^{2}(\R^{2})\oplus L^{2}(\R^{2})\big)\,.
$$
We freely use the identifications 
\[\begin{aligned}
L^{2}(\R^{2})\oplus L^{2}(\R^{2})\equiv\, & L^{2}(\Sigma)\oplus L^{2}(\Sigma^{c})\oplus 
L^{2}(\Sigma)\oplus L^{2}(\Sigma^{c}) \\ 
\equiv\, & L^{2}(\Sigma)\oplus L^{2}(\Sigma)\oplus 
L^{2}(\Sigma^{c})\oplus L^{2}(\Sigma^{c})\,.
\end{aligned}
\]
Let us now introduce, in the component 
$L^{2}(\Sigma^{c})\oplus L^{2}(\Sigma^{c})$, the orthogonal projections 
$$
P_{0}(\phi\oplus\psi):=\frac{\phi+\psi}2\oplus \frac{\phi+\psi}2\,,\qquad Q_{0}:=\uno-P_{0}\,.
$$
This induces the further decomposition 
\begin{equation}\label{decomp_A}
L^{2}(\R^{2})\oplus L^{2}(\R^{2})\equiv  L^{2}(\Sigma)\oplus L^{2}(\Sigma)\oplus 
\text{Range}(P_{0})\oplus \text{Range}(Q_{0})\,.
\end{equation}
We then define the orthogonal projectors in $L^{2}(\R^{2})\oplus L^{2}(\R^{2})$
$$
P=\uno\oplus P_{0}\,,\qquad Q=\uno\oplus Q_{0}\,.
$$
From now on a vector  $\phi_{F}\oplus\phi_{0}\in L^{2}(\R^{2})\oplus L^{2}(\R^{2})$ will be 
decomposed as $$\phi_{F}\oplus\phi_{0}\equiv\phi_{\circ}\oplus\phi_{+}\oplus\phi_{-}\,,$$ where 
$$\phi_{\circ}=\phi_{F}^{\Sigma}\oplus\phi_{0}^{\Sigma}\,,$$
$$
\phi_{+}:=P_{0}(\phi_{F}^{\Sigma^{c}}\oplus
\phi_{0}^{\Sigma^{c}})\,,\quad\phi_{-}:=Q_{0}(\phi_{F}^{\Sigma^{c}}\oplus
\phi_{0}^{\Sigma^{c}})\,,$$
$$
\phi_{F/0}^{\Sigma/\Sigma^{c}}:=\phi_{F/0}|\Sigma/\Sigma^{c}\,,
$$
and we used the notation $\phi|\Sigma$ (resp. $\phi|\Sigma^{c}$) to denote the restriction of the function $\phi$ to the set $\Sigma$ (resp. $\Sigma^c$). 

We also introduce the decompositions
$$
\tau\equiv \tau_{\Sigma}\oplus\tau_{\Sigma^{c}}\equiv P\tau\oplus Q_{0}\tau_{\Sigma^{c}}\equiv\tau_{\Sigma}\oplus P_{0}\tau_{\Sigma^{c}}\oplus Q_{0}\tau_{\Sigma^{c}}
$$
where $\tau_{\Sigma/\Sigma^{c}} f:=\tau f|\Sigma/\Sigma^{c}$. This induces the further decompositions 
$$\breve g(z)\equiv\breve g_{\Sigma}(z)\oplus\breve g_{\Sigma^{c}}(z)
\equiv P\breve g(z)\oplus Q_{0}\breve g_{\Sigma^{c}}(z)
\equiv 
\breve g_{\Sigma}(z)\oplus P_{0}\breve g_{\Sigma^{c}}(z)\oplus Q_{0}\breve g_{\Sigma^{c}}(z)
\,,
$$
where $\breve g_{\Sigma/\Sigma^{c}}(z) f:=(\tau R_{0}(z)f)|{\Sigma/\Sigma^{c}}$.\par
Since $\tau_{F}f|\Sigma^{c}=\tau_{0}f|\Sigma^{c}$, one has $Q_{0}\tau_{\Sigma^{c}}=Q_{0}\breve g_{\Sigma^{c}}(z)=0$
and so 
$$\text{Range}(\tau)\subseteq \text {Range}(P)\,,\qquad 
\text{Range}(\breve g(z))\subseteq \text {Range}(P)\,.
$$
Thus $P\breve g(z)=\breve g(z)$; by duality $g(z)=g(z)P$ and so $g(z)(\zero\oplus\zero\oplus Q_{0})=0$. Moreover
$$
\tau g(z)=P\tau g(z)P\,.
$$
Equivalently, using block operator matrix notation with respect to the decomposition \eqref{decomp_A} and setting $g_{\Sigma/\Sigma^{c}}(z):=\breve g_{\Sigma/\Sigma^{c}}(\bar z)^{*}$, one has
$$
\tau g(z)=\begin{pmatrix}\tau_{\Sigma}g_{\Sigma}(z)&\tau_{\Sigma}g_{\Sigma^{c}}(z)&\zero\\
\tau_{\Sigma^{c}}g_{\Sigma}(z)&\tau_{\Sigma^{c}}g_{\Sigma^{c}}(z)&\zero\\
\zero&\zero&\zero
\end{pmatrix}\,.
$$
Then, since
$$
J:\text{Range}(P)\to \text{Range}(Q)\,,\qquad J:\text{Range}(Q)\to \text{Range}(P)\,,
$$ 
one gets 
\begin{equation}\label{gamma_A}
\gamma(z)=\begin{pmatrix}
{J}/{\alpha}+\tau_{\Sigma}g_{\Sigma}(z)&\tau_{\Sigma}g_{\Sigma^{c}}(z)&\zero\\
\tau_{\Sigma^{c}}g_{\Sigma}(z)&\tau_{\Sigma^{c}}g_{\Sigma^{c}}(z)&{J}/\alpha\\
\zero&{J}/\alpha&\zero
\end{pmatrix}\,.
\end{equation}
Since $\gamma(z)$ is surjective, for any given $\psi_{\circ}\in L^{2}(\Sigma)\oplus L^{2}(\Sigma)$ there exists $\phi_{\circ}\oplus\phi_{+}\oplus\phi_{-}\equiv\phi_{F}\oplus\phi_{0}\in L^{2}(\R^{2})\oplus L^{2}(\R^{2})$ such that 
$$
\psi_{\circ}\oplus0 \oplus 0=\gamma(z)(\phi_{\circ}\oplus\phi_{+}\oplus\phi_{-})\,.
$$
By Eq. \eqref{gamma_A}, since $J$ is injective, one gets $\phi_{+}=0$ and $\psi_{\circ}=({J}/{\alpha}+\tau_{\Sigma}g_{\Sigma}(z))\phi_{\circ}$. Therefore ${J}/{\alpha}+\tau_{\Sigma}g_{\Sigma}(z)$ is surjective. Since ${J}/{\alpha}+\tau_{\Sigma}g_{\Sigma}(z)=
({J}/{\alpha}+\tau_{\Sigma}g_{\Sigma}(\bar z))^{*}$, 
$\mbox{ker}({J}/{\alpha}+\tau_{\Sigma}g_{\Sigma}(z))=\ran({J}/{\alpha}+\tau_{\Sigma}g_{\Sigma}(\bar z))^{\perp}=\{0\}$. Hence  ${J}/{\alpha}+\tau_{\Sigma}g_{\Sigma}(z)$ is a continuous bijection and so, by the inverse mapping theorem, has a bounded inverse. We introduce the notation 
\[
\begin{aligned}
\Lambda_{F,\Sigma}(z):= & - \big({J}/{\alpha}+\tau_{\Sigma}g_{\Sigma}(z)\big)^{-1}  \\ 
= & -\alpha\,\big(1+\alpha J\tau_{\Sigma}g_{\Sigma}(z)\big)^{-1}{J}\in\Bou(L^{2}(\Sigma)\oplus L^{2}(\Sigma))\,.
\end{aligned}
\]
Therefore, using again the inversion formula for block operator matrices, 
\begin{equation*}
\gamma(z)^{-1} 
= \begin{pmatrix}
- \Lambda_{F,\Sigma}(z)&\zero&\alpha \Lambda_{F,\Sigma}(z)\tau_{\Sigma}g_{\Sigma^{c}}(z)J\\
\zero&\zero&\alpha J\\
\alpha J\tau_{\Sigma^{c}}g_{\Sigma}(z) \Lambda_{F,\Sigma}(z) &\alpha J&-\alpha^{2}J\left(\tau_{\Sigma^{c}}g_{\Sigma}(z) \Lambda_{F,\Sigma}(z)\tau_{\Sigma}g_{\Sigma^{c}}(z)+ \tau_{\Sigma^{c}}g_{\Sigma^{c}}(z)\right){J}
\end{pmatrix}.
\end{equation*}\noindent
In conclusion,
\[
g(z)\,\gamma(z)^{-1}\,\breve g(z)=  - g_{\Sigma}(z)\,\Lambda_{F,\Sigma}(z)\,\breve g_{\Sigma}(z)
\]
and the proof is concluded.
\end{proof}

\section{Existence and asymptotic completeness of the wave operators\label{s:wo}}
The aim of this section is to prove the existence and completeness of the wave operators relative to the couple $(A_F,A_0)$. We make use of the resolvent formula given in Theorem \ref{TS} together with \cite[Th. 2.8]{MP} and LAP.
 Recall that $ \si(A_0) =  \si_{ac}(A_0) =[0,+\infty)$.
\begin{thm}\label{CorWd22} Let $A_F,A_0$ be defined as above.  Then,
the wave operators
\begin{equation*}\label{waveop1}
\dd{W_{\pm}(A_F,A_0) := \slim_{t \to \pm \infty} e^{it A_F}e^{-it A_0} } \,,
\end{equation*}
\begin{equation*}
\dd{W_{\pm}(A_0,A_F) := \slim_{t \to \pm \infty} e^{it A_0}e^{-it A_F} P_{ac}(A_F)} ,
\end{equation*}
where $P_{ac}(A_F)$ denotes the orthogonal projector on the absolutely continuous subspace relative to $A_F$,  exist and are asymptotically complete, i.e., are complete and $\sigma_{sc}(A_{F})=\emptyset$.
\end{thm}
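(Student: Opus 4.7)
The plan is to combine the resolvent factorization of Theorem \ref{TS} with the abstract criterion \cite[Th. 2.8]{MP} (a Birman--Yafaev-type statement) and the LAPs proved in Section \ref{s:LAP}. The resolvent difference reads
\begin{equation*}
R_F(z)-R_0(z)=g_\Sigma(z)\,\Lambda_{F,\Sigma}(z)\,\breve g_\Sigma(z),
\end{equation*}
with $\breve g_\Sigma(z)=\tau_\Sigma R_0(z)$ and $g_\Sigma(z)=\breve g_\Sigma(\bar z)^*$, all valued in the auxiliary Hilbert space $\mathcal{K}:=L^2(\Sigma)\oplus L^2(\Sigma)$. The decisive feature is that $\Sigma=\supp(F)$ is \emph{compact}, so $\mathcal{K}$ embeds continuously into $L^2_{\sip}(\R^2)\oplus L^2_{\sip}(\R^2)$ and into $L^2_{-\sip}(\R^2)\oplus L^2_{-\sip}(\R^2)$ for every $\sip\in\R$. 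This localisation lets one move freely between weighted and unweighted spaces in the outer factors.

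To apply \cite[Th. 2.8]{MP} I would verify its three structural hypotheses. First, LAP for the ``unperturbed'' operator $A_0$ on weighted Sobolev spaces is given by Proposition \ref{PropLAP0}. Second, the boundary values $\breve g_\Sigma(\lambda\pm i0):L^2_{\siu,\sip}(\R^3)\to\mathcal{K}$ and, by duality, $g_\Sigma(\lambda\pm i0):\mathcal{K}\to L^2_{-\siu,-\sip}(\R^3)$ exist and are locally H\"older continuous in $\lambda\in(0,+\infty)$, for suitable $\siu,\sip>1/2$: this is Proposition \ref{PropLAP0} composed with the trace $\tau_\Sigma$, using Lemma \ref{lemtaFw} and Remark \ref{remta0} together with the compactness of $\Sigma$. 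Third, the inner factor $\Lambda_{F,\Sigma}(\lambda\pm i0)=-\alpha\bigl(1+\alpha J\tau_\Sigma g_\Sigma(\lambda\pm i0)\bigr)^{-1}J$ is locally continuous in $\Bou(\mathcal{K})$ on $(0,+\infty)\setminus\sigma_p^+(A_F)$: continuity of $\tau_\Sigma g_\Sigma(\lambda\pm i0)$ again follows from the second point, while boundedness of the inverse away from embedded eigenvalues comes from a Birman--Schwinger-type identification of the kernel of $1+\alpha J\tau_\Sigma g_\Sigma(\lambda\pm i0)$ with $L^2$-eigenfunctions of $A_F$ at energy $\lambda$, combined with the fact (Theorem \ref{ThmLAPF}) that $\sigma_p^+(A_F)$ is discrete.

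Granted these three points, \cite[Th. 2.8]{MP} yields the existence of the wave operators $W_\pm(A_F,A_0)$ and $W_\pm(A_0,A_F)$ and their completeness. Asymptotic completeness further requires the triviality of the singular continuous spectrum of $A_F$: on $(0,+\infty)$ this is a standard byproduct of the LAP of Theorem \ref{ThmLAPF} (the existence of boundary values of $R_F(\lambda\pm i0)$ in $\Bou(L^2_{\siu,\sip},L^2_{-\siu,-\sip})$ is incompatible with any singular continuous mass on the positive half-line), while below zero $\sigma(A_F)$ is empty by Remark \ref{r:spectrum}, and $\sigma_{\mathrm{ess}}(A_F)=[0,+\infty)$ has already been observed after Proposition \ref{PropResComW} as a consequence of the compactness of $R_F(z)-R_0(z)$ on $L^2(\R^3)$.

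The main obstacle I expect is the uniform control of the inverse $\bigl(1+\alpha J\tau_\Sigma g_\Sigma(\lambda\pm i0)\bigr)^{-1}$ off the discrete exceptional set $\sigma_p^+(A_F)$, together with the Hölder-type modulus of continuity in $\lambda$ required by \cite[Th. 2.8]{MP}; this is precisely where the LAP for $A_F$ (not only for $A_0$) is indispensable, and where one must carefully identify the spectral meaning of any zeros of $1+\alpha J\tau_\Sigma g_\Sigma(\lambda\pm i0)$.
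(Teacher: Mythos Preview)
Your overall strategy---Theorem \ref{TS}'s factorization, \cite[Th.~2.8]{MP}, and the LAPs of Section \ref{s:LAP}---is exactly the paper's. The one substantive divergence is in how you propose to control the inner factor. The hypotheses actually checked in the paper are the two uniform bounds
\[
\sup_{(\lambda,\epsilon)\in I\times(0,1)}\sqrt{\epsilon}\,\|g_\Sigma(\lambda\pm i\epsilon)\|_{\Bou(\mathcal K,L^2(\R^3))}<\infty,\qquad
\sup_{(\lambda,\epsilon)\in I\times(0,1)}\|\Lambda_{F,\Sigma}(\lambda\pm i\epsilon)\|_{\Bou(\mathcal K)}<\infty,
\]
for $\overline I\subset(0,+\infty)\setminus\sigma_p^+(A_F)$; no H\"older modulus is needed. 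The first follows from LAP for $A_0$ via the inequality \cite[Eq.~(3.16)]{MP}, essentially as you indicate. For the second, rather than a Birman--Schwinger identification of $\ker(1+\alpha J\tau_\Sigma g_\Sigma(\lambda\pm i0))$ with embedded eigenfunctions, the paper invokes the algebraic identity \cite[Eq.~(4.2), Lem.~4.2]{MP},
\[
\Lambda_{F,\Sigma}(\lambda\pm i\epsilon)=\Lambda_{F,\Sigma}(\mu)\bigl[\uno+(\lambda-\mu\pm i\epsilon)\,\breve g_\Sigma(\mu)\bigl(\uno-(\lambda-\mu\pm i\epsilon)R_F(\lambda\pm i\epsilon)\bigr)g_\Sigma(\mu)\Lambda_{F,\Sigma}(\mu)\bigr],
\]
which expresses $\Lambda_{F,\Sigma}$ near the real axis in terms of the fixed bounded operators $\Lambda_{F,\Sigma}(\mu)$, $g_\Sigma(\mu)$, $\breve g_\Sigma(\mu)$ and the resolvent $R_F(\lambda\pm i\epsilon)$ itself. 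LAP for $A_F$ (Theorem \ref{ThmLAPF}) then gives the uniform bound directly, bypassing any kernel analysis. Your Birman--Schwinger route is viable but requires you to supply the spectral identification and a compactness/Fredholm argument; the paper's identity makes this unnecessary. The absence of singular continuous spectrum is handled as you say.
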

\begin{proof} Taking into account the resolvent formula \eqref{RS_A}, according to \cite[Th. 2.8 and Rem. 2.1]{MP}, in order to get existence and completeness of the wave operators one needs to show that 
\begin{equation}\label{MP1}
\sup_{(\lambda,\epsilon)\in I\times(0,1)}\sqrt{\epsilon}\,\|g_{\Sigma}(\lambda\pm i\epsilon)\|_{\Bou(L^{2}(\Sigma)\oplus L^{2}(\Sigma),L^{2}(\R^{3}))}<+\infty
\end{equation}
and
\begin{equation}
\label{MP2}
\sup_{(\lambda,\epsilon)\in I\times(0,1)}\|\Lambda_{F,\Sigma}(\lambda\pm i\epsilon)\|_{\Bou(L^{2}(\Sigma)\oplus L^{2}(\Sigma))}<+\infty\,,
\end{equation}
for any open and bounded interval $I$ such that $\overline I\subset \R\backslash(\{0\}\cup \sigma^{+}_{p}(A_{F}))$. Since $z\mapsto g_{\Sigma}(z)$ and $z\mapsto \Lambda_{F,\Sigma}(z)$ are continuous on $\rho(A_{0})=\C\backslash[0,+\infty)$, it suffices to prove \eqref{MP1} and \eqref{MP2} whenever 
$\overline I\subset (0,+\infty)\backslash\sigma^{+}_{p}(A_{F})$.\par
Let $\mu\in\rho(A_{0})\cap\R=(-\infty,0)$; by the inequality (see \cite[Eq. (3.16)]{MP})
 \begin{align*}
&\epsilon\,\|g_{\Sigma}(\lambda\pm i\epsilon)\|_{\Bou(L^2(\Sigma)\oplus L^2(\Sigma),L^{2}(\R^{3}))}^{2}\\
\leqs &\; \frac12\,\big(|\mu-\lambda|+\epsilon\big)\|g_{\Sigma}(\mu)\|_{\Bou(L^2(\Sigma)\oplus L^2(\Sigma),L_{\sigma_{1},\sigma_{\parallel}}^{2}(\R^{3}))}\times\\
&\times\Big(
\|\breve g_{\Sigma}(\lambda+i\epsilon)\|_{\Bou(L^{2}_{\sigma_{1},\sigma_{\parallel}}(\R^{3}),{L^2(\Sigma)\oplus L^2(\Sigma))}} +
\|\breve g_{\Sigma}(\lambda-i\epsilon)\|_{\Bou(L^{2}_{\sigma_{1},\sigma_{\parallel}}(\R^{3}),{L^2(\Sigma)\oplus L^2(\Sigma))}}\Big)\\
\leqs &\, \frac12\,\big(|\mu-\lambda|+\epsilon\big)\|g_{\Sigma}(\mu)\|_{\Bou(L^2(\Sigma)\oplus L^2(\Sigma),L_{\sigma_{1},\sigma_{\parallel}}^{2}(\R^{3}))}\|\tau_{\Sigma}\|_{\Bou(H^{1}_{-\sigma_{1},-\sigma_{\parallel}}(\R^{3}),{L^2(\Sigma)\oplus L^2(\Sigma))}}\times\\
&\times\Big(
\|R_{0}(\lambda+i\epsilon)\|_{\Bou(L^{2}_{\sigma_{1},\sigma_{\parallel}}(\R^{3}),H^{1}_{-\sigma_{1},-\sigma_{\parallel}}(\R^{3}))}  +
\|R_{0}(\lambda-i\epsilon)\|_{\Bou(L^{2}_{\sigma_{1},\sigma_{\parallel}}(\R^{3}),H^{1}_{-\sigma_{1},-\sigma_{\parallel}}(\R^{3}))}\Big)\,,
\end{align*}\noindent
inequality \eqref{MP1} is consequence of LAP for $A_{0}$ (see Proposition \ref{PropLAP0}).\par
By the relation (see \cite[Eq. (4.2) and Lem. 4.2]{MP}, see also \cite[Eq. (2.9)]{CFPinv})
\begin{equation*}
\Lambda_{F,\Sigma}(\lambda\pm i\epsilon)\; = \Lambda_{F,\Sigma}(\mu)\big[1+(\lambda-\mu\pm i\epsilon)\breve g_{\Sigma}(\mu)\big(1-(\lambda-\mu\pm i\epsilon)R_{F}(\lambda\pm i\epsilon)\big)g_{\Sigma}(\mu)\Lambda_{F,\Sigma}(\mu)\big]\,,
\end{equation*}
\noindent
one gets the inequality (here we set $\nu:=|\lambda-\mu|+1$)
\begin{align*}
&\|\Lambda_{F,\Sigma}(\lambda\pm i\epsilon)\|_{\Bou(L^2(\Sigma)\oplus L^2(\Sigma))} \\ 
\leqs & \|\Lambda_{F,\Sigma}(\mu)\|_{\Bou(L^2(\Sigma)\oplus L^2(\Sigma))}+\nu 
\|\Lambda_{F,\Sigma}(\mu)\|^{2}_{\Bou(L^2(\Sigma)\oplus L^2(\Sigma))}
\|\breve g_{\Sigma}(\mu)\|^{2}_{\Bou(L^{2}(\R^{3}),L^{2}(\Sigma)\oplus L^{2}(\Sigma))}\\
&+\nu^{2}\|\Lambda_{F,\Sigma}(\mu)\|^{2}_{\Bou(L^2(\Sigma)\oplus L^2(\Sigma))}
\|g_{\Sigma}(\mu)\|^{2}_{\Bou(L^{2}(\Sigma)\oplus L^{2}(\Sigma),L_{-\sigma_{1},-\sigma_{\parallel}}^{2}(\R^{3}))}\|R_{F}(\lambda\pm i\epsilon)\|_{\Bou(L_{\sigma_{1},\sigma_{\parallel}}^{2}(\R^{3}),L_{-\sigma_{1},-\sigma_{\parallel}}^{2}(\R^{3}))}\,.
\end{align*}
\noindent
Hence inequality \eqref{MP1} is consequence of LAP for $A_{F}$ (see Proposition \ref{ThmLAPF}).\par
Finally, since  LAP implies absence of the singular continuous spectrum (see e.g.  \cite[Cor. 4.7]{MPS2}), $\sigma_{sc}(A_F)$ is empty.
\end{proof}
\begin{rem}\label{remWd22} As immediate consequence of Theorem \ref{CorWd22}, one gets 
\begin{equation*}
\si_{ac}(A_F) = \si_{ac}(A_0)=[0,+\infty)\,.
\end{equation*} 
We conjecture that there are no embedded eigenvalues for $A_F$. Hence, if this is the case,  $\sigma(A_F) = \sigma_{ac}(A_F)$ and $P_{ac}(A_F)$ is the identity. 
\end{rem}
\section{The scattering matrix\label{s:sm}}
In this section  our aim is (following the same strategy as in \cite[Sec. 4]{MP}) to use the Kato-Birman invariance principle to recover the scattering matrix $S_{F}(\lambda)$ for the scattering couple $(A_{F},A_{0})$ from the one for  $(R_{F}(\mu),R_{0}(\mu))$, $\mu\in\rho(A_{0})\cap\rho(A_{F})$. Let us denote by 
$$S_{F}:=W_{+}^{*}W_{-}:L^{2}(\R^{3})\to L^{2}(\R^{3})$$ the scattering operator corresponding to the wave operators $W_{\pm}\equiv W_{\pm}(A_{F},A_{0})$.\par
Let  
$$
{\mathcal F}_0:L^{2}(\R^{3})\to L^{2}((0,+\infty);L^{2}({\mathbb S}^{2}))\,,
$$
be the unitary which diagonalizes $A_{0}$; below we will provide an explicit representation for ${\mathcal F}_0$ in terms of generalized eigenfunctions introduced in Eq. \eqref{geneig}. 
Then, define the scattering matrix
$$
S_{F}(\lambda):L^{2}({\mathbb S}^{2})\to L^{2}({\mathbb S}^{2})\,,\qquad \lambda\in(0,+\infty)\backslash \sigma_p^+(A_F)
$$
by the relation 
$$
[{\mathcal F}_0 S_F {\mathcal F}_0^{*}u](\lambda)=S_{F}(\lambda)u(\lambda)\,,\qquad
u:(0,+\infty)\backslash \sigma_p^+(A_F)\to L^{2}({\mathbb S}^{2})\,.
$$
Let us now denote by $W_{\pm}^{\mu}$ the wave operators  for the scattering couple $(R_{0}(\mu),R_{F}(\mu))$; below we show that they exist and are complete. We denote by $S_{F}^{\mu}$ the corresponding scattering operator $S_{F}^{\mu}:=(W_{+}^{\mu})^{*}W_{-}^{\mu}$\,. As above, we then define the corresponding scattering matrix by
$$
S_{F}^{\mu}(\lambda):L^{2}({\mathbb S}^{2})\to L^{2}({\mathbb S}^{2})\,,
\qquad
[{\mathcal F}_0^{\mu}S_{F}^{\mu}({\mathcal F}_0^{\mu})^{*}u](\lambda)=S_{F}^{\mu}(\lambda)u(\lambda)\,,
$$
where $[{\mathcal F}_0^{\mu} f](\lambda):=\frac1\lambda\,[{\mathcal F}_0 f](\mu+1/\lambda)$ is the unitary which diagonalizes $R_{0}(\mu)$.\par
The forthcoming theorem provides a more explicit representation of $S_F^{\mu}(\lambda)$.

\begin{thm}\label{thmR} The wave operator $W^{\mu}_{\pm}$ exist, are complete and the corresponding scattering matrix has the representation 
$$
S_F^{\mu}(\lambda)=\uno-2\pi iL_F^{\mu}(\lambda)\Lambda_{F,\Sigma}(\mu)\big[\uno-\breve g_{\Sigma}(\mu)\big(R_{F}(\mu)-(\lambda+i0)\big)^{-1}g_{\Sigma}(\mu)\Lambda_{F,\Sigma}(\mu)\big] L_F^{\mu}(\lambda)^{*},
$$\noindent
for all $\lambda$ such that $  \mu+\frac1\lambda\in(0,+\infty) \backslash \sigma_p^+(A_F)$, and  where
$$
L_F^{\mu}(\lambda):L^{2}(\Sigma)\oplus L^{2}(\Sigma)\to L^{2}({\mathbb S}^{2})\,,
$$
$$
L_F^{\mu}(\lambda)(\phi_{F}^{\Sigma}\oplus \phi_{0}^{\Sigma}):=\frac1\lambda\,\big[{\mathcal F}_0g_{\Sigma}(\mu)(\phi_{F}^{\Sigma}\oplus \phi_{0}^{\Sigma})\big]({\mu+1/\lambda})\,.
$$
\end{thm}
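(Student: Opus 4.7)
The plan is to apply the Birman-Yafaev stationary scattering framework (as in \cite[Sec. 4]{MP}) to the bounded self-adjoint pair $(R_F(\mu), R_0(\mu))$, using as input the factorized resolvent identity of Theorem \ref{TS}:
$$V(\mu) := R_F(\mu) - R_0(\mu) = g_\Sigma(\mu)\,\Lambda_{F,\Sigma}(\mu)\,\breve g_\Sigma(\mu),$$
which realizes the perturbation through the auxiliary Hilbert space $L^2(\Sigma)\oplus L^2(\Sigma)$ associated with the compact set $\Sigma = \supp(F)$. Combined with the LAP estimates of Section \ref{s:LAP}, this is exactly the structure needed to invoke the abstract machinery.

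First I would establish the existence and completeness of $W^\mu_\pm$. One route is to apply \cite[Th. 2.8]{MP} directly to the bounded pair, verifying its hypotheses by the same LAP-driven arguments used in the proof of Theorem \ref{CorWd22}, now expressed in terms of $\breve g_\Sigma$, $g_\Sigma$ and $\Lambda_{F,\Sigma}$. An alternative route uses the Kato-Birman invariance principle: since the map $\phi(x) = (\mu - x)^{-1}$ is admissible (piecewise monotonic with a single singularity $x = \mu \notin \sigma_{ac}(A_0)$) on $\sigma_{ac}(A_0) = \sigma_{ac}(A_F) = [0,+\infty)$, Theorem \ref{CorWd22} transfers existence and completeness to $(R_F(\mu), R_0(\mu))$. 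Next, one checks by a direct change of variables that the map $\mathcal{F}_0^\mu$ of the statement diagonalizes $R_0(\mu)$, turning it into multiplication by $\lambda$.

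For the scattering matrix, I would apply the Birman-Yafaev formula
$$S_F^\mu(\lambda) \,=\, \uno - 2\pi i\,\mathcal{F}_0^\mu(\lambda)\,T(\lambda + i0)\,\mathcal{F}_0^\mu(\lambda)^*,$$
where $T(z) := V(\mu) - V(\mu)\,(R_F(\mu) - z)^{-1}\,V(\mu)$ is the T-operator associated to the perturbation $V(\mu)$ and $\mathcal{F}_0^\mu(\lambda)$ is the evaluation of $\mathcal{F}_0^\mu$ on the energy-$\lambda$ fiber. Substituting the factorized form of $V(\mu)$ yields the key identity
$$T(z) \,=\, g_\Sigma(\mu)\,\Lambda_{F,\Sigma}(\mu)\,\bigl[\uno - \breve g_\Sigma(\mu)\,(R_F(\mu) - z)^{-1}\,g_\Sigma(\mu)\,\Lambda_{F,\Sigma}(\mu)\bigr]\,\breve g_\Sigma(\mu),$$
and, since $\mu \in \R$ gives $g_\Sigma(\mu)^* = \breve g_\Sigma(\mu)$, one recognizes $L_F^\mu(\lambda) = \mathcal{F}_0^\mu(\lambda)\,g_\Sigma(\mu)$ and $L_F^\mu(\lambda)^* = \breve g_\Sigma(\mu)\,\mathcal{F}_0^\mu(\lambda)^*$, producing the claimed expression.

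The principal technical obstacle is justifying the boundary value $(R_F(\mu) - (\lambda + i0))^{-1}$ sandwiched between $\breve g_\Sigma(\mu)$ and $g_\Sigma(\mu)$ as a bounded operator in suitable spaces, and verifying the hypotheses of the abstract BY formula in this non-trace-class setting. Here the compactness of $\Sigma$ enters essentially: the restricted trace $\tau_\Sigma$ has much better mapping properties than $\tau_F$ and $\tau_0$ individually, so that $\breve g_\Sigma(\mu)$ and $g_\Sigma(\mu)$ enjoy enough smoothing/decay to be composed with the LAP boundary values of $R_F(\mu)$ (which follow from Theorem \ref{ThmLAPF} via the spectral mapping $x \mapsto (\mu - x)^{-1}$). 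The restriction $\mu + 1/\lambda \in (0,+\infty) \setminus \sigma_p^+(A_F)$ in the statement is precisely what guarantees these boundary values exist, and the invariance principle identifies the scattering matrices $S_F^\mu(\lambda)$ and $S_F(\mu + 1/\lambda)$ at corresponding energies (up to a reversal caused by the monotonicity of $\phi$), which is the bridge ultimately used in the sequel to extract a formula for $S_F(\lambda)$ itself.
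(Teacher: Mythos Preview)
Your proposal is correct and follows essentially the same route as the paper: the paper likewise uses the factorization of Theorem \ref{TS}, transfers LAP for $A_0$ and $A_F$ to $(R_0(\mu)-z)^{-1}$ and $(R_F(\mu)-z)^{-1}$ via the resolvent identity $(R_{0/F}(\mu)-z)^{-1}=-z^{-1}(z^{-1}R_{0/F}(\mu+1/z)+\uno)$, and then invokes the abstract Birman--Yafaev result \cite[p.~178, Th.~4']{Y} after checking existence of the sandwiched limits and the weak $R_0(\mu)$-smoothness of $\breve g_\Sigma(\mu)$ (the latter reducing, by the same resolvent identity, to the bound \eqref{MP1} already established in Theorem \ref{CorWd22}). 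The only point you leave slightly implicit is this weak-smoothness verification, which in the paper is the hypothesis that triggers \cite[Th.~4']{Y} and yields existence, completeness, and the $S$-matrix formula simultaneously.
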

\proof
At first, let us notice the relations
\begin{equation}\label{RR_A}\begin{aligned}
&\big(R_{0}(\mu)-z\big)^{-1} = - \,\frac1z \left(\frac1{z}\,R_{0}(\mu+1/{z}) + \uno\right)\,, \\ 
&\big(R_F(\mu)-z\big)^{-1} = - \,\frac1z \left(\frac1{z}\,R_{F}(\mu+1/z) + \uno\right)\,.
\end{aligned}\end{equation}
Therefore, by LAP for $A_{0}$ and $A_{F}$ (see Theorems \ref{lemLAP1} and \ref{ThmLAPF}) the limits
\begin{equation*}
\big(R_0(\mu)-(\lambda\pm i0)\big)^{-1}\!:=\lim_{\epsilon\downarrow 0}\big(R_0(\mu)-(\lambda\pm i\epsilon)\big)^{-1}\,,
\end{equation*}
with $\mu+\frac1\lambda\in(0,+\infty)$, exists in $\Bou(L^{2}_{\sigma_{1},\sigma_{\parallel}}(\R^{3}),H^{s}_{-\sigma_{1},-\sigma_{\parallel}}(\R^{3}))$ for $\sigma_{1},\sigma_{\parallel}>1/2$ and $s\in (1,3/2)$ (see Proposition \ref{PropLAP0}),
and the limits 
\begin{equation*}
\big(R_F(\mu)-(\lambda\pm i0)\big)^{-1} :=
\lim_{\epsilon\downarrow 0}\big(R_F(\mu)-(\lambda\pm i\epsilon)\big)^{-1}\,,
\end{equation*}
with $ \mu+\frac1\lambda\in(0,+\infty) \backslash \sigma_p^+(A_F)$, exist in $\Bou(L^{2}_{\sigma_{1},\sigma_{\parallel}}(\R^{3}),L^{2}_{-\sigma_{1},-\sigma_{\parallel}}(\R^{3}))$ for $\sigma_{1} > 1/2$ and $\sigma_{\parallel}>3/4$. Moreover, by Proposition \ref{PropH1RF}, Remark \ref{remR0} and Lemma \ref{lemtaFw} one gets
$$\breve g_{\Sigma}(\mu)\in\Bou\big(L_{-\sigma_{1},-\sigma_{\parallel}}^{2}(\R^{3}),{L^2(\Sigma)\oplus L^2}(\Sigma)\big)\,,$$
$$ g_{\Sigma}(\mu) = \breve g_{\Sigma}(\mu)^{*}\in\Bou\big({L^2(\Sigma)\oplus L^2(\Sigma)},L_{\sigma_{1},\sigma_{\parallel}}^{2}(\R^{3})\big)\, $$ for all $\sigma_{1},\sigma_{\parallel}\in\R$. 
Therefore the following limits exist:
\begin{equation}\label{L1_A}
\lim_{\epsilon\downarrow 0}\breve g_{\Sigma}(\mu)\big(R_0(\mu)-(\lambda\pm i\epsilon)\big)^{-1}\,;
\end{equation}
\begin{equation*}
\lim_{\epsilon\downarrow 0}\breve g_{\Sigma}(\mu)\big(R_F(\mu)-(\lambda\pm i\epsilon)\big)^{-1}\,;
\end{equation*}
\begin{equation}\label{L3_A}
\lim_{\epsilon\downarrow 0}\breve g_{\Sigma}(\mu)\big(R_F(\mu)-(\lambda\pm i\epsilon)\big)^{-1}g_{\Sigma}(\mu)\,.
\end{equation}
Let us now show that $\breve g_{\Sigma}(\mu)$ is weakly-$R^0(\mu)$ smooth, i.e. by \cite[p. 154, Lem. 2]{Y}, 
\begin{equation}\label{in1.1_A}
\sup_{0<\epsilon<1}\epsilon\,\|\breve g_{\Sigma}(\mu) (R_{0}(\mu)-(\lambda\pm i\epsilon))^{-1}\|_{\Bou({L^{2}(\Sigma),{L^2(\Sigma)\oplus L^2(\Sigma))}}}^{2}\leqs c_{\lambda}<+\infty\,,\quad\text{for a.e. $\lambda$}\,.
\end{equation}\noindent
By \eqref{RR_A}, this is consequence of
\begin{equation*}
\sup_{0<\epsilon<1}\epsilon\,\|\breve g_{\Sigma}(\mu) R_{0}(\mu+1/\lambda\pm i\epsilon)\|_{\Bou({L^{2}(\Sigma),{L^2(\Sigma)\oplus L^2(\Sigma))}}}^{2}\leqs C_{\lambda}<+\infty\,,\quad\text{for a.e. $\lambda$}\,.
\end{equation*}
To prove the latter claim let us note that, by 
$$\breve g_{\Sigma}(\mu) R_{0}(z)=\tau_{\Sigma} R_0(\mu)R_{0}(z)=\tau_{\Sigma} R_0(z)R_{0}(\mu)=\breve g_{\Sigma}(z) R_{0}(\mu)=(R_{0}(\mu)g_{\Sigma}(\bar z) )^{*}\,, $$\noindent
it is consequence of Eq. \eqref{MP1}, which has been shown to hold in the proof of Theorem \ref{CorWd22}. Therefore, by the factorization of the difference $R_{F}(\mu)-R_{0}(\mu)$ provided in Theorem \ref{TS}, by the existence of the limits \eqref{L1_A}-\eqref{L3_A} and by the bound \eqref{in1.1_A}, the hypotheses in \cite[p. 178, Th. 4']{Y} are satisfied, which suffices to infer the thesis.
\endproof

By\footnote{Here we correct a minor mistake in the computation of the scattering matrix, occurring in the published version of this paper (see J. Math. Anal. Appl. {\bf 473}(1) (2019), pp. 215--257).  The sign before $i0$ in Eq.  \eqref{IP_A_2} was wrong and the adjoint in Eq. \eqref{IP_A} was missing. This  affected the statement of Corollary \ref{coroll}, specifically, the sign before $2\pi i \alpha$ in Eq. \eqref{SM_A} was wrong.  Regrettably the formula for $S_F$ in the Corrigendum J. Math. Anal. Appl. {\bf 482}(1) (2020), 123554, still contains a misprint: $M_{F,\Sigma}(\lambda^{-})$ instead of $M_{F,\Sigma}(\lambda^{+})$.} \cite[Lem. 4.2]{MP} (be aware that there the resolvent of an operator $A$ is defined as $(-A+z)^{-1}$) one has the identity 
$$
\Lambda_{F,\Sigma}(\mu) \big[\uno-\breve g_{\Sigma}(\mu)\big(R_{F}(\mu)-z\big)^{-1}g_{\Sigma}(\mu)\Lambda_{F,\Sigma}(\mu)\big]
=\Lambda_{F,\Sigma}(\mu+1/z)\,.
$$ 
This gives the existence of the limits
\begin{align}
\Lambda_{F,\Sigma}(\lambda^{\pm}):= & \lim_{\epsilon\downarrow 0}\Lambda_{F,\Sigma}(\lambda\pm i\epsilon) \nonumber \\  
= & \lim_{\epsilon\downarrow 0}
\Lambda_{F,\Sigma}(\mu)\big[\uno-\breve g_{\Sigma}(\mu) \big(R_{F}(\mu)-1/(\lambda-\mu \pm i\epsilon)\big)^{-1}g_{\Sigma}(\mu)\Lambda_{F,\Sigma}(\mu)\big] \nonumber \\ 
= & 
\Lambda_{F,\Sigma}(\mu)\big[\uno-\breve g_{\Sigma}(\mu) \big(R_{F}(\mu)-(1/(\lambda-\mu)\mp i0)\big)^{-1}g_{\Sigma}(\mu)\Lambda_{F,\Sigma}(\mu)\big]\,, \label{IP_A_2}
\end{align}
where in the latter equality we used the identity
\[\lim_{\epsilon \downarrow 0}\frac{1}{\lambda-\mu \pm i\epsilon} = \lim_{\epsilon \downarrow 0} \left(\frac1{\lambda-\mu} \mp\frac{ i\epsilon}{(\lambda-\mu)^2} + O(\epsilon^2) \right)= \frac1{\lambda-\mu} \mp i0 \,.\]
Taking into account the identities
$$
-\Lambda_{F,\Sigma}(\lambda\pm i\epsilon)\left(\frac{J}\alpha+\tau_{\Sigma}g_{\Sigma}(\lambda\pm i\epsilon)\right)=\uno=-\left(\frac{J}\alpha+\tau_{\Sigma}g_{\Sigma}(\lambda\pm i\epsilon)\right)\Lambda_{F,\Sigma}(\lambda\pm i\epsilon)\,,
$$
\[
\left(\frac{J}\alpha+\tau_{\Sigma}g_{\Sigma}(\lambda\pm i\epsilon)\right) \alpha \left(\uno+\alpha J\tau_{\Sigma}g_{\Sigma}(\lambda\pm i\epsilon)\right)^{-1}  J
= \uno 
=\alpha \big(\uno+\alpha J\tau_{\Sigma}g_{\Sigma}(\lambda\pm i\epsilon)\big)^{-1}J\left(\frac{J}\alpha+\tau_{\Sigma}g_{\Sigma}(\lambda\pm i\epsilon)\right) 
\]
and considering the limit $\epsilon\downarrow 0$, also provides the existence of the inverse 
 $(\uno+\alpha J\tau_{\Sigma}g_{\Sigma}(\lambda\pm i0))^{-1}$
and the identity 
$$
\Lambda_{F,\Sigma}(\lambda^{\pm})=-\alpha\left(\uno+\alpha J \Gt_{F,\Sigma}(\lambda^{\pm})\right)^{-1}J\,,
$$
where we set
\begin{equation}\label{Gt_00}
\Gt_{F,\Sigma}(\lambda^{\pm}):=\tau_{\Sigma}g_{\Sigma}(\lambda\pm i0)=\lim_{\epsilon\downarrow 0}\tau_{\Sigma}g_{\Sigma}(\lambda\pm i\epsilon)\,.
\end{equation}
Since, by the invariance principle (taking into account the fact that $\frac{d}{d\lambda} \frac{1}{\lambda-\mu}<0$), one has the relation (see \cite[Ch. 2, Sec. 6, Eq. (14)]{Y})
\begin{equation}\label{IP_A}
S_{F}(\lambda)=\big(S_{F}^{\mu}\big(1/(\lambda-\mu)\big)\big)^*\,,
\end{equation}
Theorem \ref{thmR} has the following 
\begin{cor}\label{coroll}
For all $  \lambda\in(0,+\infty) \backslash \sigma_p^+(A_F)$ the scattering matrix for the scattering couple $(A_{F}, A_{0})$ is given by  
\begin{equation}\label{SM_A}
S_{F}(\lambda)=\uno-2\pi i \alpha L_{F}(\lambda)(\uno+\alpha J M_{F,\Sigma}(\lambda^{+}))^{-1}JL_{F}(\lambda)^{*}\,,
\end{equation}

where $L_{F}(\lambda):L^{2}(\Sigma)\oplus L^{2}(\Sigma)\to L^{2}({\mathbb S}^{2})$ is given by
\begin{align*}
 \big(L_{F}(\lambda)(\phi_{F}^{\Sigma}\oplus \phi_{0}^{\Sigma})\big)({\boldsymbol\xi})  
= & \frac{\lambda^{{1}/{4}}}{2^{1/2}}\,\frac1{(2\pi)^{3/2}}\int_{\Sigma}d{\xp}\left(e^{-i\sqrt{\lambda}{\xi}_{1}F({\xp})} - \frac{\alpha\, e^{-i\sqrt{\lambda}\,|{\xi}_{1}|\,|F({\xp})|}}{\alpha+2i\sqrt{\lambda}\,|{\xi}_{1}|}\right)e^{-i\sqrt{\lambda}\,{\boldsymbol\xi}_{\parallel}\cdot {\xp}}\,\phi^{\Sigma}_{F}({\xp})\\
& + \,
\frac{\lambda^{{1}/{4}}}{(4\pi)^{1/2}}\,\left(1-\frac{\alpha}{\alpha+2i\sqrt{\lambda}\,|{\xi}_{1}|}\right)\,\widehat{\widetilde{\phi^{\Sigma}_{0}}}(\sqrt\lambda\,{\boldsymbol\xi}_{\parallel})\,.
\end{align*}
Here $\widehat{\ }$ denotes the Fourier transform in $L^{2}(\R^{2})$, $\widetilde{\phi^{\Sigma}_{0}}$ is the extension by zero of $\phi^{\Sigma}_{0}$ to the whole $\R^{2}$ and ${\boldsymbol\xi}\equiv({\xi}_{1},{\boldsymbol\xi}_{\parallel})$, $|{\xi}_{1}|^{2}+\|{\boldsymbol\xi}_{\parallel}\|^{2}=1$.
\end{cor}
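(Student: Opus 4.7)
The plan is to start from the formula for $S_F^\mu(\lambda)$ given in Theorem~\ref{thmR} and combine it with the Kato--Birman invariance principle \eqref{IP_A}, turning the bracketed operator there into a boundary value of $\Lambda_{F,\Sigma}$ via \eqref{IP_A_2}. Concretely, I would substitute $\zeta := 1/(\lambda-\mu)$ into Theorem~\ref{thmR}; since $1/(\lambda-\mu\pm i\epsilon) = 1/(\lambda-\mu)\mp i\epsilon/(\lambda-\mu)^2+O(\epsilon^2)$, the $+i0$ prescription in the resolvent $(R_F(\mu)-\cdot)^{-1}$ flips to a $-i0$ prescription in the spectral parameter of $\Lambda_{F,\Sigma}$, and \eqref{IP_A_2} identifies the bracketed operator with $\Lambda_{F,\Sigma}(\lambda^-)$. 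Taking the adjoint as prescribed by \eqref{IP_A} then gives
\[
S_F(\lambda) = \uno + 2\pi i\, L_F^\mu(1/(\lambda-\mu))\,\Lambda_{F,\Sigma}(\lambda^-)^*\,L_F^\mu(1/(\lambda-\mu))^*.
\]

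The next step is to rewrite $\Lambda_{F,\Sigma}(\lambda^-)^*$ in the form required by \eqref{SM_A}. The identity $\tau_\Sigma g_\Sigma(z) = \tau_\Sigma R_0(z)\tau_\Sigma^*$ gives $M_{F,\Sigma}(\lambda^-)^* = M_{F,\Sigma}(\lambda^+)$, whence (using $J^*=J$) $\Lambda_{F,\Sigma}(\lambda^-)^* = -\alpha\,J\,(\uno+\alpha M_{F,\Sigma}(\lambda^+)J)^{-1}$. The elementary equality $(\uno+\alpha JM)J = J(\uno+\alpha MJ)$ (both sides expand to $J+\alpha JMJ$ thanks to $J^2=\uno$), inverted, yields $J(\uno+\alpha MJ)^{-1} = (\uno+\alpha JM)^{-1}J$, and therefore $\Lambda_{F,\Sigma}(\lambda^-)^* = -\alpha\,(\uno+\alpha JM_{F,\Sigma}(\lambda^+))^{-1}J$. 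Inserted in the previous display this produces the prefactor $-2\pi i\alpha\,(\uno+\alpha JM_{F,\Sigma}(\lambda^+))^{-1}J$ of \eqref{SM_A} and identifies $L_F(\lambda):=L_F^\mu(1/(\lambda-\mu))$.

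The remaining, and main, step is the explicit computation of $L_F(\lambda)$. Writing $g_\Sigma(\mu)(\phi^\Sigma_F\oplus\phi^\Sigma_0) = R_0(\mu)\big(\tau_F^*\widetilde{\phi^\Sigma_F} + \tau_0^*\widetilde{\phi^\Sigma_0}\big)$ with tildes denoting extensions by zero, and using the generalized eigenrelation $R_0(\mu)\ff_{\k}=(|\k|^2-\mu)^{-1}\ff_{\k}$, the factor $\zeta^{-1}$ in the definition of $L_F^\mu(\zeta)$ cancels the factor $\zeta$ produced by the resolvent at the evaluation point $\mu+1/\zeta$, so that $L_F(\lambda)(\phi^\Sigma_F\oplus\phi^\Sigma_0) = [\mathcal F_0\tau_\Sigma^*(\widetilde{\phi^\Sigma_F}\oplus\widetilde{\phi^\Sigma_0})](\lambda,\boldsymbol\xi)$. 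Plugging in the unitary representation $[\mathcal F_0 f](\lambda,\boldsymbol\xi) = \frac{\lambda^{1/4}}{\sqrt 2}\int_{\R^3}\overline{\ff_{\sqrt\lambda\boldsymbol\xi}(\x)}\,f(\x)\,d\x$ together with the factorization \eqref{geneig}, the pairing against $\tau_F^*\widetilde{\phi^\Sigma_F}$ produces the surface integral over $\Sigma$ displayed on the first line of the corollary, where the coefficient $\alpha/(\alpha+2i\sqrt\lambda|\xi_1|)$ arises from the conjugation identity $\overline{i\alpha/(2|k_1|+i\alpha)} = \alpha/(\alpha+2i|k_1|)$; the pairing against $\tau_0^*\widetilde{\phi^\Sigma_0}$ uses $\fu_{k_1}(0) = (1-i\alpha/(2|k_1|+i\alpha))/\sqrt{2\pi}$ and collapses the planar integral to $2\pi$ times the $L^2(\R^2)$-Fourier transform of $\widetilde{\phi^\Sigma_0}$ at $\sqrt\lambda\,\boldsymbol\xi_\parallel$, producing the second summand with the normalization $\lambda^{1/4}/(4\pi)^{1/2}$.

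The hardest part is the bookkeeping of the $\pm i0$ limits, of the adjoint via \eqref{IP_A}, and of the $J^2=\uno$ conjugation used to move $J$ from the left to the right of the inverse; as the footnote records, these are precisely the steps where the sign errors occurred in the published version of the paper and in the corrigendum.
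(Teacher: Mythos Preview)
Your proposal is correct and follows essentially the same route as the paper: you pass from $S_F^\mu$ to $S_F$ via the invariance principle \eqref{IP_A}, identify the bracketed operator in Theorem~\ref{thmR} with $\Lambda_{F,\Sigma}(\lambda^-)$ through \eqref{IP_A_2}, take the adjoint using $(M_{F,\Sigma}(\lambda^-))^*=M_{F,\Sigma}(\lambda^+)$ and the $J$-conjugation identity, and then compute $L_F(\lambda)$ by cancelling the resolvent factor against $\zeta^{-1}$ and evaluating $\mathcal F_0\tau_\Sigma^*$ on the generalized eigenfunctions. The only cosmetic difference is that the paper first expands $\Lambda_{F,\Sigma}(\lambda^-)$ and then takes the adjoint (arriving at \eqref{SM_A} via the intermediate formula \eqref{SM_A_00}), whereas you take the adjoint first and then expand $\Lambda_{F,\Sigma}(\lambda^-)^*$; the algebra is identical.
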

\begin{proof} By  Theorem \ref{thmR} and by Eq. \eqref{IP_A_2}, one gets 
\[
S_{F}^{\mu}\big(1/(\lambda-\mu)\big)=\uno-2\pi iL_F^{\mu}(1/(\lambda-\mu))\Lambda_{F,\Sigma}(\lambda^-) L_F^{\mu}(1/(\lambda-\mu))^{*}.
\]
Hence, 
\begin{equation}\label{SM_A_00}
S_{F}^{\mu}\big(1/(\lambda-\mu)\big)=
\uno+2\pi i \alpha L_{F}(\lambda)(\uno+\alpha J M_{F,\Sigma}(\lambda^{-}))^{-1}JL_{F}(\lambda)^{*}\,,
\end{equation}
with 
$$
L_{F}(\lambda)(\phi_{F}^{\Sigma}\oplus \phi_{0}^{\Sigma}):=(\lambda-\mu)\big({\mathcal F}_0g_{\Sigma}(\mu)(\phi_{F}^{\Sigma}\oplus \phi_{0}^{\Sigma})\big)(\lambda)\,.
$$
By the definition  \eqref{Gt_00} it follows that 
$$  
M_{F,\Sigma}(\lambda^{\pm}) = \lim_{\varepsilon  \downarrow 0}\tau_{\Sigma}(\tau_{\Sigma}R_{0}(\lambda\mp i\varepsilon  ))^{*}\,;
$$
hence, the relation $(M_{F,\Sigma}(\lambda^{\pm}))^{*}=M_{F,\Sigma}(\lambda^{\mp})$ holds true. Taking the adjoint in Eq. \eqref{SM_A_00} and  recalling that $J^{2}=\uno$, $J=J^{*}$  one obtains Eq. \eqref{SM_A}. Then by 
$$
(\lambda-\mu)\big({\mathcal F}_0R_{0}(\mu) f\big)(\lambda)=({\mathcal F}_0 f)(\lambda)
$$
and by 
$$
\big(g_{\Sigma}(\mu)(\phi_{F}^{\Sigma}\oplus \phi_{0}^{\Sigma})\big)(\lambda)=
R_{0}(\mu)(\tau^{*}_{F,\Sigma}\phi_{F}^{\Sigma}+\tau^{*}_{0,\Sigma}\phi_{0}^{\Sigma})\,,
$$
where $\tau_{F/0,\Sigma}u:=\tau_{F/0}u|\Sigma$, one gets
$$
L_{F}(\lambda)(\phi_{F}^{\Sigma}\oplus \phi_{0}^{\Sigma})={\mathcal F}_0(\tau^{*}_{F,\Sigma}\phi_{F}^{\Sigma}+\tau^{*}_{0,\Sigma}\phi_{0}^{\Sigma})(\lambda)\,.
$$
By the structure of the generalized eigenfunctions of $A_{0}$ (see Eq. \eqref{geneig}), one gets 
$$
{\mathcal F}_0:L^{2}(\R^{3})\to L^{2}\big((0,+\infty);L^{2}({\mathbb S}^{2})\big)\,,
$$
$$ 
\big(({\mathcal F}_0 f)(\lambda)\big)({\boldsymbol\xi}):=
\frac{\lambda^{{1}/{4}}}{2^{1/2}}\,\int_{\R^{3}}\overline {\varphi}_{\sqrt\lambda\, \boldsymbol\xi }(\bold x) \,  f(\bold x)\,d\bold x\,, 
$$
where ${\varphi}_{\boldsymbol k }(\bold x)$ was defined in Eq. \eqref{geneig}, $\bold x\equiv(x^{1},{\xp})$, ${\boldsymbol\xi}\equiv({\xi}_{1},{\boldsymbol\xi}_{\parallel})$, $|{\xi}_{1}|^{2}+|{\boldsymbol\xi}_{\parallel}|^{2}=1$ and $\int_{\R^{3}}$ is to be understood as the $L^{2}$-limit of $\int_{|\x|\leqs R}$ for $R \!\nearrow +\infty$.
Therefore
\begin{align*}
 \big(L_{F}(\lambda)(\phi_{F}^{\Sigma}\oplus \phi_{0}^{\Sigma})\big)({\boldsymbol\xi}) 
= & \frac{\lambda^{{1}/{4}}}{2^{1/2}}
\int_{\Sigma}d{\xp}\left(\overline {\varphi}_{\sqrt\lambda\, \boldsymbol\xi }(F({\xp}),{\xp})\phi^{\Sigma}_{F}({\xp})+\overline {\varphi}_{\sqrt\lambda\, \boldsymbol\xi }(0,{\xp})\phi^{\Sigma}_{0}({\xp})
\right)\\
=\; &\frac{\lambda^{{1}/{4}}}{2^{1/2}}\,\frac1{(2\pi)^{3/2}}\int_{\Sigma}d{\xp}\left(e^{-i\sqrt{\lambda}{\xi}_{1}F({\xp})}-\frac{\alpha\, e^{-i\sqrt{\lambda}\,|{\xi}_{1}|\,|F({\xp})|}}{\alpha+2i\sqrt{\lambda}\,|{\xi}_{1}|}\right)e^{-i\sqrt{\lambda}\,{\boldsymbol\xi}_{\parallel}\cdot {\xp}}\phi^{\Sigma}_{F}({\xp})\\
& + \,
\frac{\lambda^{{1}/{4}}}{(4\pi)^{1/2}}\,\left(1-\frac{\alpha}{\alpha+2i\sqrt{\lambda}\,|{\xi}_{1}|}\right)\,\widehat{\widetilde{\phi^{\Sigma}_{0}}}(\sqrt\lambda\,{\boldsymbol\xi}_{\parallel})\,.\vspace{-0.5cm}
\end{align*}
\end{proof}

Let us now denote by $S_{0}(\lambda)$, $L_{0}(\lambda)$,  $\Lambda_{0,\Sigma}(\lambda^{+})$, $\Gt_{0,\Sigma}(\lambda^{+})$ the operators $S_{F}(\lambda)$, $L_{F}(\lambda)$,  $\Lambda_{F,\Sigma}(\lambda^{+})$, $\Gt_{F,\Sigma}(\lambda^{+})$ corresponding to the choice $F=0$; obviously $S_{0}(\lambda)=\uno$. By 
\begin{align*}
S_{F}(\lambda)-\uno=S_{F}(\lambda)-S_{0}(\lambda)  = &2\pi i \big(L_{F}(\lambda)-L_{0}(\lambda)\big) \Lambda_{F,\Sigma}(\lambda^{+})L_{F}(\lambda)^{*} \\ 
&  + 2\pi iL_{0}(\lambda)\Lambda_{F,\Sigma}(\lambda^{+})\big(\Gt_{F,\Sigma}(\lambda^{+})-\Gt_{0,\Sigma}(\lambda^{+})\big)\Lambda_{0,\Sigma}(\lambda^{+})L_{F}(\lambda)^{*}\\
&  + 2\pi iL_{0}(\lambda) \Lambda_{0,\Sigma}(\lambda^{+})\big(L_{F}(\lambda)^{*}-L_{0}(\lambda)^{*}\big)
\end{align*}
one gets
\begin{align*}
 \|S_{F}(\lambda)-\uno\|_{\Bou(L^{2}({\mathbb S}^{2}))}  \leqs &
2\pi \big(K_{F,\Sigma}(\lambda)+K_{0,\Sigma}(\lambda)\big)\,\|L_{F}(\lambda)-L_{0}(\lambda)\|_{\Bou({L^2(\Sigma)\oplus L^2(\Sigma)},L^{2}({\mathbb S}^{2}))}\\
&   + 2\pi K_{F,\Sigma}(\lambda)K_{0,\Sigma}(\lambda)\,\|\Gt_{F,\Sigma}(\lambda^{+})-\Gt_{0,\Sigma}(\lambda^{+})\|_{\Bou(L^2(\Sigma)\oplus L^2(\Sigma))}\,,
\end{align*}
where
$$
K_{F/0,\Sigma}(\lambda):=\|L_{F/0}(\lambda))\|_{\Bou({L^2(\Sigma)\oplus L^2(\Sigma)},L^{2}({\mathbb S}^{2}))}\|\Lambda_{F/0,\Sigma}(\lambda^{+})\|_{\Bou(L^2(\Sigma)\oplus L^2(\Sigma))}\,.
$$
The next Lemmata provide estimates on the norms of the differences $L_{F}(\lambda)-L_{0}(\lambda)$ and $\Gt_{F,\Sigma}(\lambda^{+})-\Gt_{0,\Sigma}(\lambda^{+})$ and hence on the norm of $S_{F}(\lambda)-\uno$.
\begin{lemma} For all $\lambda \in (0,+\infty)\backslash \sigma_p^+(A_F)$, there holds
$$
\|L_{F}(\lambda)-L_{0}(\lambda)\|^{2}_{\Bou({L^2(\Sigma)\oplus L^2(\Sigma)},L^{2}({\mathbb S}^{2}))}\leqs 
\frac{2 \sqrt\lambda}{\pi^{2}}\int_{\Sigma}d{\xp}\left(1-\frac{\sin\!\big(\sqrt\lambda\,F({\xp})\big)}{\sqrt\lambda\,F({\xp})}\right).
$$
\end{lemma}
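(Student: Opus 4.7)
The plan is to exploit the explicit expression for $L_F(\lambda)$ from Corollary \ref{coroll}. The second summand of the formula, the one involving $\widehat{\widetilde{\phi^{\Sigma}_{0}}}$, is independent of $F$; hence the difference $L_F(\lambda)-L_0(\lambda)$ annihilates the second component $\phi_0^\Sigma$, and its operator norm coincides with the norm of the integral operator $L^2(\Sigma)\to L^2(\mathbb S^2)$ with kernel
\[
\mathcal K(\boldsymbol\xi,\xp)=\frac{\lambda^{1/4}}{\sqrt 2\,(2\pi)^{3/2}}\Big[\big(e^{-i\sqrt\lambda\xi_1 F(\xp)}-1\big)-c(\xi_1)\big(e^{-i\sqrt\lambda|\xi_1||F(\xp)|}-1\big)\Big]\,e^{-i\sqrt\lambda\,\boldsymbol\xi_\parallel\cdot\xp},
\]
with $c(\xi_1):=\alpha/(\alpha+2i\sqrt\lambda|\xi_1|)$. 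I would then estimate this operator norm by the Hilbert--Schmidt norm $\|\mathcal K\|_{L^2(\mathbb S^2\times\Sigma)}^2$, since the resulting double integral will match the stated bound exactly.

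For the pointwise control of $|\mathcal K|$, two elementary facts suffice: first, $|c(\xi_1)|\le 1$ for every $\xi_1$, as $|c(\xi_1)|^2=\alpha^2/(\alpha^2+4\lambda\xi_1^2)$; second, $|e^{i\theta}-1|=2|\sin(\theta/2)|$ combined with the evenness of $\sin^2$ gives $|\sin(\sqrt\lambda|\xi_1||F(\xp)|/2)|=|\sin(\sqrt\lambda\xi_1 F(\xp)/2)|$. Together with the triangle inequality and the observation that the phase $e^{-i\sqrt\lambda\boldsymbol\xi_\parallel\cdot\xp}$ drops out of $|\mathcal K|^2$, these yield
\[
|\mathcal K(\boldsymbol\xi,\xp)|^2 \;\le\; \frac{\lambda^{1/2}}{2\,(2\pi)^3}\cdot 16\,\sin^2\!\Big(\tfrac{\sqrt\lambda\,\xi_1 F(\xp)}{2}\Big).
\]
The disappearance of $\boldsymbol\xi_\parallel$ from the bound is what makes the subsequent angular integration tractable.

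To integrate over $\mathbb S^2$ I would parametrize the sphere by $\xi_1\in[-1,1]$ and an azimuth $\phi\in[0,2\pi)$, so that $d\boldsymbol\xi=d\xi_1\,d\phi$; the bound depends only on $\xi_1$, so the $\phi$--integration contributes the factor $2\pi$. The remaining one-dimensional integral is evaluated through the elementary identity
\[
\int_{-1}^{1}\sin^2\!\Big(\tfrac{a\,\xi_1}{2}\Big)\,d\xi_1 \;=\; 1-\tfrac{\sin a}{a},
\]
obtained by expanding $2\sin^2(\cdot)=1-\cos(\cdot)$, and applied with $a=\sqrt\lambda F(\xp)$. Assembling the numerical prefactors, $\tfrac{\lambda^{1/2}}{2(2\pi)^3}\cdot 16\cdot 2\pi=\tfrac{2\sqrt\lambda}{\pi^2}$, and integrating over $\xp\in\Sigma$ produces the inequality in the statement.

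The only point requiring a minimum of care is that the crude bound $|c|\le 1$ together with the triangle inequality is \emph{sharp enough} to recover the constant $\tfrac{2\sqrt\lambda}{\pi^2}$ in the statement; this works precisely because both bracketed terms in $\mathcal K$ have equal modulus $2|\sin(\sqrt\lambda\xi_1 F(\xp)/2)|$, so no finer case analysis on the sign of $\xi_1 F(\xp)$ is needed. Beyond this mildly delicate bookkeeping, the argument is entirely routine.
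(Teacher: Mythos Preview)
Your proof is correct and follows essentially the same route as the paper: both arguments reduce to the first component via the explicit formula in Corollary~\ref{coroll}, apply the triangle inequality together with $|c(\xi_1)|\le 1$ to bound the kernel pointwise by $4|\sin(\sqrt\lambda\,\xi_1 F(\xp)/2)|$, and then integrate over $\mathbb S^2\times\Sigma$. The paper phrases the Hilbert--Schmidt step as an application of Cauchy--Schwarz and parametrizes the sphere with the polar angle rather than $\xi_1$ directly, but these are purely cosmetic differences.
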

\begin{proof}
Indicating with $d\sigma$ the usual surface element on the sphere ${\mathbb S}^{2}$, we have
\begin{align*}
& \big\|\big(L_{F}(\lambda)-L_{0}(\lambda)\big)(\phi_{F}^{\Sigma}\oplus \phi_{0}^{\Sigma})\big\|^{2}_{L^{2}({\mathbb S}^{2})} = \int_{{\mathbb S}^{2}}d\sigma({\boldsymbol\xi})\big|\big((L_{F}(\lambda)- L_{0}(\lambda))(\phi_{F}^{\Sigma}\oplus \phi_{0}^{\Sigma})\big)({\boldsymbol\xi})\big|^{2} \\
\leqs &\;
\frac{\lambda^{{1}/{2}}}{2(2\pi)^{3}}\int_{{\mathbb S}^{2}}d\sigma({\boldsymbol\xi})\left(\int_{\Sigma}d{\xp}\left(\left|e^{-i\sqrt{\lambda}{\xi}_{1}F({\xp})}-1\right|+
\left|e^{-i\sqrt{\lambda}\,|{\xi}_{1}|\,|F({\xp})|}-1\right|\right)\,|\phi^{\Sigma}_{F}({\xp})|\right)^{2}\\
\leqs &\;
\frac{\lambda^{{1}/{2}}}{2}\,\frac{\|\phi^{\Sigma}_{F}\|^{2}_{L^{2}(\Sigma)}}{(2\pi)^{3}}\int_{\Sigma}d{\xp}\left(\int_{{\mathbb S}^{2}}d\sigma({\boldsymbol\xi})\left(\left|e^{-i\sqrt{\lambda}{\xi}_{1}F({\xp})}-1\right|+
\left|e^{-i\sqrt{\lambda}\,|{\xi}_{1}|\,|F({\xp})|}-1\right|\right)^{2}\right)\\
\leqs &\;
{4\lambda^{{1}/{2}}}\,\frac{\|\phi^{\Sigma}_{F}\|^{2}_{L^{2}(\Sigma)}}{(2\pi)^{3}}\int_{\Sigma}d{\xp}\left(\int_{{\mathbb S}^{2}}d\sigma({\boldsymbol\xi})\Big(1-\cos\!\big(\sqrt\lambda\,{\xi}_{1}F({\xp})\big)\Big)\right)\\
= &\;
{2\lambda^{{1}/{2}}}\,\frac{\|\phi^{\Sigma}_{F}\|^{2}_{L^{2}(\Sigma)}}{\pi^{2}}\int_{\Sigma}d{\xp}\left(1-\frac{\sin \sqrt\lambda\,F({\xp})}{\sqrt\lambda\,F({\xp})}\right),
\end{align*}
\noindent
where we used the basic identity $\smallint_{0}^{\pi}dx\cos(z\cos x)\sin x=\frac2z\sin z$.
\end{proof}
\begin{rem} By the inequality $0\leq 1-\frac{\sin x}{x}\leqs \frac{|x|}{\pi}$ one gets 
$$
\|L_{F}(\lambda)-L_{0}(\lambda)\|^{2}_{\Bou({L^2(\Sigma)\oplus L^2(\Sigma)},L^{2}({\mathbb S}^{2}))}\leqs 
\frac{2 \lambda}{\pi^{3}}\;\|F\|_{L^{1}(\Sigma)}\,.
$$
\end{rem}
\begin{lemma} Let $r=\frac12+\gamma$, $0<\gamma<1$ and $s>1$. Then
$$
\|\tau_{F,\Sigma}-\tau_{0,\Sigma}\|^{2}_{\Bou(H^{r}(I)\otimes H^{s}(\Omega),L^{2}(\Sigma))}\lec \big\| |F|^{2\gamma}\big\|_{L^{1}(\Sigma)}\,.
$$\end{lemma}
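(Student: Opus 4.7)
The plan is to reduce the claim to two classical Sobolev embeddings acting separately in the $x^1$ and $\xp$ variables, by identifying the Hilbert tensor product $H^r(I)\otimes H^s(\Omega)$ with the vector-valued Sobolev space $H^r(I;H^s(\Omega))$. For $r=1/2+\gamma$ with $\gamma\in(0,1)$, the one-dimensional H\"older embedding $H^r(I)\cemb C^{0,\gamma}(\ov I)$ upgrades in the vector-valued setting to
$$
H^r(I;H^s(\Omega)) \;\cemb\; C^{0,\gamma}\!\bigl(\ov I;H^s(\Omega)\bigr)\,.
$$
Combined with the planar Sobolev embedding $H^s(\Omega)\cemb L^\infty(\Omega)$, which holds since $s>1=\dim\Omega/2$, this yields the continuous inclusion
$$
H^r(I)\otimes H^s(\Omega)\;\cemb\; C^{0,\gamma}\!\bigl(\ov I;L^\infty(\Omega)\bigr)\,.
$$

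Given any $u$ in this space, viewed as a $\gamma$-H\"older map $x^1\mapsto u(x^1,\cdot)\in L^\infty(\Omega)$, the trace difference is then controlled pointwise on $\Sigma$:
$$
\bigl|(\tau_{F,\Sigma}-\tau_{0,\Sigma})u(\xp)\bigr|
= \bigl|u(F(\xp),\xp)-u(0,\xp)\bigr|
\leqs \bigl\|u(F(\xp),\cdot)-u(0,\cdot)\bigr\|_{L^\infty(\Omega)}
\lec \|u\|_{H^r(I)\otimes H^s(\Omega)}\;|F(\xp)|^{\gamma}\,.
$$
Squaring and integrating over $\xp\in\Sigma$ immediately gives
$$
\|(\tau_{F,\Sigma}-\tau_{0,\Sigma})u\|^{2}_{L^{2}(\Sigma)}
\lec \|u\|^{2}_{H^r(I)\otimes H^s(\Omega)}\int_{\Sigma}d\xp\,|F(\xp)|^{2\gamma}
= \|u\|^{2}_{H^r(I)\otimes H^s(\Omega)}\,\bigl\||F|^{2\gamma}\bigr\|_{L^{1}(\Sigma)}\,,
$$
which is exactly the claimed operator-norm estimate.

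The main technical step is the identification $H^r(I)\otimes H^s(\Omega)\cong H^r(I;H^s(\Omega))$ and the vector-valued H\"older embedding. The former is standard for Hilbert tensor products: the unitary Fourier transform in $x^1$ turns $H^r(I)$ into the weighted space $L^2(\R,(1+|k_1|^2)^r dk_1)$, so the Hilbert tensor product becomes the Bochner $L^2$ with values in $H^s(\Omega)$, which via inverse Fourier transform equals $H^r(I;H^s(\Omega))$. The vector-valued H\"older embedding is the standard Morrey-type estimate transferred verbatim to the Banach-space-valued setting; it can also be obtained from the scalar one by testing with unit vectors $v\in H^s(\Omega)$. A more pedestrian alternative would be to first verify the estimate on factorized tensors $u=u_1\otimes u_\parallel$ via $|u_1(F(\xp))-u_1(0)|\lec\|u_1\|_{H^r(I)}|F(\xp)|^{\gamma}$ and $\|u_\parallel\|_{L^\infty(\Omega)}\lec\|u_\parallel\|_{H^s(\Omega)}$, but extending this to general tensor elements by density still amounts to the vector-valued identification; hence the direct route above is preferable.
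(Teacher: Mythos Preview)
Your argument is correct and rests on the same two scalar Sobolev embeddings the paper uses: $H^{1/2+\gamma}(I)\cemb C^{0,\gamma}(\ov I)$ and $H^s(\Omega)\cemb L^\infty(\Omega)$ for $s>1$. The paper simply checks the bound on factorized tensors $u_1\otimes u_\parallel$ and stops there, whereas you go further and justify the passage to general elements of the tensor product by identifying $H^r(I)\otimes H^s(\Omega)$ with $H^r(I;H^s(\Omega))$ and invoking the Banach-space-valued Morrey embedding; this is a genuine addition of rigor. One minor slip: your Fourier justification of that identification is phrased as if $I=\R$, but here $I$ is a bounded interval; the identification is nonetheless standard (e.g.\ via $L^2(I)\otimes H\cong L^2(I;H)$ for separable $H$ together with interpolation, or via the intrinsic Slobodeckij seminorm transferred verbatim), so no harm is done. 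An equivalent and more elementary way to close the density gap---closer in spirit to the paper's computation---is to note that for each fixed $\xp\in\Sigma$ the evaluation $u\mapsto u(F(\xp),\xp)-u(0,\xp)$ is the tensor product of two bounded linear functionals (on $H^r(I)$ and on $H^s(\Omega)$ respectively) and hence a bounded functional on $H^r(I)\otimes H^s(\Omega)$ with norm $\lec|F(\xp)|^{\gamma}$; this gives the pointwise bound for arbitrary $u$ directly, without the vector-valued machinery.
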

\begin{proof} Using the Sobolev inequalities
$$\begin{aligned}
|\uu(x_{1})-\uu(y_{1})| &\lec \|\uu\|_{H^{r}(I)}\,|x_{1}-y_{1}|^{\gamma}\,, \\
\sup_{{\xp}\in\Omega}|\up({\xp})| &\lec\|\up\|_{H^{s}(\Omega)}\,,
\end{aligned}
$$ 
one obtains
$$
\int_{\Sigma}d{\xp}\,|\uu(F({\xp}))-\uu(0)|^{2}\,|\up({\xp})|^{2}\lec\|\uu\|^{2}_{H^{r}(I)}\,\|\up\|^{2}_{H^{s}(\Omega)\|}\int_{\Sigma}d{\xp}\,|F({\xp})|^{2\gamma}\,,
$$
which proves the thesis.
\end{proof}

\begin{lemma}\label{LemCit} There holds
\begin{align*}
&\|\Gt_{F,\Sigma}(\lambda^{+})-\Gt_{0,\Sigma}(\lambda^{+})\|^{2}_{\Bou(L^{2}(\Sigma)\oplus L^{2}(\Sigma),L^{2}(\Sigma)\oplus L^{2}(\Sigma))}\\
\lec &  \left(N^{2}_{s_{1},s_{2},\epsilon}(\lambda^{+})N^{2}_{F,\Sigma, s_{1},s_{2}}+
\left(N^{2}_{s_{1},s_{2},\epsilon}(\lambda^{+})+N^{2}_{s_{1},s_{2},\epsilon}(\lambda^{-})\right) 
N^{2}_{0,\Sigma, s_{1},s_{2}}\right)\times\\
&\times\|\tau_{F,\Sigma}-\tau_{0,\Sigma}\|^{2}_{\Bou(H^{-s_{1}+3/2-\epsilon}(I)\otimes H^{-s_{2}+2}(\Omega),L^{2}(\Sigma))}\,,
\end{align*}
where $\epsilon>0$, $\frac12<s_{1}<1-\epsilon$, $\frac12<s_{2}<\frac32$,
$$
N_{F/0,\Sigma, s_{1},s_{2}}:=\|\tau_{F/0,\Sigma}\|_{\Bou(H^{s_{1}}( I)\otimes H^{s_{2}}(\Omega),L^{2}(\Sigma))}\,,
$$
\[ N_{s_{1},s_{2},\epsilon}(\lambda^{\pm}) :=  \|(r_{I}\otimes r_{\Omega})R_{0}(\lambda^{\pm})
(r^{*}_{I}\otimes r^{*}_{\Omega})\|
_{\Bou(H^{-s_{1}}_{\overline I}(\R)\otimes H^{-s_{2}}_{\overline \Omega}(\R^{2}),
H^{-s_{1}+3/2-\epsilon}(I)\otimes H^{-s_{2}+2}(\Omega))}\,,
\]
$\Omega\subset \R^{2}$ is a open ball containing $\Sigma$, $I\subset\R$ is an open bounded interval which contains {\rm range}$(F)$ and $r_{I}:H_{\sigma_{1}}^{s_{1}}(\R)\to H^{s_{1}}(I)$, $r_{\Omega}:H_{\sigma_{\parallel}}^{s_{2}}(\R^{2})\to H^{s_{2}}(\Omega)$ are the restriction operators.
\end{lemma}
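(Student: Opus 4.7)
The plan is to write $\Gt_{F,\Si}(\lambda^+) - \Gt_{0,\Si}(\lambda^+)$ as a $2\times 2$ block matrix in which every entry contains the ``small'' factor $\delta\tau := \tau_{F,\Si} - \tau_{0,\Si}$, and then to estimate each block by a three-fold factorization through a localized version of the free resolvent. Recalling that $\Gt_{F,\Si}(\lambda^+) = \tau_\Si R_0(\lambda^+)\tau_\Si^{*}$ with $\tau_\Si f = (\tau_F f)|_\Si \oplus (\tau_0 f)|_\Si$, the block entries read $\tau_{\alpha,\Si}R_0(\lambda^+)\tau_{\beta,\Si}^{*}$ ($\alpha,\beta\in\{F,0\}$); the analogous entries for $\Gt_{0,\Si}(\lambda^+)$ have $\alpha=\beta=0$ everywhere. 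Using the trivial identity $AB^{*}-CD^{*} = (A-C)B^{*} + C(B-D)^{*}$, the four entries of the difference are a sum of at most two terms of the form $\delta\tau\,R_0(\lambda^+)\,\tau^{*}$ or $\tau\,R_0(\lambda^+)\,\delta\tau^{*}$, with $\tau\in\{\tau_{F,\Si},\tau_{0,\Si}\}$; note in particular that the $(2,2)$-block vanishes identically.

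For a typical ``small/smooth'' term, the estimate proceeds in three stages. First, since the graph of $F$ over $\Si$ sits inside $\overline{I}\times\overline{\Si}\subset \overline{I}\times\overline{\Om}$, the adjoint $\tau^{*}$ maps $L^2(\Si)$ into distributions on $\R^3$ supported in $\overline{I}\times\overline{\Om}$; the duality/extension-by-zero identification $(H^{s_1}(I)\otimes H^{s_2}(\Om))' \cong H^{-s_1}_{\overline I}(\R)\otimes H^{-s_2}_{\overline\Om}(\R^2)$ then gives $\|\tau^{*}\|_{\Bou(L^2(\Si),H^{-s_1}_{\overline I}(\R)\otimes H^{-s_2}_{\overline\Om}(\R^2))} = N_{F/0,\Si,s_1,s_2}$. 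Second, since $\delta\tau$ (and likewise $\tau$) depends only on the restriction of its argument to $I\times\Om$, the sandwiched operator $R_0(\lambda^+)$ can be replaced by $(r_I\otimes r_\Om)R_0(\lambda^+)(r_I^{*}\otimes r_\Om^{*})$, whose norm is precisely $N_{s_1,s_2,\epsilon}(\lambda^+)$. Third, applying $\delta\tau$ to the resulting element of $H^{-s_1+3/2-\epsilon}(I)\otimes H^{-s_2+2}(\Om)$ contributes the factor $\|\delta\tau\|_{\Bou(H^{-s_1+3/2-\epsilon}(I)\otimes H^{-s_2+2}(\Om),L^2(\Si))}$. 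For the adjoint-type terms $\tau R_0(\lambda^+)\delta\tau^{*}$, taking the adjoint yields $\delta\tau\,R_0(\lambda^-)\,\tau^{*}$, so the same scheme applies with $N_{s_1,s_2,\epsilon}(\lambda^-)$ in place of $N_{s_1,s_2,\epsilon}(\lambda^+)$.

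Collecting the four block contributions, one finds that the only piece requiring $N_{F,\Si,s_1,s_2}$ is $\delta\tau\,R_0(\lambda^+)\,\tau_{F,\Si}^{*}$ from the $(1,1)$-entry, contributing $N^2_{s_1,s_2,\epsilon}(\lambda^+)N^2_{F,\Si,s_1,s_2}$ after squaring; every remaining contribution has $\tau_{0,\Si}$ on one side and yields a multiple of $(N^2_{s_1,s_2,\epsilon}(\lambda^+)+N^2_{s_1,s_2,\epsilon}(\lambda^-))N^2_{0,\Si,s_1,s_2}$. The common factor $\|\delta\tau\|^2$ factors out, producing precisely the claimed inequality. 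The range restrictions have a transparent role in the scheme: $s_1>1/2$ guarantees that $\delta$-like distributions in the normal variable lie in $H^{-s_1}(\R)$ (needed for $\tau^{*}$), while $s_1<1-\epsilon$ ensures $-s_1+3/2-\epsilon>1/2$, which is exactly what the graph trace $\delta\tau$ requires in the normal direction; the range $1/2<s_2<3/2$ plays the analogous role in the tangential variables.

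The main technical obstacle is the justification of the second stage above: one must identify $\tau^{*}$ as a bounded map into the \emph{support-restricted} tensor space $H^{-s_1}_{\overline I}(\R)\otimes H^{-s_2}_{\overline\Om}(\R^2)$ (with the right norm bound), and then legitimately insert the restriction-extension sandwich $(r_I^{*}\otimes r_\Om^{*})(r_I\otimes r_\Om)$ around $R_0(\lambda^+)$ between two trace-like factors. Both facts rely on the locality of the trace operators, whose range of support is contained in the compact set $\overline I\times\overline\Om$, together with the tensor-product embedding results of Lemmata \ref{LemFact} and \ref{lemHW}, which are needed to match the tensor norm with the usual Sobolev norm on $\R^3$ at both endpoints of the factorization.
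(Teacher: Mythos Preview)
Your proposal is correct and follows essentially the same route as the paper: both write $\Gt_{F,\Sigma}(\lambda^{+})-\Gt_{0,\Sigma}(\lambda^{+})$ as a $2\times 2$ block matrix with entries containing $\delta\tau=\tau_{F,\Sigma}-\tau_{0,\Sigma}$, split the $(1,1)$-block telescopically, and estimate each piece via the three-fold factorization $\delta\tau\cdot(r_I\otimes r_\Omega)R_0(\lambda^\pm)(r_I^*\otimes r_\Omega^*)\cdot\tau^*$, using duality to identify $\tau^*$ with a map into $H^{-s_1}_{\overline I}(\R)\otimes H^{-s_2}_{\overline\Omega}(\R^2)$ and taking adjoints to handle the $\tau R_0(\lambda^+)\delta\tau^*$ terms (which produces the $\lambda^-$ contributions).

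The one substantive item the paper's proof adds, and which you omit, is the verification that $N_{s_1,s_2,\epsilon}(\lambda^\pm)<\infty$: the paper derives this from the one- and two-dimensional LAPs, duality, interpolation between the endpoints $s=0$ and $s=3/2-\epsilon$ (resp.\ $s=2$), and then the Ben-Artzi--Devinatz tensor-product result to pass from the factors to $R_0(\lambda^\pm)$. Strictly speaking the inequality in the lemma statement holds trivially if that constant is infinite, so your argument suffices for the lemma as stated; but the finiteness is what makes the estimate usable in Theorem~\ref{teo7}, so you may want to add a line pointing to Proposition~\ref{PropLAP0} (or the underlying ingredients) for that.
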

\begin{proof} One has $H^{s}(I)^{\prime}\simeq H^{-s}_{\overline I}(\R)$ and $H^{s}(\Omega)^{\prime}\simeq H^{-s}_{\overline\Omega}(\R^{2})$ for any $s\in \R$, (see, e.g., \cite[Th.s 3.14, 3.29]{McL}); we use such  identifications in the following. Then, the duals of the restriction operators  
$r^{*}_{I}:H_{\overline I}^{-s_{1}}(\R)\to 
H_{-\sigma_{1}}^{-s_{1}}(\R)$ and $r^{*}_{\Omega}:H_{\overline\Omega}^{-s_{2}}(\R^{2})\to  H_{-\sigma_{\parallel}}^{-s_{2}}(\R^{2})$ are given by the extensions by zero. \par
We know that 
$$
R^{(1)}_{0}(\lambda^{\pm})\in\Bou\big(L^{2}_{\sigma_{1}}(\R), H^{3/2-\epsilon}_{-\sigma_{1}}(\R)\big)\,,
\quad \mbox{for $\sigma_{1}> 1/2$, $0<\epsilon< 1/2$}\,,
$$
and
$$
R^{(\parallel)}_{0}(\lambda^{\pm})\in\Bou\big(L^{2}_{\sigma_{\parallel}}(\R^{2}), H^{2}_{-\sigma_{\parallel}}(\R^{2})\big)\,,\quad\mbox{$\sigma_{\parallel}> 1/2$}\,.
$$ 
Thus 
\begin{equation}\label{B1-1}
r_{I}R^{(1)}_{0}(\lambda^{\pm})r^{*}_{I}\in \Bou\big(L_{\overline I}^{2}(\R), H^{3/2-\epsilon}(I)\big)
\end{equation}
and
\begin{equation}\label{B1-2}
r_{\Omega}R^{(\parallel)}_{0}(\lambda^{\pm})r^{*}_{\Omega}\in \Bou\big(L_{\overline \Omega}^{2}(\R^{2}), H^{2}(\Omega)\big)\,.
\end{equation}
Hence, by duality, 
\begin{equation}\label{B2-1}
r_{I}R^{(1)}_{0}(\lambda^{\pm})r^{*}_{I}\in \Bou\big(H^{-3/2+\epsilon}_{\overline I}(\R),L^{2}(I)\big)
\end{equation}
and
\begin{equation}\label{B2-2}
r_{\Omega}R^{(\parallel)}_{0}(\lambda^{\pm})r^{*}_{\Omega}\in \Bou\big(H^{-2}_{\overline \Omega}(\R^{2}),L^{2}(\Omega)\big)\,.
\end{equation}
By Eq.s \eqref{B1-1} - \eqref{B2-2} and by interpolation 
({\footnote{One should keep in mind the previously mentioned isomorphisms 
$H^{s}(I)^{\prime}\simeq H^{-s}_{\overline I}(\R)$, $H^s(\Om)^{\prime} \simeq H^{-s}_{\overline \Om}(\R^2)$ (for $s \in \R$), recall the duality interpolation theorem stated in \cite[Cor. 4.5.2]{BeLo} and notice that the spaces $H^{s}(I)$, $H^{s}(\Om)$ enjoy the interpolation property (see, e.g., \cite{Tart}).}}) one gets
$$
r_{I}R^{(1)}_{0}(\lambda^{\pm})r^{*}_{I}\in \Bou(H^{-s_{1}}_{\overline I}(\R),
H^{-s_{1}+3/2-\epsilon}(I))\,,\quad \mbox{for $0\leqs s_{1}\leqs 3/2-\epsilon$}\,,
$$ 
and
$$
r_{\Omega}R^{(\parallel)}_{0}(\lambda^{\pm})r^{*}_{\Omega}\in \Bou(H^{-s_{2}}_{\overline \Omega}(\R^{2}),
H^{-s_{2}+2}(\Omega))\,,\quad \mbox{for $0\leqs s_{2}\leqs 2$}\,.
$$ 
Therefore, since $A_{0}=A_{0}^{(1)}\otimes\uno+\uno\otimes A_{0}^{(\parallel)}$, by \cite{BAD83} one has 
 $$
(r_{I}\otimes r_{\Omega})R_{0}(\lambda^{\pm})(r^{*}_{I}\otimes r^{*}_{\Omega})\in \Bou(H^{-s_{1}}_{\overline I}(\R)\otimes H^{-s_{2}}_{\overline \Omega}(\R^{2}),
H^{-s_{1}+3/2-\epsilon}(I)\otimes H^{-s_{2}+2}(\Omega))\,.
$$\noindent
Since $$(\tau_{F,\Sigma}-\tau_{0,\Sigma})R_{0}(\lambda^{\pm})\tau^{*}_{F/0,\Sigma}=(\tau_{F,\Sigma}-\tau_{0,\Sigma})(r_{I}\otimes r_{\Omega})R_{0}(\lambda^{\pm})(r^{*}_{I}\otimes r^{*}_{\Omega})\tau^{*}_{F/0,\Sigma}\,,
$$
one has, for any $\frac12<s_{1}<1-\epsilon$ and $\frac12<s_{2}<\frac32$,
\begin{align*}
&\|(\tau_{F,\Sigma}-\tau_{0,\Sigma})R_{0}(\lambda^{\pm})\tau^{*}_{F/0,\Sigma}\|_{\Bou(L^{2}(\Sigma))}\\
\leqs &\; N_{s_{1},s_{2},\epsilon}(\lambda^{\pm})\,N_{F/0,\Sigma, s_{1},s_{2}}\,\|\tau_{F,\Sigma}-\tau_{0,\Sigma}\|_{\Bou(H^{-s_{1}+3/2-\epsilon}(I)\otimes H^{-s_{2}+2}(\Omega),L^{2}(\Sigma))}\,.
\end{align*}
Using the block operator matrix notation one has
\[
 \Gt_{F,\Sigma}(\lambda^{+})-\Gt_{0,\Sigma}(\lambda^{+}) 
=  \left(\begin{matrix}\tau_{F,\Sigma}(\tau_{F,\Sigma}R_{0}(\lambda^{-}))^{*}-\tau_{0,\Sigma}(\tau_{0,\Sigma}R_{0}(\lambda^{-}))^{*}&(\tau_{F,\Sigma}-\tau_{0,\Sigma})(\tau_{0,\Sigma}R_{0}(\lambda^{-}))^{*}\\
\tau_{0,\Sigma}((\tau_{F,\Sigma}-\tau_{0,\Sigma})R_{0}(\lambda^{-}))^{*}&0
\end{matrix}\right)
\]
and so (for brevity here we omitted some norm indexes)
\begin{align*}
&\|\Gt_{F,\Sigma}(\lambda^{+})-\Gt_{0,\Sigma}(\lambda^{+})\|^{2}_{\Bou(L^{2}(\Sigma)\oplus L^{2}(\Sigma))}\\
\lec\,& \| (\tau_{F,\Sigma}-\tau_{0\Sigma})(\tau_{F,\Sigma}R_{0}(\lambda^{-}))^{*}\|^{2}+
\| \tau_{0,\Sigma}((\tau_{F,\Sigma}-\tau_{0\Sigma})R_{0}(\lambda^{-}))^{*}\|^{2} \\ 
& +
\|(\tau_{F,\Sigma}-\tau_{0\Sigma})(\tau_{0,\Sigma}R_{0}(\lambda^{+}))^{*}\|^{2}
\\
\lec\,& \| (\tau_{F,\Sigma}-\tau_{0\Sigma})R_{0}(\lambda^{+})\tau^{*}_{F,\Sigma}\|^{2}+
\|(\tau_{F,\Sigma}-\tau_{0\Sigma})R_{0}(\lambda^{-})\tau^{*}_{0,\Sigma}\|^{2} \\ 
 & + \|(\tau_{F,\Sigma}-\tau_{0\Sigma})R_{0}(\lambda^{+})\tau^{*}_{0,\Sigma}\|^{2}\\
\lec &  \left(N^{2}_{s_{1},s_{2},\epsilon}(\lambda^{+})N^{2}_{F,\Sigma, s_{1},s_{2}}+
\left(N^{2}_{s_{1},s_{2},\epsilon}(\lambda^{+})+N^{2}_{s_{1},s_{2},\epsilon}(\lambda^{-})\right) 
N^{2}_{0,\Sigma, s_{1},s_{2}}\right)\times\\
&\times\|\tau_{F,\Sigma}-\tau_{0,\Sigma}\|^{2}_{\Bou(H^{-s_{1}+3/2-\epsilon}(I)\otimes H^{-s_{2}+2}(\Omega),L^{2}(\Sigma))}\,.
\end{align*}
\end{proof}
Summing up, using the previous Lemma \ref{LemCit} with $\epsilon$ replaced by $\epsilon/4$ ($0<\epsilon<1$) and $s_{1}=1/2+\epsilon/4$, $s_{2}=1- \epsilon/4$, one gets
\begin{thm}\label{teo7} For all $  \lambda\in(0,+\infty) \backslash \sigma_p^+(A_F)$ and $0<\epsilon<1$, the following estimate holds:
\begin{align}\label{S_F} 
&\|S_{F}(\lambda)-\uno\|^{2}_{\Bou(L^{2}({\mathbb S}^{2}))}\leqs\,
c_{F,\Sigma}(\lambda)\,\|F\|_{L^1(\R^{2})}+c_{F,\Sigma,\epsilon}(\lambda)\,\|\,|F|^{1-\epsilon}\|_{L^{1}(\R^{2})}\,,
\end{align}
where 
$$
c_{F,\Sigma}(\lambda):=c\,\big(K_{F,\Sigma}(\lambda)+K_{0,\Sigma}(\lambda)\big)^{2}\,,
$$
$$
c_{F,\Sigma,\epsilon}(\lambda):=c\, K^{2}_{F,\Sigma}(\lambda)K^{2}_{0,\Sigma}(\lambda)\,
\left(N^{2}_{\epsilon}(\lambda^{+})N^{2}_{F,\Sigma, \epsilon}+
\left(N^{2}_{\epsilon}(\lambda^{+})+N^{2}_{\epsilon}(\lambda^{-})\right) 
N^{2}_{0,\Sigma, \epsilon}\right),
$$
$$N_{F/0,\Sigma, \epsilon}:=N_{F/0,\Sigma, \frac12+\frac{\epsilon}4,1-\frac{\epsilon}4}\,,\qquad N_{\epsilon}(\lambda^{\pm}):=N_{\frac12+\frac{\epsilon}4,1-\frac{\epsilon}4,\frac{\epsilon}4}(\lambda^{\pm})\,.
$$
\end{thm}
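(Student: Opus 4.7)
The plan is to assemble ingredients that are already in place just before the statement. Starting from the identity
\begin{align*}
S_F(\lambda)-\uno \;=\;& 2\pi i\big(L_F(\lambda)-L_0(\lambda)\big)\Lambda_{F,\Sigma}(\lambda^{+})L_F(\lambda)^{*}\\
& +\,2\pi i\,L_0(\lambda)\Lambda_{F,\Sigma}(\lambda^{+})\big(\Gt_{F,\Sigma}(\lambda^{+})-\Gt_{0,\Sigma}(\lambda^{+})\big)\Lambda_{0,\Sigma}(\lambda^{+})L_F(\lambda)^{*}\\
& +\,2\pi i\,L_0(\lambda)\Lambda_{0,\Sigma}(\lambda^{+})\big(L_F(\lambda)^{*}-L_0(\lambda)^{*}\big),
\end{align*}
which is already displayed just before the statement, I would apply the triangle inequality in $\Bou(L^{2}(\mathbb{S}^{2}))$ to reduce the problem to controlling the three scalar factors $\|L_F(\lambda)-L_0(\lambda)\|$ and $\|\Gt_{F,\Sigma}(\lambda^{+})-\Gt_{0,\Sigma}(\lambda^{+})\|$, with the auxiliary norms of $L_{0/F}(\lambda)$ and $\Lambda_{0/F,\Sigma}(\lambda^{+})$ absorbed into the prefactors $K_{F,\Sigma}(\lambda)$ and $K_{0,\Sigma}(\lambda)$.

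For the first and third terms on the right-hand side I would invoke the Lemma that gives $\|L_F(\lambda)-L_0(\lambda)\|^{2}\leqs \frac{2\sqrt{\lambda}}{\pi^{2}}\int_{\Sigma}d\xp\big(1-\frac{\sin(\sqrt{\lambda}F(\xp))}{\sqrt{\lambda}F(\xp)}\big)$, and combine it with the elementary inequality $0\leqs 1-\frac{\sin x}{x}\leqs \frac{|x|}{\pi}$ used in the Remark. This produces a bound of the form $c_{F,\Sigma}(\lambda)\|F\|_{L^{1}(\R^{2})}$, accounting for the first summand in \eqref{S_F}.

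For the middle term I would apply Lemma \ref{LemCit} with its $\epsilon$ replaced by $\epsilon/4$ and with the specific choices $s_{1}=\tfrac12+\tfrac{\epsilon}{4}$ and $s_{2}=1-\tfrac{\epsilon}{4}$, so that $-s_{1}+\tfrac32-\tfrac{\epsilon}{4}=1-\tfrac{\epsilon}{2}$ and $-s_{2}+2=1+\tfrac{\epsilon}{4}>1$. Those choices satisfy the constraints $\tfrac12<s_{1}<1-\tfrac{\epsilon}{4}$, $\tfrac12<s_{2}<\tfrac32$ of the Lemma. This reduces the estimate of $\|\Gt_{F,\Sigma}(\lambda^{+})-\Gt_{0,\Sigma}(\lambda^{+})\|^{2}$ to $\|\tau_{F,\Sigma}-\tau_{0,\Sigma}\|^{2}$ viewed as an operator from $H^{1-\epsilon/2}(I)\otimes H^{1+\epsilon/4}(\Omega)$ into $L^{2}(\Sigma)$. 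I would then apply the preceding trace-difference Lemma with $r=\tfrac12+\gamma$ where $\gamma=\tfrac12-\tfrac{\epsilon}{2}\in(0,1)$ (so that $r=1-\tfrac{\epsilon}{2}$) and $s=1+\tfrac{\epsilon}{4}>1$, obtaining $\|\tau_{F,\Sigma}-\tau_{0,\Sigma}\|^{2}\lec \||F|^{2\gamma}\|_{L^{1}(\Sigma)}=\||F|^{1-\epsilon}\|_{L^{1}(\R^{2})}$. Multiplying by the outer prefactor $K_{F,\Sigma}(\lambda)K_{0,\Sigma}(\lambda)$ and by the $N$-constants coming from Lemma \ref{LemCit} yields exactly the second summand $c_{F,\Sigma,\epsilon}(\lambda)\||F|^{1-\epsilon}\|_{L^{1}(\R^{2})}$.

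The only real obstacle is the bookkeeping of Sobolev indices: one must choose $s_{1},s_{2}$ so that (a) they fall in the ranges required by Lemma \ref{LemCit}; (b) the resulting operator norm of $\tau_{F,\Sigma}-\tau_{0,\Sigma}$ lives in a space to which the trace-difference Lemma applies (i.e. $r\in(\tfrac12,\tfrac32)$ and $s>1$); and (c) the exponent $2\gamma$ produced matches $1-\epsilon$. The choices above simultaneously satisfy all three constraints, and the theorem follows. Everything else is a direct assembly of earlier estimates.
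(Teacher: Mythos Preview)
Your proposal is correct and follows exactly the paper's own argument: the paper's ``proof'' is just the line ``Summing up, using the previous Lemma \ref{LemCit} with $\epsilon$ replaced by $\epsilon/4$ ($0<\epsilon<1$) and $s_{1}=1/2+\epsilon/4$, $s_{2}=1-\epsilon/4$, one gets \ldots'', and you have unpacked precisely this, including the correct verification of the Sobolev-index constraints and the identification $2\gamma=1-\epsilon$.
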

\begin{rem} Since the constants $c_{F,\Sigma}(\lambda)$ and $c_{F,\Sigma,\epsilon}(\lambda)$ are bounded and away from zero as $F\to 0$, by \eqref{S_F} one gets, for any $0<\gamma<1$,
$$
\|S_{F}(\lambda)-\uno\|^{2}_{\Bou(L^{2}({\mathbb S}^{2}))}={\mathcal O}\!\left(\int_{\R^{2}}d{\xp}|F({\xp})|^{\gamma}\right)\,.
$$ 
\end{rem}

\noindent
\textbf{Acknowledgments.} We thank the anonymous referee for the stimulating remarks.\par
This work was supported by: INdAM, Gruppo Nazio\-nale per la Fisica Matematica; ``Progetto Giovani GNFM 2017 - Dinamica quasi classica per il modello di polarone'' fostered by GNFM-INdAM; INFN, Istituto Nazionale di Fisica Nucleare.

\appendix

\section{LAP for the Laplacian plus a $\delta$-interaction in dimension one\label{app:A}}
In this section we prove Theorem \ref{lemLAP1}. This result implies Limiting Absorption Principle (LAP) for the operator $\Hu_0$,  which is formally written as the Laplacian plus a $\delta$-interaction of strength $\alpha$ in dimension one, and is rigorously   defined in Eq.s \eqref{A01_1} - \eqref{A01_2}. 

\proof[Proof of Theorem  \ref{lemLAP1}] Unless otherwise stated, throughout all the proof we implicitly understand the assumptions $\te \in (0,1/2)$ and $\su > 1/2$\,; moreover, let us arbitrarily fix $\eps_0 > 0$ and a compact subset $\KK \subset (0,+\infty)$. Then, on account of item ii) in Lemma \ref{lemHW}, the thesis is proved as soon as we are able to infer the following uniform bound for all $\eps \in (0,\eps_0)$, $\lam \in \KK$ and for some constant $c > 0$ (depending on $\eps_0$ and $\KK$, but not on $\eps$ and $\lam$):
\begin{equation}
\|I_{-\su} \Ru_0(\lam \!\pm\! i \eps)\,u\|_{H^{1+\te}(\R)}^2 \leqs c\, \|u\|_{L^2_{\su}(\R)}\;. \label{ThR01}
\end{equation}
As an example, in the sequel we proceed to evaluate the expression $\|I_{-\su} \Ru_0$ $( \lam + i \eps)\,u\|_{H^{1+\te}(\R)}^2$; altogether, the forthcoming uniform bounds \eqref{boL2}, \eqref{boH1} and \eqref{boSem} imply the corresponding version of Eq. \eqref{ThR01}. Of course, similar results can be derived also for $\|I_{-\su} \Ru_0(\lam - i \eps)\,u\|_{H^{1+\te}(\R)}^2$, ultimately yielding the thesis.

The starting point of our analysis is the following integral kernel identity, holding true for any given $u \in L^2_{\su}(\R) \subset L^2(\R)$ and for all $z \in \C \backslash [0,+\infty)$ (see \cite[p. 77, Th. 3.1.2]{AlbBook}):
 \begin{equation}
\big(\Ru_0(z)\,u\big)(x) \,= \int_\R dy \l[{i \over 2\sqrt{z}}\;e^{i\sqrt{z}\,|x-y|} - {i\al \over 2\sqrt{z}\,(\al-2i\sqrt{z})}\;e^{i \sqrt{z}\,(|x|+|y|)}\r] u(y) \;. \label{R0exp}
\end{equation}\noindent
Differentiating the above identity, by the dominated convergence theorem we obtain
 \begin{equation}
\big(\Ru_0(z)\,u\big)'(x) = -\,{1 \over 2}\int_\R dy \l[\sgn(x\!-\!y)\;e^{i\sqrt{z}\,|x-y|} - {\al\;\sgn x \over \al-2i\sqrt{z}}\;e^{i \sqrt{z}\,(|x|+|y|)}\r] u(y) \;. \label{R0pexp}
\end{equation}\noindent
Let us also remark that most of our arguments rely on the use of the elementary identity $\sqrt{\lam\!+\!i \eps} = (1/\sqrt{2})\big[ \sqrt{\!\sqrt{\lam^2\!+\!\eps^2}\!+\!\lam}\, +\, i \, \sqrt{\!\sqrt{\lam^2\!+\!\eps^2} \!-\!\lam}\,\big]$, along with some related simple estimates.

Firstly, performing few elementary manipulations, from Eq. \eqref{R0exp} we obtain
\[ \|I_{-\su} \Ru_0 (\lam\!+\!i\eps)\,u\|^2_{L^2(\R)}  = \int_{\R} dx\;w_{-\su}(x) \l|\int_\R dy \!\l[{\,e^{i\sqrt{\lam + i\eps}\,|x-y|} \over \al \!-\! 2 i \sqrt{\lam \!+\! i\eps}} + {i\al \big(e^{i \sqrt{\lam + i\eps}\,|x-y|} - e^{i \sqrt{\lam + i\eps}\,(|x|+|y|)}\big)\! \over 2\sqrt{\lam\!+\!i\eps}\,(\al-2i\sqrt{\lam\!+\!i\eps})}\r]\! u(y)\r|^2 . 
\]\noindent
By means of the Cauchy-Schwarz inequality and of the bounds $\big|e^{i\sqrt{\lam + i\eps}\,|x-y|}$ $/(\al - 2 i \sqrt{\lam \!+\! i\eps})\big| \leqs 1/\sqrt{\al^2 \!+\! 4\lam}\,$, $\big|i\al \big(e^{i \sqrt{\lam + i\eps}\,|x-y|} -  e^{i \sqrt{\lam + i\eps}\,(|x|+|y|)}\big) /(2\sqrt{\lam\!+\!i\eps}$ $(\al\!-\!2i\sqrt{\lam\!+\!i\eps}))\big| \leqs \al/\sqrt{\lam\,(\al^2\! +\! 4\lam)}$, the above identity implies the following relations
({\footnote{The first estimate in Eq. \eqref{R0pexp} gives an indication of the fact that the arguments employed here cannot be extended to include the boundary case $\lam = 0$; this is the underlying reason behind our assumption $\lam \in (0,+\infty)$.}}):
\begin{equation}\|I_{-\su} \Ru_0 (\lam \!+\! i\eps)\,u\|^2_{L^2(\R)}  \leqs  {2\,(\al^2\! + \lam) \over \lam\,(\al^2\! + 4\lam)} 
\l(\int_{\R}dx\;w_{-\su}(x)\r)^{\!\!2} \l(\int_\R dy\;w_{\su}(y)\,|u(y)|^2\r) \lec\, \|u\|_{L_{\su}^2(\R)}^2 \;.\label{boL2}
\end{equation}

Next, let us consider the expression \eqref{R0pexp} for $(\Ru_0(z)\,u)'(x)$; bearing in mind the previous estimate \eqref{boL2} and using again the Cauchy-Schwarz inequality, along with the elementary relations $|e^{i \sqrt{\lam + i\eps}|x-y|}| \!\leqs\! 1$, $$\big|\al\,e^{i \sqrt{\lam + i\eps}\,(|x|+|y|)}/(\al - 2i\sqrt{\lam + i\eps})\big| \leqs \al/\sqrt{\al^2\!+ 4 \lam}$$ and $|(w_{-\su}^{1/2})'(x)| \leqs (\su/2)\,w_{-\su}^{1/2}(x)$, by similar computations we get
\begin{equation}
\begin{aligned}
 & \big\|\big(I_{-\su} \Ru_0\!(\lam \!+\! i\eps) u\big)'\big\|^2_{L^2(\R)}  \\ 
\leqs &\,
{\al^2\! + 2 \lam \over \al^2\! + 4\lam}\! \l[1\! + {\su^2(\al^2\!+\lam) \over 2\lam(\al^2\!+2\lam)}\r]\!
\l(\int_{\R}dx\, w_{-\su}(x)\r)^{\!\!2}\! \l(\int_\R dy\,  w_{\su}(y)\,|u(y)|^2\r) \lec
\|u\|_{L_{\su}^2(\R)}^2 . \label{boH1}
\end{aligned}
\end{equation}\noindent

Let us now pass to the evaluation of the following seminorm, corresponding to the inner product defined as in Eq. \eqref{innerfra}:
\begin{equation}
\big|(I_{-\su} \Ru_0 (\lam \!+\! i\eps)\,u)'\big|^2_{\te}\, := \SIm\! + \SIp \;, \label{sem}
\end{equation}
\[
\SI^{(\lessgtr)} := \int_{\{|x-y| \,\lessgtr\, 1\}}\!\!\! dx\,dy\;{\big|(I_{-\su} \Ru_0 (\lam \!+\! i\eps)\,u)'(x) - (I_{-\su} \Ru_0(\lam \!+\! i\eps)\,u)'(y)\big|^2 \over |x-y|^{1 + 2\te}} \; . 
\]
On the one hand, taking into account Eq. \eqref{boH1} we readily obtain 
\begin{equation}\begin{aligned}
\SIp & \lec \int_{\R} dx\;\big|(I_{-\su} \Ru_0\!(\lam + i\eps)\,u)'(x)\big|^2\! \int_{\{|x-y| \,>\, 1\}}\!\!\!dy\;{1 \over |x-y|^{1 + 2\te}}\\
 & \lec  \|(I_{-\su} \Ru_0 (\lam \!+\! i\eps)\,u)'\|^2_{L^2(\R)} \lec \|u\|_{L_{\su}^2(\R)}^2 \;. \label{sem1}
\end{aligned}
\end{equation}
On the other hand, the derivation of a uniform bound for $\SIm$ is  less straightforward. To attain such a bound let us first point out the following relation, which can be readily inferred by addition and subtraction of identical terms and by elementary symmetry arguments:
\begin{equation}
\SIm \,\lec\, \SJ^{(1)} + \SJ^{(2)} + \SJ^{(3)}\;; \label{sem2a} \vspace{0.1cm}
\end{equation}
\[\begin{aligned}
 \SJ^{(1)} \,:= & \int_{\{|x-y| \,<\, 1\}}\!\!\! dx\,dy \,
w_{\su}(y)\, {\big|w_{-\su}^{1/2}(x)\! - \!w_{-\su}^{1/2}(y)\big|^2\! \over |x-y|^{1 + 2\te}}\; 
\big|(I_{-\su} \Ru_0\!(\lam\!+\!i\eps)\,u)'(x)\big|^2\,   \\
+&  \int_{\{|x-y| \,<\, 1\}}\!\!\! dx\,dy \Bigg(\!w_{\su}(x)\,{\big|(w_{-\su}^{1/2})'(x)\! - \!(w_{-\su}^{1/2})'(y)\big|^2\! \over |x-y|^{1 + 2\te}} \\
+ & \big(w_{\su}(x)\! + \!w_{\su}(y)\big)\, w_{\su}(x)\,|(w_{-\su}^{1/2})'(x)|^2\, {\big|w_{-\su}^{1/2}(x)\! - \!w_{-\su}^{1/2}(y)\big|^2\! \over |x-y|^{1 + 2\te}}\,\Bigg)
\big|(I_{-\su} \Ru_0\!(\lam\!+\!i\eps)\,u)(x)\big|^2 
\end{aligned}\]
\noindent
\[\begin{aligned}
\SJ^{(2)} \,:= & \int_{\{|x-y| \,<\, 1\}}\!\!\! dx\,dy\; w_{\su}(x)\; |(w_{-\su}^{1/2})'(y)|^2 {\big|(I_{-\su} \Ru_0\!(\lam\!+\!i\eps)\,u)(x) - (I_{-\su} \Ru_0\!(\lam\!+\!i\eps)\,u)(y)\big|^2 \over |x-y|^{1 + 2\te}}\;,  \vspace{0.1cm}
\end{aligned}\]\noindent
\[ 
\SJ^{(3)} \,:= \int_{\{|x-y| \,<\, 1\}}\!\!\! dx\,dy\; w_{-\su}(x)\; {\big|(\Ru_0\!(\lam\!+\!i\eps)\,u)'(x) - (\Ru_0\!(\lam\!+\!i\eps)\,u)'(y)\big|^2 \over |x-y|^{1 + 2\te}}\;. 
\]
Concerning the term $\SJ^{(1)}$, it can be checked by direct inspection that the integrals over $y \in (x-1,x+1)$ of the expressions involving the weights $w_{\pm \su}$ (and their derivatives) are uniformly bounded for $x \in \R$.

Let us give a few more details about this statement. First of all, it should be noticed that $w_{\su}(x)\,|(w_{-\su}^{1/2})'(x)|^2$ $= \su^2 x^2/(1\!+\!x^2)^2 \leqs \su^2/4$ for all $x \!\in\! \R$. Furthermore, starting from the elementary identity $w_{\su}(y) - w_{\su}(x) = 2 \su \int_{0}^{y-x}\! dt\, (x+t)\,(1+(x+t)^2)^{\su-1}$, for any given $\su \!\in\! \R$ and for all $x,y \!\in \!\R$ with $|x-y| \!<\! 1$ one gets 
\[
\begin{aligned}
 |w_{\su}(y) - w_{\su}(x)|  \leqs &  2\su \left(\!\sup_{x \in \R,\,y \in (x-1,x+1)}\left|\!\int_0^{y-x}\! dt\, {(x\!+\!t)\,(1\!+\!(x\!+\!t)^2)^{\su-1}\!/(1\!+\!x^2)^{\su}}\right|\right) w_{\su}(x)  \\ 
 \lec &  w_{\su}(x)\,|x-y|;
\end{aligned}
\]
\linebreak for $|x-y| < 1$, the latter relation yields in particular $w_{\su}(y) \leqs w_{\su}(x) + |w_{\su}(y) - w_{\su}(x)| \lec w_{\su}(x)$ and $|w_{-\su}^{1/2}(x) - w_{-\su}^{1/2}(y)| = |w_{-\su/2}(x) - w_{-\su/2}(y)|$ $ \lec w_{-\su}^{1/2}(x)\;|x-y|$\,. In a similar way, it can even be shown that $|(w_{-\su}^{1/2})'(x) - (w_{-\su}^{1/2})'(y)| \lec w_{-\su}^{1/2}(x)\;|x-y|$. Taking into account the previously mentioned facts, it is easy to infer that
\[ 
\sup_{x \in \R} \int_{\{|x-y| \,<\, 1\}}\!\!\! dy\;w_{\su}(y)\, {\big|w_{-\su}^{1/2}(x)\! - \!w_{-\su}^{1/2}(y)\big|^2\! \over |x-y|^{1 + 2\te}} < + \infty\;, 
\]
\[
\sup_{x \in \R} \int_{\{|x-y| \,<\, 1\}}\!\!\! dy\; w_{\su}(x)\,{\big|(w_{-\su}^{1/2})'(x)\! - \!(w_{-\su}^{1/2})'(y)\big|^2\! \over |x-y|^{1 + 2\te}} < +\infty\;,
\]
and
\[
\sup_{x \in \R} \int_{\{|x-y| \,<\, 1\}}\!\!\! dy \big(w_{\su}(x)\! + \!w_{\su}(y)\big)\, w_{\su}(x)\,|(w_{-\su}^{1/2})'(x)|^2\, {\big|w_{-\su}^{1/2}(x)\! - \!w_{-\su}^{1/2}(y)\big|^2\! \over |x-y|^{1 + 2\te}} < +\infty. 
\]\noindent

Then, recalling the bounds derived previously in Eq.s \eqref{boL2} and  \eqref{boH1}, we readily obtain $\SJ^{(1)} \lec \|u\|_{L_{\su}^2(\R)}^2$.\\
As for the expression $\SJ^{(2)}$, computations similar to those described above grant that $w_{\su}(x)\,|(w_{-\su}^{1/2})'(y)|$ is uniformly bounded for all $x,y \in \R$ with $|x-y|< 1$; in addition, by the Cauchy-Schwarz inequality we have
\[\begin{aligned}
 \big|(I_{-\su} \Ru_0(\lam\!+\!i\eps)\,u)(x) - (I_{-\su} \Ru_0(\lam\!+\!i\eps)\,u)(y)\big|  = &  \l|\int_{y}^x\!dt\; (I_{-\su} \Ru_0\!(\lam\!+\!i\eps)\,u)'(t)\r|  \\ 
\leqs & \big\|\big(I_{-\su} \Ru_0(\lam \!+\! i\eps)\,u\big)'\big\|^2_{L^2(\R)} \;|x-y| \;. 
\end{aligned}\]
The above arguments, along with the estimate \eqref{boH1}, allows us to infer that $\SJ^{(2)}\lec \|u\|_{L_{\su}^2(\R)}^2$.\\
In order to derive a uniform bound for $\SJ^{(3)}$, let us indicate with $\Te$ the Heaviside step function; this is such that $\Te(t) = 1$ for $t \geqs 0$, $\Te(t) = 0$ for $t < 0$ and $\Te(t) + \Te(-t) = 1$ almost everywhere on $\R$. Then, using the explicit expression \eqref{R0pexp} for $(\Ru_0 (\lam\!+\!i\eps)\,u)'$, by a carefully devised procedure of addition and subtraction of identical terms and by triangular inequality (plus a number of elementary manipulations) we obtain
\begin{equation}
\SJ^{(3)} \lec J_{1} + J_{2} + J_{3}\;; \label{Im}
\end{equation}
{\small
\[
J_1 := \int_{\{|x-y| < 1\}}\!\!\! dx\,dy\;{w_{-\su}(x)\;\Te(x\!-\!y) \over |x-y|^{1+ 2 \te}} \bigg[\,\l|\int_{y}^{x}\! dt\;e^{i \sqrt{\lam + i\eps}\,(x-t)} u(t)\r|^2\!   +\, \l|\big(e^{i \sqrt{\lam + i\eps}\,(x-y)}-1 \big)\! \int_{-\infty}^{y}\! dt\;e^{i \sqrt{\lam + i\eps}\,(y-t)} u(t)\r|^2\, \bigg] , 
\]
\[
J_2 := \int_{\{|x-y| < 1\}}\!\!\! dx\,dy\;{w_{-\su}(x)\;\Te(x\!-\!y) \over |x-y|^{1+ 2 \te}} \bigg[\,\l|\int_{y}^{x}\! dt \;e^{i \sqrt{\lam + i\eps}\,(t-y)} u(t)\r|^2\!  + \l|\big(e^{i \sqrt{\lam + i\eps}\,(x-y)}-1 \big)\! \int_x^{+\infty}\!\! dt\;e^{i \sqrt{\lam + i\eps}\,(t-x)} u(t)\r|^2\, \bigg] , 
\]
\[
J_3 := \int_{\{|x-y| < 1\}}\!\!\! dx\,dy\;{w_{-\su}(x) \over |x-y|^{1+ 2 \te}}\, \l|{\al \over \al - 2i\sqrt{\lam \!+\! i \eps}}\r|^2\, \l|\,\int_\R dt\;e^{i \sqrt{\lam+ i\eps}\,|t|}\,u(t)\,\r|^2 \l|(\sgn x)\,e^{i \sqrt{\lam+i\eps}\,|x|} - (\sgn y)\,e^{i \sqrt{\lam+i\eps}\,|y|} \r|^2 .
\]
}\noindent
The terms $J_1,J_2$ can be treated similarly; as an example, let us consider $J_1$. Since $|e^{i \sqrt{\lam+ i\eps}\,(x-t)}| \leqs 1$ for $y\!<\!t\!<\!x$, by the Cauchy-Schwarz inequality we have $\big|\!\int_{y}^{x} dt\,e^{i \sqrt{\lam+ i\eps}\,(x-t)} u(t)\big|^2 \!\leqs |x - y|\,\|u\|_{L^2_{\su}(\R)}^2$. Again by the Cauchy-Schwarz inequality, we get 
\[ \begin{aligned}
\bigg|\!\int_{-\infty}^{y}\! dt\,e^{i \sqrt{\lam + i\eps}\,(y-t)} u(t)\bigg|^2\! & \leqs \int_{-\infty}^{y}\! dt\,w_{-\su}(t)\, e^{- \sqrt{2}\,(y-t) \sqrt{\!\sqrt{\lam^2+\eps^2}-\lam}}  \|u\|_{L_{\su}^2(\R)}^2 \\
& \lec \|u\|_{L_{\su}^2(\R)}^2\;;
\end{aligned}\]
 moreover, we have $|e^{i \sqrt{\lam + i\eps}\,(x-y)}\!-\!1|\! \lec |x - y|\,$. The above remarks allow us to infer that
\begin{equation}
J_1 \,\lec\, \l(\int_{\R} dx\; w_{-\su}(x)\r)\l(\int_{\{|x-y| < 1\}}\!\!\! dy\;{1 \over |x\!-\!y|^{2 \te}}\r) \|u\|_{L^2_{\su}(\R)}^2 \,\lec\, \|u\|_{L^2_{\su}(\R)}^2 \,. \label{J1}
\end{equation}
Finally, let us pass to the term $J_3$. Estimates similar to those employed before yield $|\al/(\al \!-\! 2i\sqrt{\lam \!+\! i \eps}\,)| \!\leqs\! \al/\sqrt{\al^2\! + 4\lam}$ and $\big|\!\int_\R dt\,e^{i \sqrt{\lam+ i\eps}\,|t|}u(t)\big|^2\! \leqs\! \big(\!\int_{\R} dt\,w_{-\su}(t)\,e^{- \sqrt{2}\,|t| \sqrt{\!\sqrt{\lam^2+\eps^2}-\lam}}\,\big)\, \|u\|_{L_{\su}^2(\R)}^2\!\lec \|u\|_{L_{\su}^2(\R)}^2\,$. Moreover, indicating again with $\Te$ the Heaviside step function, by simple symmetry considerations (and related changes of integration variables) we have
\[ \begin{aligned}
&\int_{\{|x-y| < 1\}}\!\!\! dx\,dy\;w_{-\su}(x)\,{\big|(\sgn x)\,e^{i \sqrt{\lam+i\eps}\,|x|} - (\sgn y)\,e^{i \sqrt{\lam+i\eps}\,|y|} \big|^2 \over |x-y|^{1+ 2 \te}}\\
& = 2 \int_{\{|x-y| < 1\}}\!\!\!\!\!\!\!\!\! dx\,dy\;w_{-\su}(x)\,\Te(x)\,\Te(y)\,{\big|e^{i\,x\sqrt{\lam+i\eps}} - e^{i\,y\,\sqrt{\lam+i\eps}} \big|^2 \over |x-y|^{1+ 2 \te}}\,\\ 
& \quad +\, 2 \int_{\{x+y < 1\}}\!\!\!\!\!\!\!\!\! dx\,dy\;w_{-\su}(x)\,\Te(x)\,\Te(y)\,{\big|e^{i\,x\,\sqrt{\lam+i\eps}} + e^{ i\,y\,\sqrt{\lam+i\eps}} \big|^2 \over |x+y|^{1+ 2 \te}}\;.
\end{aligned}\]
The first addendum in the second line of the above equation can be easily proved to be finite noting that $|e^{i\,x\sqrt{\lam+i\eps}} - e^{i\,y\,\sqrt{\lam+i\eps}}| \lec |x-y|\,$; on the other hand, the finiteness of the second addendum can be inferred noting that $|e^{i\,x\sqrt{\lam+i\eps}} + e^{i\,y\,\sqrt{\lam+i\eps}}| \leqs 2$ (for all $x,y > 0$) and introducing a system of polar coordinates in the quadrant $x,y > 0$. 

More precisely, let us put $x = \rho \cos\te$, $y = \rho \sin \te$ for $\rho \in (0,+\infty)$, $\te \in (0,\pi/2)$. Then, noting that $\cos\te + \sin\te \geqs 1$ for $\te \in (0,\pi/2)$, we obtain the following for all $\su > 1/2$ and $\te \in (0,1/2)$:
\[\begin{aligned}
\int_{\{x+y < 1\}}\!\!\!\!\! dx\,dy\;{w_{-\su}(x)\,\Te(x)\,\Te(y) \over |x+y|^{1+ 2 \eta}}   = & \int_{0}^{\pi/2}\!\!\!\! d\te\! \int_{\{\rho(\cos\te + \sin\te) < 1\}}\!\!\!\!\!\! d\rho\;{1 \over \big(1\!+\!\rho^2\! + \!\rho^4 \sin^2\!\te \cos^2\!\te\big)^{\su} \rho^{2\eta}\, |\cos\te + \sin\te|^{1+ 2 \eta}}\, \\ 
 \leqs & {\pi \over 2}\! \int_{0}^{1}\!\! d\rho\;{1 \over \big(1\!+\!\rho^2\big)^{\!\su} \rho^{2\eta}} < + \infty\;.
\end{aligned}
\]
The above arguments show that $J_3 \lec \|u\|_{L^2_{\su}(\R)}^2$, which along with Eq.s \eqref{Im} and \eqref{J1} yields $\SJ^{(3)}\lec \|u\|_{L_{\su}^2(\R)}^2$.

In view of the previously described results, Eq.s \eqref{sem}, \eqref{sem1} and \eqref{sem2a} allow us to infer that
\begin{equation}
\big|\big(I_{-\su} \Ru_0 (\lam \!+\! i\eps)\,u\big)'\big|^2_{\te} \lec \|u\|_{L^2_{\su}(\R)}^2\,. \label{boSem}
\end{equation}
Summing up, Eq.s \eqref{boL2}, \eqref{boH1} and \eqref{boSem} (together with the definition of the norm on $H^{1+\eta}(\R)$ descending from Eq. \eqref{innerfra}) imply the claim \eqref{ThR01} stated at the beginning of the present proof; as already mentioned, this suffices to prove the thesis.
\endproof

\end{document}